\documentclass[12pt]{article}

\usepackage[margin=1.00in]{geometry}
\usepackage[T1]{fontenc}
\usepackage{newtxtext}
\usepackage{microtype}
\usepackage{amsmath,amssymb,amsthm,mathtools,bm}
\usepackage{booktabs}
\usepackage{enumitem}
\usepackage{setspace}
\usepackage{titlesec}
\usepackage{xcolor}
\usepackage[round,authoryear]{natbib}
\usepackage{graphicx}
\usepackage{caption}
\usepackage{tikz}
\usetikzlibrary{patterns}
\usepackage[hidelinks]{hyperref}

\AtBeginDocument{\setlength{\abovedisplayskip}{5pt plus 2pt minus 2pt}%
\setlength{\belowdisplayskip}{5pt plus 2pt minus 2pt}%
\setlength{\abovedisplayshortskip}{2pt plus 1pt}%
\setlength{\belowdisplayshortskip}{2pt plus 1pt}}
\titleformat{\section}{\large\centering\scshape}{\thesection.}{0.6em}{}
\titleformat{\subsection}{\normalsize\itshape}{\thesubsection.}{0.6em}{}
\titlespacing*{\section}{0pt}{1.2em}{0.8em}
\titlespacing*{\subsection}{0pt}{0.9em}{0.4em}
\renewenvironment{abstract}{\begin{center}\begin{minipage}{0.86\textwidth}\small\setstretch{1.25}}{\end{minipage}\end{center}}

\newtheoremstyle{ecmathm}{7pt}{7pt}{\itshape}{}{\scshape}{:}{.5em}{\thmname{#1}\thmnumber{ #2}\thmnote{ (#3)}}
\theoremstyle{ecmathm}
\newtheorem{theorem}{Theorem}
\newtheorem{proposition}{Proposition}
\newtheorem{lemma}{Lemma}
\newtheorem{corollary}{Corollary}
\newtheoremstyle{conditionstyle}{7pt}{7pt}{\normalfont}{}{\scshape}{:}{.5em}{\thmname{#1}\thmnumber{ #2}\thmnote{ (#3)}}
\theoremstyle{conditionstyle}
\newtheorem{assumption}{Assumption}
\newtheorem{condition}{Condition}
\newtheorem{remark}{Remark}
\newtheorem{example}{Example}

\newcommand{\E}{\mathbb{E}}
\newcommand{\Prob}{\mathbb{P}}

\newcommand{\ind}{\mathbf{1}}
\newcommand{\high}{H}
\newcommand{\low}{L}

\title{\textbf{The Privacy Externality of Disclosing Correlated Data}}
\author{Rui Sun\thanks{Haas School of Business, University of California, Berkeley; ruisun233@berkeley.edu. I thank Sunil Dutta, Omri Even-Tov, Shawn Kim, Yaniv Konchitchki, Panos Patatoukas, Xiao-Jun Zhang, and seminar participants at Berkeley Haas for helpful comments. All errors are my own.}}
\date{June 2026}

\begin{document}
\maketitle

\begin{abstract}
\noindent A firm that discloses data about one customer moves a downstream seller's belief about every correlated customer, pricing third parties it never transacts with. This privacy externality equals the change in downstream deadweight loss, is signed by which side of the pricing threshold a customer is on, and falls hardest on those just carried across. Disclosure is privately optimal on an open set of imperfect correlations; incentive compatibility prices it through a distorted allocation and rations the discount at the top under a continuum of types. Selling tips disclosure past a liquidity threshold; consent dominates both data minimization and laissez-faire.

\medskip
\noindent\textsc{Keywords}: information design, Bayesian persuasion, data markets, privacy externality, mechanism design. \\
\noindent\textsc{JEL}: D82, D83, L86, K20.
\end{abstract}

\section{Introduction}\label{sec:intro}

Firms learn about their customers by trading with them. A retailer learns what a shopper will pay an insurer; an architect who renovates a house learns what its owner will pay an interior designer; a bank learns what a borrower will pay a landlord. This information is valuable to the second firm, and the first can profit by sharing it---selling it outright, or disclosing it to win the customer a discount she will pay for. Sharing is not free. The customer anticipates the use of her record and shifts her purchases toward cheaper goods, and, because customers' data are correlated, the disclosure of one record moves what the market believes---and, for every similar customer it carries across a pricing threshold, what she pays.

A bilateral view of this trade sees only the first cost. The second is an externality: the firm does not bear the price change it imposes on a customer's correlated peers, and so does not internalize it. Disclosure is also more than a transfer of information. It is a screening instrument the firm wields alongside the price of its own good, with a benefit and a cost it trades off. Neither force appears in a model with a single, isolated customer.

In this paper, we study a firm's incentive to disclose data about its customers. When is disclosure privately optimal? And what does it do to the customers whose data are correlated with the disclosed record? To address these questions, we put forward a model in which a buyer trades sequentially with two sellers, each making a take-it-or-leave-it offer. The model consists of three elements: the buyer's valuations for the two goods, imperfectly correlated and resolved over time, so that he learns his second valuation only at the second trade; the first seller's commitment to a disclosure rule, a Bayesian persuasion of the second seller about the buyer's report; and a population of buyers whose types share a common component.

Our first result characterizes exactly when disclosure is privately optimal. Two forces close the region. Under perfect positive correlation the benefit vanishes: a buyer with a low taste today has surely a low taste tomorrow and assigns no value to a downstream discount. When instead the buyer's current taste is weak but the good itself is cheap to supply, a second incentive constraint binds: the discount makes the high-priced contract attractive to the low type, and the seller must distort the low type's own supply downward to deter him. The characterization prices this deterrence and shows it never closes the region on its own: disclosure remains optimal on the same open set of imperfectly correlated environments, but the mechanism that attains it changes, carrying a supply distortion alongside the information rent. The optimality of privacy obtained by the bilateral literature is the perfectly correlated boundary of this region, and the engine of the reversal is the sequential resolution of the buyer's tastes---with both valuations known at the first trade, privacy prevails under the same correlation structures (Section~\ref{sec:robust}).

The characterization is exact, so it contains its own converse: outside the region no informative signal raises revenue, perfect positive correlation included. This recovers the privacy result of \citet{CalzolariPavan2006} as a corollary. Under negative correlation the geometry reverses and a new margin appears: when the buyer's current taste is weak, the seller chooses between serving him---paying an information rent the disclosure itself reduces, by exactly the correlation gap---and excluding him to disclose rent-free; privacy is then strictly optimal on a positive-measure set of environments that the first best would assign to disclosure.

With a continuum of types the same pricing of incentive compatibility acquires a sharper form, and solving for it is the technical engine behind the externality result. Under a linear-exposure structure, incentive compatibility is equivalent to the monotonicity of a single index that bundles the trade allocation with the discount odds: the buyer cannot be given a discount schedule that falls in his report while his supply stands still, or he buys into the pool. The seller's problem reduces to a one-dimensional screening problem in that index, and its solution has a structure the binary model cannot display: the discount pool absorbs the \emph{bottom} of the type space, whose low correlation dilutes the downstream seller's posterior, while the types most eager for the discount are \emph{rationed}---they receive it with an interior, constant probability, paid for by full supply---because giving it to them outright would break the index's monotonicity. Disclosure is rationed at the top, not withheld, and deterministic reveal-or-pool rules are strictly suboptimal.

Our central result is the force a bilateral model cannot express. When customers' data are correlated, the disclosure that benefits the firm also moves the downstream seller's belief about every correlated customer---and, where the belief crosses the pricing threshold, her price and her surplus---an effect the firm does not internalize. The externality on each customer equals the disclosure-induced reduction in her expected downstream deadweight loss, and its incidence is asymmetric: customers the market already serves on favorable terms are harmed---most harmed when their exposure to the disclosed record just suffices to carry their price across the threshold, and progressively less as exposure deepens---while customers the market would otherwise distort gain, the more the more exposed they are. The privately optimal rule is socially optimal only when the exposed harmed mass is small; past an explicit threshold the planner prefers a truncated disclosure that preserves the firm's discount at zero population harm. Privacy itself is never the planner's choice.

The remaining results apply this account to the market for data and its regulation. Whether the firm may sell its data, rather than only act on it, is irrelevant for what is disclosed until the downstream seller's stake in pricing the population crosses an explicit liquidity threshold; past it, the optimum jumps to full revelation---the data market does not refine disclosure at the margin, it tips it. The corresponding remedy is not a choice between privacy and disclosure. A consent right held by each customer over the use of the record in pricing her implements the planner's buyer-by-buyer optimum: it harvests the externality's gains and truncates its harms, and so dominates both laissez-faire and a ban. Where only the signal's informativeness can be regulated, caps that dilute the signal are bang-bang, a cap on the favorable tail implements the planner's truncation, and in the continuum the optimal cap stops at the largest informativeness that keeps the exposed population inside its pricing region.

The externality is a phenomenon of downstream market power: under perfect competition prices do not respond to beliefs and it vanishes; under imperfect competition the same law governs its sign, while competition relocates the threshold and shrinks every customer's stake at the rate of the downstream margin.

The remainder of the paper proceeds as follows. After a leading example (Section~\ref{sec:example}) and the model (Section~\ref{sec:model}), the core analysis runs in three steps: optimal disclosure with a single buyer of binary types (Section~\ref{sec:single}), then of a continuum of types (Section~\ref{sec:engine}), and then the privacy externality on a correlated population (Section~\ref{sec:pop}). The remaining sections apply and extend this account: the market for data, consent, and data minimization (Section~\ref{sec:market}), and the robustness of the externality to downstream competition and to the timing of the buyer's information (Section~\ref{sec:robust}); Section~\ref{sec:conclusion} concludes. Proofs of the three theorems and of the single-buyer analysis are in the Appendix, the remaining proofs in the Supplemental Material, and every numerical value in the paper regenerates from a replication script.

\subsection{Related literature}\label{ss:relatedlit}

The paper is most closely connected to the economics of data externalities. \citet{AcemogluMMO2022} show that correlation across individuals leads platforms to acquire too much data, because each individual does not internalize the informativeness of her data about others; \citet{ChoiJeonKim2019} and \citet{BergemannBonattiGan2022} develop related externalities in data collection and social data; these correlation-based acquisition externalities are uniformly signed. \citet{GalpertiLevkunPerego2024} value individual records by their externality on other records in an intermediary's pooling problem---a two-sided accounting of the intermediary's own value that arises even from statistically independent records. Ours operates on the \emph{disclosure} of data already held, through downstream pricing of customers who never transact with the disclosing firm: it is a welfare incidence on correlated third parties, two-sided, signed by the side of the pricing threshold on which the affected customer stands, summarized by one sufficient statistic---the reduction in downstream deadweight loss---and switched on by the data market itself (Proposition~\ref{prop:sell}). The sale of information as a product is studied by \citet{BergemannBonattiSmolin2018} and \citet{BergemannBonatti2024}; here the commodity the sale prices is inference about third parties. The market for customer information descends from \citet{Taylor2004}, whose seller trades raw purchase histories; here the firm commits to a persuasion rule and the histories of others are what the sale prices. \citet{BonattiCisternas2020} study scores that aggregate histories for downstream pricing; \citet{DovalSkreta2025} study a seller curbing the ratchet by coarsening its product line; \citet{IchihashiAER2020} studies consumer-side disclosure to a multiproduct seller. On the policy side, consent and data minimization are studied by \citet{ArgenzianoBonatti} in a data-linkage model where the consumer's own record follows her across firms; our consent result concerns third parties---each customer vetoes the use of the source's record in pricing her---and implements the planner's buyer-by-buyer optimum exactly because the wedge and the customer's private interest are aligned in sign. The welfare economics of privacy is surveyed by \citet{AcquistiTaylorWagman2016}; behavior-based pricing, the reduced-form ancestor of our downstream stage, by \citet{FudenbergVillasBoas2006}. \citet{StrackYang2024} characterize signals that preserve privacy as a constraint; here privacy is an equilibrium outcome, and its failure prices an externality. \citet{HidirVellodi2021} and \citet{AliLewisVasserman2023} study consumer-initiated disclosure to a price-discriminating seller; the disclosing party here is the firm, and the harmed parties are not in the room.

Disclosure here is a \emph{costly screening instrument}, which places the paper next to the recent theory of multidimensional screening. \citet{YangCMS} shows that when an agent's preferences over a productive and a costly instrument are positively correlated, the costly instrument is not used and one-dimensional screening is optimal; our privacy result is the information-design counterpart, with the costly instrument generated endogenously by a downstream market rather than taken as an exogenous action, and with a cross-agent externality single-agent screening has no room for. Unlike the costly signals of \citet{Spence1973}, which the agent bears, here the principal exercises the instrument and, through correlation, its cost falls on third parties. The multiproduct-pricing reading of costly screening \citep{HaghpanahHartline2021,YangBundling} and the broader theory of multidimensional screening \citep{RochetStole2003} are the closest mechanism-design antecedents.

The paper builds on the sequential-contracting model of \citet{CalzolariPavan2006}, who study an agent contracting with two principals, the first committing to a disclosure rule. They restrict attention to perfect positive and perfect negative correlation, so the buyer knows both valuations at the first trade, and under perfect positive correlation they obtain that privacy is optimal. Theorem~\ref{thm:binary} locates that conclusion exactly: perfect positive correlation is the case excluded from our disclosure region, where the low type's benefit from a downstream discount vanishes while the high type's incentive cost is maximal. The deeper engine of the reversal is the sequential resolution of the buyer's tastes, in the sell-the-option tradition of \citet{EsoSzentes2007} and \citet{CourtyLi2000}: Proposition~\ref{prop:static} shows that when the buyer knows both valuations at the first trade, privacy is optimal under the same correlation structures, so what overturns the classical conclusion is precisely the buyer's uncertainty about his future taste interacting with imperfect correlation.

The continuous-type analysis connects to the price-theoretic approach to persuasion. The Lagrangian we solve is in the family of \citet{DworczakMartini2019} and \citet{Kolotilin2018}; what is new is the screening side: the sender's posteriors are constrained by the incentive compatibility of her own mechanism, and the resulting reduction---a single monotone index bundling allocation and discount odds---and its rationed-boundary solution have no counterpart in pure persuasion. The virtual weighting of the disclosure objective is the persuasion counterpart of the dynamic virtual values of \citet{EsoSzentes2017} and \citet{PavanSegalToikka2014}, and the ironing technology traces to \citet{Myerson1981} and \citet{Toikka2011}. \citet{Dworczak2016} studies mechanism design with an aftermarket whose payoff depends on a public posterior about the winner; our second seller plays the role of his aftermarket, but our upstream seller actively persuades, and our central object is the cross-agent externality his single-agent analysis does not contain. \citet{BergemannHeumannMorris2026} jointly design a menu and the buyer's information; here the persuasion is of a third party about the buyer, and the friction is what that third party's response does to everyone else.

Finally, the disclosure rule is a Bayesian persuasion in the sense of \citet{KamenicaGentzkow2011}, part of the information-design program surveyed by \citet{BergemannMorris2019}, and we use their concavification to solve the binary problem; the sequential resolution of the buyer's preferences places the model in the dynamic-mechanism-design tradition of \citet{PavanSegalToikka2014} and \citet{Battaglini2005}.

\section{An Illustrative Example}\label{sec:example}

A concrete example fixes the forces before the general model. An online retailer ($S_1$) makes a take-it-or-leave-it offer for its premium product; an insurer ($S_2$) later quotes the same customer a policy. The customer values each purchase at $\high=3$ or $\low=1$, and the two values are correlated through his taste for quality: a customer who pays premium prices probably buys generous coverage. He learns his policy value only when the quote arrives, so at the retail stage he knows only his retail type, high with probability $\mu_0=1/2$. The two types are linked by $\bar\high=\Prob(\text{policy high}\mid\text{retail high})=1/2$ and $\bar\low=\Prob(\text{policy high}\mid\text{retail low})=1/4$.

Absent any data sharing, the insurer's belief that the customer values coverage highly is $\mu_0\bar\high+(1-\mu_0)\bar\low=3/8$, above the threshold $q^*=\low/\high=1/3$ at which it is indifferent between the high premium $\high$ (serving only high types) and the low premium $\low$ (serving both). It therefore quotes $\high=3$, and the customer sees no discount. The retailer, anticipating this, prices the product at $3$ and sells only to the high type, earning $\mu_0\cdot 3=3/2$; the low retail type, worth $\low=1$ but unwilling to pay $3$, is excluded and contributes nothing.

Now let the retailer share data with the insurer and commit in advance to the sharing protocol: it reveals a high retail type with probability $g_\high=1/2$ and otherwise files the customer in a pooled segment. Upon seeing the pooled segment the insurer's belief that the customer is a high policy type falls to exactly $q^*=1/3$, so it is willing to discount, quoting $\low=1$. Disclosure manufactures a downstream discount with positive probability, and this reshapes the retailer's revenue through two channels.

\emph{The benefit runs through the low type.} A low retail type is a high policy type with probability $\bar\low=1/4$ and then gains $\high-\low=2$ from the discount, so the prospect of a discount is worth $\bar\low(\high-\low)=1/2$ to him. The retailer, who previously extracted nothing from a low type, now sells him exactly this prospect---a paid membership whose only content is a place in the pooled segment---and captures the $1/2$. Weighted by the probability $1-\mu_0=1/2$ of facing a low type, the benefit is $1/4$.

\emph{The cost runs through the high type.} A high retail type values the discount more---he is a high policy type with probability $\bar\high=1/2>\bar\low$---so he is tempted to pose as a low type and secure the discount with probability one instead of the $1-g_\high=1/2$ he obtains when truthful. To keep him truthful the retailer leaves him the value of that prospect as an information rent rather than charging for it, so his price for the product is unchanged. The net gain therefore comes from the low types, whose prospect the retailer does sell, and revenue rises from $3/2$ to $7/4$. Disclosure is strictly optimal.

The same accounting shows when disclosure fails. Push the environment toward perfect positive correlation, $(\bar\high,\bar\low)=(1,0)$. The benefit vanishes---a low retail type is now surely a low policy type and assigns no value to the discount---while the rent is maximal, because a high retail type is surely a high policy type and values the discount most. Disclosure then strictly lowers revenue and the retailer prefers privacy: this is the privacy result of \citet{CalzolariPavan2006}, which Section~\ref{sec:single} locates as the perfectly correlated boundary of a region of imperfect correlations on which disclosure is instead optimal. In this example the retailer serves only high retail types, so the discount never tempts an excluded customer; when instead the product is cheap relative to the customer's tastes, a second constraint awakens---the discounted policy makes the high type's contract attractive to the low type---and the retailer must distort what it sells him to keep the two contracts apart. Three questions remain, which the paper answers in turn: exactly which correlation structures make disclosure optimal, and at what incentive cost; what disclosure about one customer does to the other customers whose data are correlated with his; and what changes when the retailer can sell the data rather than only act on it.
\section{Model}\label{sec:model}

There are two periods $t\in\{1,2\}$, two sellers $S_1$ and $S_2$, and a buyer. Each seller supplies one indivisible good at zero cost. In period~$t$ seller $S_t$ makes a take-it-or-leave-it offer. All parties are risk neutral and outside options are zero. The buyer's value for $S_t$'s good is $\theta_t\in\{\low,\high\}$ with $\high>\low>0$, and the public prior is $\mu_0=\Prob(\theta_1=\high)$.

Valuations are imperfectly correlated across the two goods, and the buyer does not know $\theta_2$ when he contracts with $S_1$:
\begin{equation}
  \bar\high=\Prob(\theta_2=\high\mid\theta_1=\high),\qquad
  \bar\low=\Prob(\theta_2=\high\mid\theta_1=\low),
\end{equation}
with $\bar\high\ge\bar\low$ as the leading case, so a high first-period type is more likely to remain high. The pair $(\bar\high,\bar\low)$ indexes the correlation structure; $\bar\high=1,\bar\low=0$ is perfect positive correlation and $\bar\high=\bar\low$ is independence.

\begin{assumption}[Environment]\label{ass:env}
$\high>\low>0$, $\mu_0\in(0,1)$, and $\bar\high,\bar\low\in[0,1]$.
\end{assumption}

\noindent The positive-correlation case $\bar\high\ge\bar\low$ organizes the exposition; the negative-correlation case $\bar\high<\bar\low$ is treated in full in Proposition~\ref{prop:negcorr}. The analysis covers every correlation structure, including the perfectly correlated boundary of \citet{CalzolariPavan2006}.

The interaction between $S_1$ and the buyer is private: neither the report nor the trade is observed by $S_2$. By committing ex ante to a signal $s\in\mathcal S$ with distribution $g(s\mid b_1)$ conditioned on the reported type $b_1$, $S_1$ can nonetheless convey information to $S_2$. The revelation principle applies, so in period~1 $S_1$ offers the mechanism $\{p_1(b_1),q_1(b_1),g(s\mid b_1)\}$ specifying a price, a trade probability, and the conditional signal distribution. After observing $s$, $S_2$ offers $\{p_2(b_2,s),q_2(b_2,s)\}$, where the dependence on $s$ enters through $S_2$'s posterior about $\theta_2$. Given a posterior $q=\Prob(\theta_2=\high)$ about the buyer she faces, $S_2$ optimally posts the high price $\high$, selling only to high types, when $q>q^*\equiv\low/\high$, and the low price $\low$, selling to both types---a discount---when $q\le q^*$. One convention governs every tie in the paper: at $q=q^*$ exactly, $S_2$ is indifferent and discounts, the sender-preferred selection standard in persuasion; the downstream deadweight loss is accordingly $D(q)=\low\,(1-q)\,\ind\{q>q^*\}$.

A contract is incentive compatible if truthful reporting is optimal for the buyer given his beliefs, and individually rational given the zero outside option. Seller $S_1$ maximizes expected revenue from its own trade subject to incentive compatibility, individual rationality, and the downstream best response of $S_2$. We say that \emph{disclosure} occurs when the optimal $g(\cdot\mid b_1)$ is informative about $b_1$, and \emph{privacy} obtains when $g$ is uninformative. Two mechanisms are \emph{equivalent} if they induce the same joint distribution over the allocation, the transfer, and $S_2$'s posterior, and a solution is \emph{essentially unique} if every optimal mechanism is equivalent to it---pinning the allocation, rents, and what the market learns, while leaving the labelling and duplication of messages free.

Two readings of $S_1$'s problem recur. As \emph{information design}, $S_1$ is a sender who persuades the receiver $S_2$ about the buyer's type subject to the buyer's incentive constraints; this is the reading we exploit technically, through the concavification of \citet{KamenicaGentzkow2011} in the binary model and a constrained linear program in the continuum. As \emph{multidimensional screening}, the disclosure rule is a second screening instrument layered on the price and quantity of $S_1$'s own good: a buyer who reports a type buys both that good and a lottery over downstream discounts, and the two valuations he holds for these two objects are the two dimensions being screened. The instrument is costly---it distorts the report, the allocation that deters the report, and, once buyers are correlated (Section~\ref{sec:pop}), the welfare of third parties---so the question of when $S_1$ discloses is the question of when a costly screening instrument earns its keep \citep{YangCMS}.

\subsection{Discussion of the assumptions}\label{ss:assumptions}

Five features of the model carry the analysis. Each is the weakest version that delivers the results, and each is relaxed or shown inessential below.

\noindent\textbf{Binary valuations.} The two-point type space is the leading case, not a substantive restriction, and it is relaxed twice. Section~\ref{sec:engine} solves the continuum-type problem in full, where the binary reveal-or-pool structure gives way to a rationed discount; Section~\ref{sec:pop} states the externality representation for an arbitrary finite downstream type space, the belief entering only through the deadweight-loss function.

\noindent\textbf{Conditional independence (Assumption~\ref{ass:pop}).} Buyers' first-period types are independent given a common state $\omega$, and the cross-buyer posterior loads on the source linearly. In the binary model the linear loading is implied by conditional independence, and any two-message signal satisfies it automatically (Remark~\ref{rem:colinear}); with richer signals it is the one-factor structure used throughout the data-externality literature \citep{AcemogluMMO2022,BergemannBonattiGan2022}, and Example~\ref{ex:loading} shows it is load-bearing.

\noindent\textbf{Monopoly pricing downstream.} Seller $S_2$ posts a take-it-or-leave-it price. The analysis uses only that her optimal price is monotone in her belief, so the buyer's continuation payoff is a monotone step. Downstream competition is not a hostage but a comparative static: Proposition~\ref{prop:competition} shows the externality's law---harm below the prevailing pricing threshold, gain above---holds for every posted-price market structure, while competition relocates the threshold and scales every stake by the downstream margin, to zero under Bertrand pricing.

\noindent\textbf{Commitment to the disclosure rule.} Seller $S_1$ commits ex ante to the signal $g$, as in the Bayesian-persuasion paradigm of \citet{KamenicaGentzkow2011}. Without commitment the disclosure stage is cheap talk and conveys nothing; commitment is what gives the instrument any bite.

\noindent\textbf{Sequential learning.} The buyer learns $\theta_2$ only when he meets $S_2$. This is essential, not cosmetic: Section~\ref{sec:robust} (Proposition~\ref{prop:static}) shows that if the buyer knows both valuations at the first trade, privacy prevails---proven at the perfectly correlated benchmark and whenever $\bar\low\ge q^*$, the residual region recorded computationally in Remark~\ref{rem:staticregion}. The value of disclosure comes precisely from the buyer's uncertainty about his future taste---the feature the static benchmark of \citet{CalzolariPavan2006} assumes away.

\section{Optimal Disclosure with a Single Buyer}\label{sec:single}

We characterize $S_1$'s optimal disclosure to a single buyer. After recording the buyer's continuation payoff, we solve the first best, in which $\theta_1$ is observable to $S_1$, and then the second best, in which $\theta_1$ is private and both incentive constraints are priced.

\subsection{The buyer's continuation payoff}

Let $\mu_1=\Prob(\theta_1=\high\mid s)$ be $S_2$'s posterior about the first-period type after the signal, and $q(\mu_1)=\bar\low+\mu_1(\bar\high-\bar\low)=\Prob(\theta_2=\high\mid s)$ the implied posterior about the second-period type. Seller $S_2$ discounts---posts $\low$ and sells to both types---if and only if $q(\mu_1)\le q^*$, equivalently $\mu_1\le\pi$, where
\begin{equation}\label{eq:pi}
  \pi\equiv\frac{q^*-\bar\low}{\bar\high-\bar\low},\qquad q^*\equiv\frac{\low}{\high}.
\end{equation}

\begin{lemma}[Continuation payoff]\label{lem:cont}
A low second-period type earns zero in the second trade. A high second-period type earns $\high-\low$ when $S_2$ discounts and zero otherwise, so the buyer's continuation payoff is $Q(\mu_1)=(\high-\low)\,\ind\{\mu_1\le\pi\}$. Conditional on a first-period report $b_1$, the buyer's expected continuation payoff is $\bar\low\,\E_g[Q\mid b_1]$ if $\theta_1=\low$ and $\bar\high\,\E_g[Q\mid b_1]$ if $\theta_1=\high$.
\end{lemma}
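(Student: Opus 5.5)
The plan is to work backward from $S_2$'s pricing decision to the buyer's second-period surplus, and then average over $\theta_2$ and the signal. First I fix a signal realization $s$ with posterior $\mu_1$. By the model's description of $S_2$'s best response and the tie convention, $S_2$ posts $\low$ when $q(\mu_1)\le q^*$ and $\high$ otherwise. Because $q(\mu_1)=\bar\low+\mu_1(\bar\high-\bar\low)$ is affine and, in the leading case $\bar\high>\bar\low$, increasing, the event $q(\mu_1)\le q^*$ is exactly $\mu_1\le\pi$ with $\pi$ as in \eqref{eq:pi}. I will note separately the degenerate cases. If $\bar\high=\bar\low$, the indicator is constant in $\mu_1$ and equals $\ind\{\bar\low\le q^*\}$, with $\pi$ read as $\pm\infty$. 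If $\bar\high<\bar\low$, the inequality flips, and the indicator is to be read as $\ind\{q(\mu_1)\le q^*\}$. This is the one place where care is needed, so I would state the lemma as $Q=(\high-\low)\ind\{q(\mu_1)\le q^*\}$ and specialize.

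Second, I compute the surplus in the take-it-or-leave-it trade. A type $\theta_2=\low$ buys at price $\low$ for zero surplus, and does not buy at $\high$, so he earns zero in either case. A type $\theta_2=\high$ earns $\high-\low$ at price $\low$ and $\high-\high=0$ at price $\high$. This gives the stated payoff $Q(\mu_1)$ for the high second-period type and zero for the low one, with the high type's payoff as a step function of the posterior.

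Third, I take expectations conditional on the true $\theta_1$ and the report $b_1$. The key observation is that $s$ is drawn from $g(\cdot\mid b_1)$, which depends only on the report. Also, $\theta_2$ is drawn from $\theta_1$ alone, with $\Prob(\theta_2=\high\mid\theta_1)$ equal to $\bar\high$ or $\bar\low$. So conditional on $(\theta_1,b_1)$, the variables $s$ and $\theta_2$ are independent. The expected continuation payoff is then $\Prob(\theta_2=\high\mid\theta_1)\cdot\E_g[Q(\mu_1(s))\mid b_1]$, which yields $\bar\low\,\E_g[Q\mid b_1]$ for $\theta_1=\low$ and $\bar\high\,\E_g[Q\mid b_1]$ for $\theta_1=\high$.

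I also need that $S_2$'s posterior $\mu_1(s)$ is computed under the equilibrium presumption of truthful reporting. This makes it a fixed function of $s$ that is unaffected by the buyer's actual deviation, which justifies treating $Q(\mu_1(s))$ as exogenous to $\theta_1$ given $b_1$. This independence step is the only substantive point; the rest is bookkeeping.
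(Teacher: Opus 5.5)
Your proposal is correct and follows the paper's own argument. $S_2$ discounts iff $q(\mu_1)\le q^*$, which is equivalently $\mu_1\le\pi$ when $\bar\high>\bar\low$, with ties going to the discount. The second-trade surplus is $\high-\low$ for a high type at price $\low$ and zero otherwise. Conditioning on the report and integrating over the signal then yields the weights $\bar\low$ and $\bar\high$.

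Two points in your write-up only spell out what the paper's one-line ``integrating over the signal'' step leaves implicit: your use of the conditional independence of $s$ and $\theta_2$ given $(\theta_1,b_1)$, and your caveat that the indicator must be read as $\ind\{q(\mu_1)\le q^*\}$ when $\bar\high<\bar\low$.
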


The high first-period type weights the downstream discount by $\bar\high$ and the low type by $\bar\low<\bar\high$: the buyer more likely to remain high values the prospect of a discount more. This single inequality drives every result below. Throughout, write $\phi_b=\E_g[\ind\{\mu_1\le\pi\}\mid b_1=b]$ for the discount probability a report induces; a signal is feasible exactly when its discount messages each carry posterior at most $\pi$, which aggregates to $\mu_0(1-\pi)\phi_\high\le\pi(1-\mu_0)\phi_\low$, that is, $\phi_\high\le k\,\phi_\low$ with $k\equiv\pi(1-\mu_0)/[\mu_0(1-\pi)]$, whenever the prior is not itself discounted ($\mu_0>\pi$).

\subsection{First best}

Suppose $\theta_1$ is observable to $S_1$, isolating the benefit of disclosure from any incentive cost.

With both individual-rationality constraints binding, substituting the transfers into $S_1$'s revenue leaves
\begin{equation}\label{eq:fbobj}
  \max_{q_1,\,g}\ \underbrace{\mu_0\high\,q_1(\high)+(1-\mu_0)\low\,q_1(\low)}_{\text{first-period trade}}+\underbrace{\mu_0\bar\high\,\E_g[Q\mid\high]+(1-\mu_0)\bar\low\,\E_g[Q\mid\low]}_{\text{value of disclosure}} .
\end{equation}
The trade term is maximized by supplying both types, $q_1\equiv1$. Expressing the disclosure term through the unconditional distribution of posteriors $\tilde g$ (Bayes-plausible, $\E_{\tilde g}[\mu_1]=\mu_0$) turns it into $\E_{\tilde g}[J_f(\mu_1)]$ with the pointwise value
\begin{equation}\label{eq:Jf}
  J_f(\mu_1)=Q(\mu_1)\,q(\mu_1)=(\high-\low)\,q(\mu_1)\,\ind\{\mu_1\le\pi\} .
\end{equation}
Concavifying $J_f$ delivers the first best.

\begin{proposition}[First-best disclosure]\label{prop:fb}
Under Assumption~\ref{ass:env} with $\bar\high\ge\bar\low$, $\bar\low\neq q^*$, and $\theta_1$ observable, informative disclosure is strictly optimal if and only if
\begin{equation}\label{eq:fbregion}
  \bar\low<\frac{\low}{\high}<\mu_0\bar\high+(1-\mu_0)\bar\low .
\end{equation}
When \eqref{eq:fbregion} holds, the optimal signal randomizes between revealing the high type (posterior $\mu_1=1$) and a pool (posterior $\mu_1=\pi$) that just induces a downstream discount, which $S_1$ monetizes through its trade with the low type---whose trade was otherwise worthless.
\end{proposition}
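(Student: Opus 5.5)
The plan is to work entirely with the concavification set up just before the statement. Because $q_1\equiv1$ maximizes the first-period trade term in \eqref{eq:fbobj} independently of $g$, the disclosure question reduces to comparing two values. The first is the concave envelope $\mathrm{cav}\,J_f(\mu_0)$ over Bayes-plausible posterior distributions on $[0,1]$. The second is $J_f(\mu_0)$, the value of the uninformative signal. Informative disclosure is strictly optimal exactly when $\mathrm{cav}\,J_f(\mu_0)>J_f(\mu_0)$, and by \citet{KamenicaGentzkow2011} the optimal signal is read off from the supporting chord. So the proof is a case analysis on the position of the threshold $\pi$ in \eqref{eq:pi} relative to $0$, $\mu_0$ and $1$.

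First I would dispose of degenerate correlation. If $\bar\high=\bar\low$, then $q(\cdot)$ is constant, so $J_f$ is constant on $[0,1]$ and no signal helps. This is consistent with \eqref{eq:fbregion}, which then fails, so assume $\bar\high>\bar\low$. In that case $q(\mu_1)$ is strictly increasing and linear, and $J_f$ is a linear increasing function on $[0,\pi]\cap[0,1]$ and zero beyond. I would then split into three cases.

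\emph{Case 1: $\pi<0$, i.e.\ $\bar\low>q^*$.} Here $J_f\equiv0$ on $[0,1]$, so there is no gain.

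\emph{Case 2: $\mu_0\le\pi$, i.e.\ $q(\mu_0)\le q^*$.} The prior already triggers a discount. If $\pi\ge1$, $J_f$ is linear on all of $[0,1]$ and equals its envelope. If $\pi<1$, the only candidate that could raise the hull on $[0,\pi]$ is the point $(1,0)$. The linear piece, extended to $\mu_1=1$, takes the value $(\high-\low)\bar\high\ge0$, so $(1,0)$ lies weakly below that line. Hence $\mathrm{cav}\,J_f=J_f$ on $[0,\pi]$, and again there is no strict gain.

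\emph{Case 3: $0\le\pi<\mu_0$.} This is equivalent to $\bar\low\le q^*<q(\mu_0)$, which under $\bar\low\neq q^*$ is exactly \eqref{eq:fbregion}. Now $J_f(\mu_0)=0$. The chord through $(\pi,(\high-\low)q^*)$ and $(1,0)$ gives $\mathrm{cav}\,J_f(\mu_0)\ge(\high-\low)q^*(1-\mu_0)/(1-\pi)>0$, since $q^*=\low/\high>0$. So disclosure is strictly optimal.

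To identify the optimal signal I would show this chord is the best one. Any chord spanning $\mu_0$ must have its left endpoint at some $x\le\pi$ and its right endpoint in $(\pi,1]$, where $J_f=0$. Putting the right endpoint at $1$ is optimal, because it moves the most weight onto the left endpoint. The value at $\mu_0$ is then $J_f(x)(1-\mu_0)/(1-x)$. This is increasing in $x$, since $J_f$ is increasing and $1/(1-x)$ is increasing, so it is maximized at $x=\pi$. The optimal split is therefore onto $\{\pi,1\}$: the high type is revealed with some probability, and the remaining mass is pooled at the posterior $\pi$, where $S_2$ just discounts under the tie convention. The interpretation follows from Lemma~\ref{lem:cont}. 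The low type's binding individual-rationality constraint lets $S_1$ charge him $\bar\low(\high-\low)\phi_\low$, which is positive. This is his only surplus beyond $\low$ from the trade.

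The main obstacle I expect is bookkeeping at the boundaries rather than any deep step. One issue is the strictness of the gain when $\pi=0$, where the pool posterior is degenerate; this is excluded by $\bar\low\neq q^*$. Another is the tie convention at $\mu_1=\pi$, which is needed so that the supremum is attained. A third is making sure that "essentially unique" and "strictly optimal" mean the envelope strictly exceeds $J_f(\mu_0)$. I would also check that the "only if" direction covers the case $q(\mu_0)=q^*$ exactly; this falls under Case 2 and matches the strict right inequality in \eqref{eq:fbregion}.
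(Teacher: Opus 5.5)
Your proposal is correct and is essentially the paper's proof. You concavify $J_f$, split cases by where $\pi$ sits relative to $0$ and $\mu_0$ (plus the degenerate case $\bar\high=\bar\low$), and use the chord from $(\pi,(\high-\low)q^*)$ to $(1,0)$; your optimization over binary splits gives slightly more justification than the paper for the support $\{\pi,1\}$.

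One small correction to your interpretive aside. The low type's payment for the discount prospect, $\bar\low(\high-\low)\phi_\low$, is zero rather than positive when $\bar\low=0$, and the region allows $\bar\low=0$. In that case the whole gain is extracted from the high types in the pool: with $\theta_1$ observable, both types pay for the prospect.
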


The left inequality in \eqref{eq:fbregion} says a revealed low type is discounted ($q(0)=\bar\low<q^*$); the right says that absent disclosure $S_2$ posts the high price ($q(\mu_0)>q^*$). Disclosure is valuable precisely when it manufactures a discount that does not otherwise exist, which fails once correlation is so positive that $\bar\low\ge\low/\high$.

\subsection{Second best}\label{ss:secondbest}

When $\theta_1$ is private, persuasion of $S_2$ feeds back into $S_1$'s own screening, and it does so through \emph{both} incentive constraints. The familiar constraint is the high type's: every informative signal gives the low report a weakly more frequent discount, $\phi_\low\ge\phi_\high$, and since the high type weights the discount by the larger $\bar\high$, he must be left a rent to deter mimicry. The second constraint is the low type's, and it awakens exactly when $S_1$ wants to serve him: a high contract that bundles full supply with a priced-in discount prospect is cheap---the rent makes it so---and the low type is tempted to take it. Feasibility requires the two contracts to stay apart,
\begin{equation}\label{eq:M}
  q_1(\high)+(\bar\high-\bar\low)\,\phi_\high\;\ge\;q_1(\low)+(\bar\high-\bar\low)\,\phi_\low,
  \tag{M}
\end{equation}
a monotonicity constraint bundling supply with discount odds that reappears, as the monotonicity of a single index, in the continuum analysis of Section~\ref{sec:engine}. Pricing both constraints yields the characterization.

\begin{theorem}[Optimal disclosure with binary types]\label{thm:binary}
Under Assumption~\ref{ass:env} with $\bar\high>\bar\low$, $\bar\low\neq q^*$ (the excluded slice is settled in Remark~\ref{rem:boundary}), and $\theta_1$ private, informative disclosure is strictly optimal if and only if
\begin{equation}\label{eq:sbregion}
  \max\{\bar\low,\ \mu_0\bar\high\}<\frac{\low}{\high}<\mu_0\bar\high+(1-\mu_0)\bar\low .
\end{equation}
On this region the optimal signal is essentially unique and coincides with the first-best two-signal rule: reveal the high type with probability $g_\high$ as in \eqref{eq:optrule}, else pool at the posterior $\pi$. The high type is supplied fully, $q_1(\high)=1$, with a strictly positive rent, and the low type's supply turns on the cost of keeping the contracts apart:
\begin{itemize}[leftmargin=1.6em,itemsep=0.1em,topsep=0.2em]
\item[(i)] if $\mu_0\ge\low/\high$, then $q_1(\low)=0$, \eqref{eq:M} is slack, only the high type's constraint binds, and the rent is $(\high-\low)(\bar\high-\bar\low)$; at $\mu_0=\low/\high$ a continuum of optimal supplies coexists, the signal still essentially unique;
\item[(ii)] if $\mu_0<\low/\high$, then $q_1(\low)=1-(\bar\high-\bar\low)\,g_\high\in(0,1)$, both incentive constraints bind, the rent is $(\high-\low)\,[\,q_1(\low)+\bar\high-\bar\low\,]$, and disclosure nets the deterrence cost $(\low-\mu_0\high)(\bar\high-\bar\low)\,g_\high$; the identity
\begin{equation}\label{eq:identity}
\mu_0(\high-\low)\big(q^*-\mu_0\bar\high\big)-(\low-\mu_0\high)\big(q(\mu_0)-q^*\big)
=\high(1-\mu_0)\big[\mu_0q^*(1-\bar\high)+(q^*-\mu_0)(q^*-\bar\low)\big]
\end{equation}
holds with strictly positive right side throughout \eqref{eq:sbregion}, so the deterrence cost never reverses the comparison.
\end{itemize}
Outside \eqref{eq:sbregion}, no informative signal raises revenue: when $q(\mu_0)>q^*$, privacy strictly dominates every discount-inducing signal except on the boundary $\mu_0\bar\high=q^*$, where they tie; when $q(\mu_0)\le q^*$ the prior itself is discounted.
\end{theorem}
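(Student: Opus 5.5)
The plan is to collapse the second-best problem to a finite-dimensional linear program in the four numbers $(q_1(\high),q_1(\low),\phi_\high,\phi_\low)$ and read off the characterization from its vertices.

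\emph{Step 1: reduction to discount odds.} By Lemma~\ref{lem:cont}, a buyer's payoff from the signal depends on $g(\cdot\mid b)$ only through the discount probability $\phi_b$: type $\theta_1=\high$ reporting $b$ gets $\high q_1(b)-p_1(b)+\bar\high(\high-\low)\phi_b$, and type $\low$ gets the same with $\bar\low$. $S_1$'s revenue also depends on the signal only through $(\phi_\high,\phi_\low)$. I would then show the feasible set of $(\phi_\high,\phi_\low)$ is exactly $\{0\le\phi_\high\le k\phi_\low\le k,\ \phi_\low\le1\}$ when $\mu_0>\pi$, i.e.\ $q(\mu_0)>q^*$:
\begin{itemize}[leftmargin=1.6em,itemsep=0.1em,topsep=0.2em]
\item pooling all discount messages into one keeps its posterior $\le\pi$, which gives necessity;
\item the two-message rule (reveal $\high$, else pool at posterior $\le\pi$) attains every point, which gives sufficiency.
\end{itemize}
Any optimum with $\phi_\high=k\phi_\low$ forces every discount message to carry posterior exactly $\pi$. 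This delivers essential uniqueness of the signal and its coincidence with the rule in \eqref{eq:optrule}. When $q(\mu_0)\le q^*$, the prior is already discounted and privacy already yields $\phi\equiv1$, which settles that part of the ``outside'' claim.

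\emph{Step 2: pricing the constraints.} Bind $IR_\low$, so $p_1(\low)=\low q_1(\low)+\bar\low(\high-\low)\phi_\low$, and bind $IC_\high$. The high type's rent is then
\[
U_\high=(\high-\low)\,q_1(\low)+(\bar\high-\bar\low)(\high-\low)\,\phi_\low .
\]
Adding $IC_\high$ and $IC_\low$ gives exactly \eqref{eq:M}, and $IC_\low$ is equivalent to \eqref{eq:M} given binding $IC_\high$ and $IR_\low$. Substituting yields the linear objective
\[
\mu_0\high\,q_1(\high)+\mu_0\bar\high(\high-\low)\,\phi_\high+(\low-\mu_0\high)\,q_1(\low)+(\high-\low)(\bar\low-\mu_0\bar\high)\,\phi_\low ,
\]
maximized subject to feasibility, \eqref{eq:M} and box constraints. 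Since the $q_1(\high)$ and $\phi_\high$ coefficients are positive and both relax \eqref{eq:M}, set $q_1(\high)=1$ and $\phi_\high=k\phi_\low$. The problem becomes two-dimensional in $(q_1(\low),\phi_\low)$. The net coefficient on $\phi_\low$ is $(\high-\low)[\mu_0\bar\high k+\bar\low-\mu_0\bar\high]$; using $k=\pi(1-\mu_0)/[\mu_0(1-\pi)]$ and \eqref{eq:pi}, I would simplify its sign to that of $q^*-\mu_0\bar\high$ times a positive factor, provided $\bar\low<q^*$.

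\emph{Step 3: cases.}
\begin{itemize}[leftmargin=1.6em,itemsep=0.1em,topsep=0.2em]
\item \emph{Case (i), $\mu_0\ge\low/\high$.} The $q_1(\low)$ coefficient is $\le0$, so $q_1(\low)=0$. Then \eqref{eq:M} reads $1\ge(\bar\high-\bar\low)(1-k)\phi_\low$, which is slack. The optimum is $\phi_\low=1$ iff $q^*>\mu_0\bar\high$, and the rent formula follows. At equality there is a continuum of $q_1(\low)$ with a fixed signal.
\item \emph{Case (ii), $\mu_0<\low/\high$.} Both $q_1(\low)$ and $\phi_\low$ want to rise but trade off through \eqref{eq:M}. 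The vertices are (privacy, $q_1(\low)=1$) and ($\phi_\low=1$, $q_1(\low)=1-(\bar\high-\bar\low)(1-k)$), and I would identify $1-k$ with $g_\high$. Disclosure wins iff the $\phi_\low$ gain exceeds the lost low-type supply, which is the left side of \eqref{eq:identity}. I would verify that identity by direct expansion in $q^*$, and then sign the right side on \eqref{eq:sbregion}:
\begin{itemize}[leftmargin=1.6em,itemsep=0.1em,topsep=0.2em]
\item $1-\bar\high\ge0$;
\item $q^*>\mu_0$ in this case;
\item $q^*>\bar\low$ on the region;
\item strictness comes from the second product.
\end{itemize}
\end{itemize}

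For the converse, outside the region with $q(\mu_0)>q^*$, either $\bar\low>q^*$ makes every $\phi>0$ infeasible, or $q^*\le\mu_0\bar\high$ makes the $\phi_\low$ coefficient nonpositive. In case (ii) the deterrence term only worsens this, with a tie exactly at $\mu_0\bar\high=q^*$.

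\emph{Main obstacle.} I expect the hardest step to be case (ii): confirming that the LP optimum sits at the stated vertex rather than at an interior point of \eqref{eq:M}, and that the algebraic identity carries the correct sign everywhere on \eqref{eq:sbregion}. I would first check it numerically at the example's parameters, then expand both sides of \eqref{eq:identity} as polynomials in $(\mu_0,\bar\high,\bar\low,q^*)$.
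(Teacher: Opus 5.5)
Your route is the paper's: bind IR$_\low$ and IC$_\high$, get revenue linear in $(q_1(\high),q_1(\low),\phi_\high,\phi_\low)$, push $q_1(\high)\to1$ and $\phi_\high\to k\phi_\low$, and sign the $\phi_\low$-slope through $\bar\low-\mu_0\bar\high g_\high=(\bar\high-\bar\low)(q^*-\mu_0\bar\high)/(\bar\high-q^*)$.

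The case-(ii) obstacle you flag is not one. With both remaining coefficients positive, the optimum lies on the segment where \eqref{eq:M} binds, $q_1(\low)=1-(\bar\high-\bar\low)g_\high\phi_\low$. Revenue is linear along that segment, with slope proportional to the left side of \eqref{eq:identity}, and the identity does expand term by term.

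Two slips are harmless:
\begin{itemize}
\item The factor $(\bar\high-\bar\low)/(\bar\high-q^*)$ is positive because $\bar\high>q(\mu_0)>q^*$, not because $\bar\low<q^*$.
\item ``Reveal $\high$, else pool'' only reaches points with $\phi_\low=1$. For general points, use a discount message with likelihoods $(\phi_\high,\phi_\low)$ plus its complement, whose posterior exceeds $\pi$ because the prior does. The optimal vertices are implementable either way.
\end{itemize}

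The genuine gap is the prior-discounted part of the converse. You say that when $q(\mu_0)\le q^*$ privacy ``already yields $\phi\equiv1$, which settles that part.'' But maximal discount odds are not automatically revenue-maximal. $\phi_\low$ enters revenue with coefficient $(\high-\low)(\bar\low-\mu_0\bar\high)$, because the low report's discount odds feed the high type's rent. That coefficient is negative whenever $\mu_0\bar\high>\bar\low$; for example $\bar\low=0$, $\bar\high=\mu_0=1/2$, $q^*=3/10$, where the prior is discounted. There, a signal that cut $\phi_\low$ while keeping $\phi_\high$ near one would raise revenue, and you must show no such signal is feasible.

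The missing step mirrors your Step 1 on the other side of $\pi$:
\begin{itemize}
\item When $\mu_0\le\pi$, every no-discount message has posterior above $\pi$. Summing them gives $1-\phi_\high\ge\tilde k(1-\phi_\low)$ with $\tilde k=\pi(1-\mu_0)/[\mu_0(1-\pi)]\ge1$.
\item Along that boundary the revenue slope in $\phi_\low$ is $(\high-\low)[\bar\low+\mu_0\bar\high(\tilde k-1)]\ge0$. So the $\phi$-part of revenue is maximized at the uninformative point $(1,1)$.
\item Every informative signal then has $\phi_\low\ge\phi_\high$. This only tightens \eqref{eq:M} relative to privacy's $q_1(\high)\ge q_1(\low)$, so the supply part cannot gain either.
\item If $\bar\high\le q^*$, every posterior discounts and all signals are payoff-equivalent.
\end{itemize}
This is how the paper closes the case. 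With it added, your argument is complete.
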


Three readings. First, the region is the first-best region intersected with $\mu_0\bar\high<q^*$: the rent shrinks the region, while the deterrence---identity \eqref{eq:identity}---prices the mechanism without ever closing it further. Privacy under imperfect correlation is driven entirely by the classical rent. Second, the two cases of the theorem are two incentive costs of the same instrument: where the low type is excluded anyway, disclosure costs only the rent; where he is served, it also costs a supply distortion, $q_1(\low)$ falling by exactly the mimicry gain $(\bar\high-\bar\low)g_\high$ the discount creates. Third, because \eqref{eq:sbregion} is an exact characterization, the condition is necessary as well as sufficient: outside the region no informative signal raises revenue, perfect positive correlation included. This recovers the benchmark of the literature as a corollary.

\begin{corollary}[Calzolari and Pavan, 2006]\label{cor:cp}
At perfect positive correlation $(\bar\high,\bar\low)=(1,0)$, privacy is optimal: no informative signal raises $S_1$'s revenue. This is the privacy result of \citet{CalzolariPavan2006}, recovered as the perfectly correlated case excluded by \eqref{eq:sbregion}.
\end{corollary}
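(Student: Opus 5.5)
The plan is to read the corollary off Theorem~\ref{thm:binary} by checking that $(\bar\high,\bar\low)=(1,0)$ violates \eqref{eq:sbregion}. The theorem states that outside that region no informative signal raises revenue. The theorem's standing hypotheses hold here: $\bar\high=1>0=\bar\low$, and $\bar\low=0\neq q^*$ because $q^*=\low/\high>0$ under Assumption~\ref{ass:env}.

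Substituting into \eqref{eq:sbregion} gives $\max\{0,\mu_0\}<q^*<\mu_0$. The left inequality asks $\mu_0<q^*$ and the right asks $q^*<\mu_0$, which cannot both hold, so the region is empty at perfect positive correlation for every prior and every $\low,\high$. It remains to match each prior with the clause of the theorem's final paragraph that applies:
\begin{itemize}[leftmargin=1.6em,itemsep=0.1em,topsep=0.2em]
\item \emph{If $\mu_0\le q^*$}, then $q(\mu_0)=\mu_0\le q^*$, so the prior itself is discounted and no signal can manufacture a discount that is not already present.
\item \emph{If $\mu_0>q^*$}, then $q(\mu_0)>q^*$. The theorem then says privacy strictly dominates every discount-inducing signal, except on the boundary $\mu_0\bar\high=q^*$, i.e.\ $\mu_0=q^*$. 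That boundary is excluded in this case, so domination is strict.
\end{itemize}

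The one step needing care is the case $q(\mu_0)\le q^*$: "the prior is already discounted" must be turned into "no informative signal raises revenue." I would argue this directly with Lemma~\ref{lem:cont}, as follows.
\begin{itemize}[leftmargin=1.6em,itemsep=0.1em,topsep=0.2em]
\item Under privacy, $\phi_\high=\phi_\low=1$, the maximal discount probability for both reports.
\item Any informative signal can only push some posteriors above $\pi$. This weakly lowers both $\phi_b$ and hence the buyer's continuation value. Here $\pi=q^*$, since $\bar\low=0$ and $\bar\high=1$.
\item At $\bar\low=0$ the low type places zero weight on the discount. Revenue from him is therefore unaffected by the signal.
\item The high type's constraint is weakly relaxed when $\phi_\low$ falls. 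However, $S_1$'s revenue from the high type depends only on his total willingness to pay, and that willingness weakly falls with $\phi_\high$.
\end{itemize}
Concretely, I would verify that revenue under any signal is bounded by the value of the first-best concavification of $J_f$ from Proposition~\ref{prop:fb}, and that at $\mu_0\le\pi$ this value is attained by the prior itself. Because $J_f$ is linear on $[0,\pi]$ and zero above, its concave envelope equals $J_f$ there, so privacy already attains the bound.

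Finally, the corollary's phrase "no informative signal raises revenue" is exactly the theorem's converse statement. Only the case $\mu_0=q^*$ on the boundary deserves a remark: there the prior sits at $\pi$ and is discounted, so it falls under the first bullet rather than the tie clause.
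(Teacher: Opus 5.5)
Your opening step is the paper's entire proof, and it suffices on its own, the discounted-prior case included. You substitute $(\bar\high,\bar\low)=(1,0)$ into \eqref{eq:sbregion}, obtain $\mu_0<q^*<\mu_0$, and invoke the converse clause of Theorem~\ref{thm:binary}; the paper does the same, citing Step~5.

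The gap is in the direct argument you then supply for $q(\mu_0)\le q^*$, which you flag as the step needing care. As written it proves the wrong thing. Bounding revenue by the first-best concavification of $J_f$, and noting that the prior attains $\mathrm{cav}\,J_f(\mu_0)$, shows only that privacy is optimal when $\theta_1$ is observable. With $\theta_1$ private, privacy does not attain that bound. At $(1,0)$ with $\mu_0<q^*$, second-best privacy earns $\low$ (pooling at price $\low$). The first-best value is $\mu_0\high+(1-\mu_0)\low+\Delta\mu_0$, with $\Delta=\high-\low$, which is larger by $2\mu_0\Delta$: the high type's first-period rent plus his downstream discount, neither of which $S_1$ can extract. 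So the chain ``any signal $\le$ first-best bound $=$ privacy'' fails at the equality.

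Your bullets share the flaw of reasoning as if $\theta_1$ were observable. The low type's revenue is \emph{not} unaffected by the signal just because $\bar\low=0$. An informative signal here has $\phi_\low>\phi_\high$, and \eqref{eq:M} then caps $q_1(\low)\le q_1(\high)-(\phi_\low-\phi_\high)$. Likewise, the high type pays his willingness to pay net of the rent IC$_\high$ forces, not his willingness to pay.

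The repair is short and stays inside the second-best reduction.
\begin{itemize}[leftmargin=1.6em,itemsep=0.1em,topsep=0.2em]
\item \emph{Feasibility.} With $\mu_0\le\pi=q^*$, summing the message-level no-discount conditions gives $1-\phi_\high\ge\tilde k(1-\phi_\low)$, where $\tilde k=\pi(1-\mu_0)/[\mu_0(1-\pi)]\ge1$. The inequality is strict whenever the no-discount event has positive probability. Hence $d\equiv\phi_\low-\phi_\high\ge0$, and $d>0$ for every signal that ever withholds the discount.
\item \emph{Revenue bound.} At $(1,0)$, \eqref{eq:reduced} reads $R=\mu_0\high\,q_1(\high)+(\low-\mu_0\high)\,q_1(\low)-\Delta\mu_0 d$, and \eqref{eq:M} reads $q_1(\low)\le q_1(\high)-d$. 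Since $\low-\mu_0\high\ge0$, this gives $R\le\low\,q_1(\high)-(1-\mu_0)\low\,d\le\low$, the privacy revenue, with strict inequality when $d>0$.
\item \emph{Remaining signals.} Signals with $d=0$ always discount and are payoff-equivalent to privacy.
\end{itemize}
This is Step~5's boundary-slope argument specialized to $(1,0)$. Alternatively, drop the supplementary argument and rely on the theorem's converse, as the paper does.
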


\begin{remark}[The Calzolari--Pavan privacy benchmark]\label{rem:cp}
At perfect positive correlation $(\bar\high,\bar\low)=(1,0)$ the region \eqref{eq:sbregion} is empty: with $\bar\low=0$ the incentive bound $\mu_0\bar\high$ and the no-disclosure belief $\mu_0\bar\high+(1-\mu_0)\bar\low$ both equal $\mu_0$, leaving no room for $\low/\high$ between them. Economically, the benefit of disclosure, borne by the low type and proportional to $\bar\low$, vanishes, while the incentive cost, borne by the high type and proportional to $\bar\high$, is maximal. Privacy is then strictly optimal, recovering \citet{CalzolariPavan2006}: their perfect-correlation assumption removes exactly the buyer's uncertainty about $\theta_2$ that makes a downstream discount valuable. \eqref{eq:sbregion} places their conclusion within a characterization on which disclosure is optimal across an open region of imperfect correlations.
\end{remark}

\begin{remark}[The boundary $\bar\low=q^*$]\label{rem:boundary}
On the measure-zero slice $\bar\low=q^*$ exactly, with $\bar\high>\bar\low$, $\pi=0$ and the pool degenerates to full revelation of the low type, which the tie-breaking convention discounts. In the first best, disclosure is then strictly optimal whenever $q(\mu_0)>q^*$; in the second best, iff $\mu_0\bar\high<q^*$ and, when $\mu_0<q^*$, $\bar\high<1$. No result below depends on this slice.
\end{remark}

\medskip
The optimal rule has an explicit two-signal form, which we use below. When disclosure is optimal, $q(\mu_0)=\bar\low+\mu_0(\bar\high-\bar\low)>q^*$, and $S_1$ reveals a high first-period type with probability $g_\high$ and pools otherwise, with $g_\high$ chosen so that the pool posterior equals $\pi$ and just induces a discount:
\begin{equation}\label{eq:optrule}
  g_\high=1-\frac{\pi(1-\mu_0)}{\mu_0(1-\pi)},\qquad \pi \text{ as in \eqref{eq:pi}.}
\end{equation}
Under this rule $S_2$'s posterior about the buyer's first-period type is $\mu_1=1$ in the reveal event, which occurs with probability $\mu_0 g_\high$, and $\mu_1=\pi$ in the pool event; the pool posterior about $\theta_2$ equals $q^*$, so $S_2$ is indifferent and discounts.

\begin{example}\label{ex:single}
Let $\high=3$, $\low=1$, so $q^*=1/3$; let $\mu_0=1/2$ and $(\bar\high,\bar\low)=(1/2,1/4)$. Then $q(\mu_0)=3/8>q^*$, $\pi=1/3$, and $g_\high=1/2$: $S_1$ reveals a high type half the time and otherwise pools, lowering $S_2$'s posterior to the discount threshold; Figure~\ref{fig:region} draws the disclosure regions. Since $\mu_0=1/2\ge q^*$, this is case (i) of Theorem~\ref{thm:binary}: the low type is excluded and only the rent is paid. At $(\high,\low,\mu_0,\bar\high,\bar\low)=(3,1,3/10,9/10,1/10)$, inside \eqref{eq:sbregion} with $\mu_0<q^*$, case (ii) obtains: $g_\high=2/51$, the low type's supply is distorted to $q_1(\low)=1-(4/5)(2/51)=247/255$, and the deterrence cost $(\low-\mu_0\high)(\bar\high-\bar\low)g_\high=8/2550$ is under two percent of the gross disclosure gain.
\end{example}

\begin{figure}[t]
\centering
\begin{tikzpicture}[scale=6,>=stealth]
  \def\qs{0.33333}
  \def\tq{0.66667}
  \fill[gray!16] (0,\tq) -- (0,1) -- (\qs,1) -- (\qs,\qs) -- cycle;
  \fill[gray!32] (0,\tq) -- (\qs,\tq) -- (\qs,\qs) -- cycle;
  \draw[dashed] (0,0) -- (1,1) node[pos=0.9,below right,scale=0.7] {$\bar\high=\bar\low$};
  \draw[very thin,gray] (\qs,0) -- (\qs,1);
  \draw[very thin,gray] (0,\tq) -- (1,\tq);
  \draw[very thin,gray] (0,\tq) -- (\tq,0);
  \draw[->] (0,0) -- (1.1,0) node[right] {$\bar\low$};
  \draw[->] (0,0) -- (0,1.1) node[above] {$\bar\high$};
  \draw (\qs,0) -- (\qs,-0.02) node[below,scale=0.75] {$\low/\high$};
  \draw (1,0) -- (1,-0.02) node[below,scale=0.75] {$1$};
  \draw (0,\tq) -- (-0.02,\tq) node[left,scale=0.75] {$\low/\mu_0\high$};
  \draw (0,1) -- (-0.02,1) node[left,scale=0.75] {$1$};
  \fill (0,1) circle (0.013) node[right,scale=0.75] {\,perfect positive corr.};
  \fill (0.25,0.5) circle (0.013);
  \node[scale=0.75] at (0.25,0.535) {Ex.~\ref{ex:single}};
  \node[scale=0.8] at (0.115,0.88) {first best};
  \node[scale=0.8] at (0.17,0.615) {second best};
\end{tikzpicture}
\caption{The disclosure region under positive correlation, drawn for $\mu_0=1/2$ and $q^*=\low/\high=1/3$. Disclosure is strictly optimal in the first best on the light region \eqref{eq:fbregion} and in the second best on the darker triangle \eqref{eq:sbregion}; the difference is the high type's rent. Perfect positive correlation $(\bar\low,\bar\high)=(0,1)$ lies in the first-best region but not the second best, so privacy is second-best optimal there (Remark~\ref{rem:cp}). The figure is drawn for $\mu_0\ge q^*$, case (i) of Theorem~\ref{thm:binary}; for $\mu_0<q^*$ the region is identical and the mechanism adds the supply distortion of case (ii). Example~\ref{ex:single} sits inside the second-best region.}
\label{fig:region}
\end{figure}

\subsection{Negative correlation and welfare}\label{ss:negcorr}

Negative correlation, $\bar\high<\bar\low$, reverses the geometry and redraws the incentive structure. Now $q(\mu_1)=\bar\low+\mu_1(\bar\high-\bar\low)$ is decreasing in $\mu_1$, so $S_2$ discounts when she believes the buyer is likely a high first-period type: $q(\mu_1)\le q^*$ iff $\mu_1\ge\pi$, with $\pi$ still given by \eqref{eq:pi}, and the feasibility of a signal aggregates to $\phi_\low\le\phi_\high/\tilde k$ with $\tilde k\equiv\pi(1-\mu_0)/[\mu_0(1-\pi)]>1$ whenever the prior is undiscounted ($\mu_0<\pi$). A high first-period type is now a high second-period type only with the small probability $\bar\high$, so he values the discount less than the low type; the mimicry that matters runs in the same direction as the supply, and which constraint binds turns on whether the low type is served.

\begin{proposition}[Disclosure under negative correlation]\label{prop:negcorr}
Suppose $\bar\high<\bar\low$ and $\bar\high<q^*<q(\mu_0)$. Whenever disclosure is optimal the optimal signal reveals the low first-period type with probability $g_\low=1-\mu_0(1-\pi)/[\pi(1-\mu_0)]$ and pools the rest at posterior $\pi$; $q_1(\high)=1$ throughout.
\begin{itemize}[leftmargin=1.6em,itemsep=0.1em,topsep=0.2em]
\item[(i)] If $\mu_0\ge\low/\high$, informative disclosure is strictly optimal everywhere on the regime, at zero information rent, with $q_1(\low)=(\bar\low-\bar\high)/\tilde k>0$ and both types' payoffs zero; at $\mu_0=\low/\high$ the rent-paying mechanism of (ii) ties.
\item[(ii)] If $\mu_0<\low/\high$, informative disclosure is strictly optimal if and only if
\begin{equation}\label{eq:negB}
(\high-\low)\Big[\mu_0\bar\high+\frac{\bar\low-\mu_0\bar\high}{\tilde k}\Big]\;>\;(\low-\mu_0\high)(\bar\low-\bar\high)\Big(1-\frac{1}{\tilde k}\Big),
\end{equation}
and privacy is strictly optimal on the open, positive-measure subset of the regime where \eqref{eq:negB} fails strictly. When it holds the optimal mechanism sets $q_1(\low)=1-(\bar\low-\bar\high)(1-1/\tilde k)$ and leaves the high type the rent $(\high-\low)[1-(\bar\low-\bar\high)]$, strictly dominating exclusion of the low type.
\end{itemize}
\end{proposition}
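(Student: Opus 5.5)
I will reduce the problem to a finite-dimensional linear program in the discount odds $(\phi_\high,\phi_\low)$ and supplies $(q_1(\high),q_1(\low))$, mirroring the second-best analysis behind Theorem~\ref{thm:binary} with the roles of the types reversed. By Lemma~\ref{lem:cont}, the buyer of type $\theta_1$ reporting $b$ gets $\theta_1 q_1(b)-p_1(b)+\bar\theta\,(\high-\low)\phi_b$, where $\bar\theta$ is $\bar\high$ or $\bar\low$. Under negative correlation the discount region is $\mu_1\ge\pi$ with $\mu_0<\pi$. The Bayes-plausibility aggregation then gives the single feasibility constraint $\phi_\low\le\phi_\high/\tilde k$. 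Any feasible $(\phi_\high,\phi_\low)$ is attained by a signal with two messages, one at posterior $0$ and one at posterior $\pi$. Since only $\phi_b$ enters payoffs, this settles essential uniqueness of the signal once the optimal $\phi$'s are pinned down.

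Revenue equals total surplus minus rents. Let $u_b$ be the type-$b$ payoff and write the two incentive constraints:
\begin{equation*}
u_\high-u_\low\ge(\high-\low)q_1(\low)-(\high-\low)(\bar\low-\bar\high)\phi_\low,
\end{equation*}
\begin{equation*}
u_\low-u_\high\ge-(\high-\low)q_1(\high)+(\high-\low)(\bar\low-\bar\high)\phi_\high .
\end{equation*}
The low type values the discount more, so the discount term in the high type's constraint enters with the favorable sign. It offsets the supply rent, and the high type's rent is $(\high-\low)\max\{0,\,q_1(\low)-(\bar\low-\bar\high)\phi_\low\}$. The combined monotonicity condition becomes $q_1(\high)-(\bar\low-\bar\high)\phi_\high\ge q_1(\low)-(\bar\low-\bar\high)\phi_\low$. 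Revenue is
\begin{equation*}
\mu_0\high q_1(\high)+(1-\mu_0)\low q_1(\low)+(\high-\low)\bigl[\mu_0\bar\high\phi_\high+(1-\mu_0)\bar\low\phi_\low\bigr]-\mu_0\,\mathrm{rent}.
\end{equation*}
Setting $q_1(\high)=1$ is immediate, since raising it relaxes monotonicity and adds surplus. Taking $\phi_\high=1$ is weakly optimal, because it raises surplus and relaxes the feasibility bound on $\phi_\low$; the monotonicity slack it consumes must be checked. With $\phi_\high=1$, feasibility caps $\phi_\low\le1/\tilde k$, which is the revealing rule $g_\low$ stated in the proposition.

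Case (i), $\mu_0\ge\low/\high$. I will choose $q_1(\low)$ so that the low type's rent bracket is exactly zero: $q_1(\low)=(\bar\low-\bar\high)\phi_\low=(\bar\low-\bar\high)/\tilde k$. Monotonicity then reads $1-(\bar\low-\bar\high)\ge0$, which holds. The rent is zero, and raising $q_1(\low)$ further trades $(1-\mu_0)\low$ against $\mu_0(\high-\low)$ per unit. That trade is unprofitable iff $\mu_0\high\ge\low$, with indifference at equality, which gives the stated tie. Strict optimality of disclosure follows because privacy ($\phi\equiv0$) yields at most the case-(i) revenue without the strictly positive term $(\high-\low)[\mu_0\bar\high+(1-\mu_0)\bar\low/\tilde k]$, together with the extra low-type supply it permits rent-free.

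Case (ii), $\mu_0<\low/\high$. Serving the low type is valuable at the margin, so I will push $q_1(\low)$ up until monotonicity binds: $q_1(\low)=1-(\bar\low-\bar\high)(1-\phi_\low)$. The high type then earns rent $(\high-\low)[1-(\bar\low-\bar\high)]$. Revenue becomes linear in $\phi_\low$. The step I expect to be the main obstacle is the global comparison among three mechanisms. The first is the interior one with $\phi_\low=1/\tilde k$. The second is exclusion of the low type with disclosure, as in case (i). The third is privacy, whose revenue is $\max\{\mu_0\high,\low\}=\low$ here. I will compute the difference between the binding-monotonicity disclosure mechanism and privacy explicitly. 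Gross disclosure value $(\high-\low)[\mu_0\bar\high+\bar\low/\tilde k]$ after the $(1-\mu_0)$ weights are absorbed through $\tilde k$, minus the supply lost, $(\low-\mu_0\high)(\bar\low-\bar\high)(1-1/\tilde k)$, should reduce to exactly \eqref{eq:negB}. I will then show directly that the serving mechanism beats the exclusion mechanism whenever $\mu_0<q^*$, by an identity analogous to \eqref{eq:identity}. Finally, to establish positive measure of the privacy set, I will exhibit a point where \eqref{eq:negB} fails strictly, for instance $\bar\high\to0$ and $\bar\low\to1$ with $\mu_0$ small, where the left side shrinks relative to the right. Openness follows from continuity of both sides.
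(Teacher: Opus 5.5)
\textbf{Gap in case (ii): the step that fixes $\phi_\high=1$.} Several parts of your proposal match the paper's proof:
\begin{itemize}[leftmargin=1.6em,itemsep=0.1em,topsep=0.2em]
\item the setup: the incentive constraints, the rent $(\high-\low)\max\{0,q_1(\low)-(\bar\low-\bar\high)\phi_\low\}$, the monotonicity constraint, and $q_1(\high)=1$;
\item your treatment of case (i);
\item the algebra turning the disclosure-minus-privacy difference into \eqref{eq:negB};
\item the small-$\mu_0$ witness for the privacy set.
\end{itemize}
The gap is your claim that $\phi_\high=1$ is weakly optimal, which is false in case (ii). With $\mu_0<q^*$ the low type's supply sits at the monotonicity cap $q_1(\low)=1-(\bar\low-\bar\high)(\phi_\high-\phi_\low)$. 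Raising $\phi_\high$ at fixed $\phi_\low$ lowers this cap. Net of the rent it saves, that costs $(\low-\mu_0\high)(\bar\low-\bar\high)$ per unit, which is the origin of the right side of \eqref{eq:negB}. Whenever \eqref{eq:negB} fails, $\phi_\high=1$ is strictly worse than $\phi_\high=0$. Your fallback compares three particular mechanisms and says nothing about signals with $\phi_\high\in(0,1)$. So it does not show the optimum is at an endpoint, and it does not deliver the ``if and only if''.

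\textbf{The missing step.} Parametrize by $s=\phi_\high\in[0,1]$, as the paper does.
\begin{itemize}[leftmargin=1.6em,itemsep=0.1em,topsep=0.2em]
\item For fixed discount odds with $\mu_0<q^*$, revenue is strictly increasing in $q_1(\low)$ on both sides of the rent kink, with slopes $(1-\mu_0)\low$ and $\low-\mu_0\high$. So $q_1(\low)$ is at the cap for every $s$. This also makes your separate comparison with the zero-rent corner unnecessary.
\item The coefficient on $\phi_\low$ is then $(1-\mu_0)[\low(\bar\low-\bar\high)+(\high-\low)\bar\low]>0$, so $\phi_\low=s/\tilde k$.
\item Revenue becomes $\low+s\,(\mathrm{LHS}-\mathrm{RHS})$ of \eqref{eq:negB}. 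This is linear in $s$ and equals the privacy revenue $\low$ at $s=0$, which gives both directions at once.
\end{itemize}

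\textbf{Smaller corrections.}
\begin{itemize}[leftmargin=1.6em,itemsep=0.1em,topsep=0.2em]
\item Not every feasible $(\phi_\high,\phi_\low)$ is implemented by messages at posteriors $0$ and $\pi$; only $(1,1/\tilde k)$ is.
\item ``Only $\phi_b$ enters payoffs'' does not give essential uniqueness, which concerns $S_2$'s posteriors. Argue instead as follows. At $(1,1/\tilde k)$ the discount event's aggregate posterior is exactly $\pi$, while each discount message has posterior at least $\pi$, so all equal $\pi$. Non-discount messages have zero likelihood under the high report, hence posterior $0$.
\item The case-(i) mechanism serves the low type partially; it does not exclude him. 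The serving mechanism beats it only weakly at $(\bar\high,\bar\low)=(0,1)$, where the two coincide. It beats true exclusion, $q_1(\low)=0$, strictly.
\end{itemize}
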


On this regime a discount is possible but not free: $q(\mu_0)>q^*$ means the prior is undiscounted, while $\bar\high<q^*$ means a sufficiently high-type pool can be pushed below the threshold. The pool posterior $\pi$ is exactly the belief at which $S_2$ is indifferent, so the pooled report just induces a discount. The contrast with Theorem~\ref{thm:binary} is instructive. In case (i) disclosure is entirely free of incentive cost---the discount flows to the report whose supply is already full---and the only trace of screening is the partial service of the low type, calibrated so the cheap low contract does not tempt the high type. In case (ii), serving the low type forces a rent the disclosure itself cuts by exactly $(\high-\low)(\bar\low-\bar\high)$; whether disclosure pays turns on \eqref{eq:negB}, and privacy survives on a positive-measure set of imperfectly negatively correlated environments---a margin the first best, which discloses throughout the regime, does not see.

\begin{remark}[Perfect negative correlation]\label{rem:perfneg}
At $(\bar\high,\bar\low)=(0,1)$ the regime reads $q^*<1-\mu_0$; outside it the prior is discounted and no informative signal raises revenue. Within it, in case (i), $\mu_0\ge q^*$, disclosure is strictly optimal---the disclosure case of \citet{CalzolariPavan2006} under perfect negative correlation; in case (ii) the condition \eqref{eq:negB} decides, and privacy can be optimal even here.
\end{remark}

\begin{corollary}[Welfare incidence]\label{cor:welfare}
Whenever disclosure is optimal, a low first-period type's participation constraint binds and he is exactly indifferent to disclosure. The high type's stake is signed by the correlation: under positive correlation he strictly gains---his rent rises from $0$ to $(\high-\low)(\bar\high-\bar\low)$ in case (i) of Theorem~\ref{thm:binary}, and from $(\high-\low)$ to $(\high-\low)[1+(\bar\high-\bar\low)(1-g_\high)]$ in case (ii). Under negative correlation he is indifferent in case (i) of Proposition~\ref{prop:negcorr}, and strictly loses in case (ii), where disclosure cuts his rent by exactly $(\high-\low)(\bar\low-\bar\high)$.
\end{corollary}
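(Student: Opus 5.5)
The plan is to read the corollary off the optimal mechanisms already characterized in Theorem~\ref{thm:binary} and Proposition~\ref{prop:negcorr}. In each case I will compare the buyer's equilibrium payoff under the optimal disclosure mechanism with his payoff under the privacy benchmark, meaning the optimal mechanism with an uninformative $g$. Throughout we are on the regime $q(\mu_0)>q^*$, since otherwise the prior is already discounted and disclosure is never optimal. Under privacy $S_2$ therefore posts $\high$, and by Lemma~\ref{lem:cont} every type's continuation payoff is zero. The privacy benchmark is then the standard binary monopoly problem: for $\mu_0\ge\low/\high$ the seller prices at $\high$ and excludes the low type, so both types earn zero; for $\mu_0<\low/\high$ the seller prices at $\low$ and serves both, so the high type earns $\high-\low$ and the low type earns zero. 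At $\mu_0=\low/\high$ the two benchmarks tie, and I will treat that point by continuity, or as the stated tie.

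\emph{Low type.} I first argue that the low type's participation constraint binds in every optimal disclosure mechanism. If it were slack, $S_1$ could raise $p_1(\low)$ slightly. That tightens only the high type's mimicry constraint in the helpful direction and relaxes nothing harmful, so revenue would strictly rise. In the case-(i) mechanisms where $q_1(\low)=0$, the low type's transfer is exactly his valuation of the discount prospect, $\bar\low(\high-\low)\phi_\low$, as in the example. Hence his total payoff, trade plus continuation $\bar\low\E_g[Q\mid\low]$ net of price, is zero. That matches his zero payoff under privacy, so he is exactly indifferent.

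\emph{High type, positive correlation.} In case (i) of Theorem~\ref{thm:binary}, the rent is $(\high-\low)(\bar\high-\bar\low)$, against $0$ under privacy because $\mu_0\ge\low/\high$. In case (ii), the rent is $(\high-\low)[q_1(\low)+\bar\high-\bar\low]$. Substituting $q_1(\low)=1-(\bar\high-\bar\low)g_\high$ gives $(\high-\low)[1+(\bar\high-\bar\low)(1-g_\high)]$, against $\high-\low$ under privacy. Both comparisons are strict because $\bar\high>\bar\low$ and $g_\high<1$. The latter holds because $\pi>0$ on \eqref{eq:sbregion}, which follows from $\bar\low<q^*$.

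\emph{High type, negative correlation.} In case (i) of Proposition~\ref{prop:negcorr}, both types' payoffs are zero, matching privacy at $\mu_0\ge\low/\high$, so the high type is indifferent. In case (ii), the rent is $(\high-\low)[1-(\bar\low-\bar\high)]$, against $\high-\low$ under privacy, so disclosure cuts it by exactly $(\high-\low)(\bar\low-\bar\high)>0$.

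The only step needing care is confirming that the correct privacy comparator is the one with the same service regime: exclusion when $\mu_0\ge q^*$ and pooling at $\low$ otherwise. I also need to check that the rents quoted in the theorem are the high type's total payoff, continuation value included, and not just the first-period rent. I expect this to be the main obstacle. I would verify it from the binding high-type incentive constraint. That constraint equates his truthful payoff to his mimicry payoff, which is the low type's zero payoff plus the gain $(\high-\low)[q_1(\low)+(\bar\high-\bar\low)\phi_\low]$. Evaluating this at the optimal $\phi_\low=1$ and $\phi_\high=1-g_\high$ reproduces the stated expressions.
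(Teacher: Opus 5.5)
Your argument is the paper's own proof. It reads the corollary off the mechanisms of Theorem~\ref{thm:binary} and Proposition~\ref{prop:negcorr}. The privacy comparator is exclusion of the low type when $\mu_0\ge\low/\high$ (high-type rent $0$) and pooling at price $\low$ when $\mu_0<\low/\high$ (high-type rent $\high-\low$). The high type's total payoff, continuation included, is taken from the binding high-type constraint. Your rent comparisons, including the strictness check through $\pi\in(0,1)$ so that $1-g_\high=k>0$, match the paper line for line.

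One local step needs repair: your reason that the low type's participation constraint binds. Raising $p_1(\low)$ alone lowers $u_\low$ and so raises $u_\high-u_\low$. That tightens the low type's own constraint $u_\high-u_\low\le A_\high$, which binds (together with \eqref{eq:M}) in case (ii) of Theorem~\ref{thm:binary}. There your deviation would make the low type take the high contract. The fix is that, for a fixed allocation, the rents enter the constraints only through $u_\low,u_\high\ge0$ and $A_\low\le u_\high-u_\low\le A_\high$. Revenue falls in each rent, so the unique rent-minimizing point is $u_\low=0$, $u_\high=\max\{0,A_\low\}$. This is Step~1 of the proof of Theorem~\ref{thm:binary} and the corresponding step in the proof of Proposition~\ref{prop:negcorr}, and you can simply cite it. With that, the low type's indifference holds in all four cases, not only in the case-(i) mechanism you spelled out.
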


The welfare incidence falls entirely on the high type, and its sign---gain under positive correlation, loss under negative---turns on the same correlation structure that governs whether disclosure occurs. This within-buyer incidence is the seed of the cross-buyer externality of Section~\ref{sec:pop}: when buyers' types are correlated, the disclosure that helps or hurts $S_1$'s own counterparty also moves the price faced by every other buyer.
\section{Continuous Types: Rationed Disclosure}\label{sec:engine}

The binary model prices incentive compatibility through a rent and, when the low type is served, a supply distortion. With a continuum of types the same force acquires a sharper form, and solving for it is the technical heart of the paper. Let the buyer's first-period type be $\theta\in[0,1]$ with c.d.f.\ $F$, density $f>0$, equal to his valuation for $S_1$'s good, and let $\beta(\theta)=\Prob(\theta_2=\high\mid\theta)$ be the chance he values $S_2$'s good highly. After a signal $s$, $S_2$ holds a posterior over $\theta$ and discounts---posts $\low$---if and only if $\E[\beta(\theta)\mid s]\le q^*$.

\begin{condition}[Linear exposure]\label{cond:le}
$\beta(\theta)=\beta_0+\beta_1\theta$ with $\beta_1>0$, $\beta_0\ge0$, $\beta_0+\beta_1\le1$.
\end{condition}

\noindent Linear exposure is the continuum counterpart of the binary correlation structure---the two-point case is $\beta(0)=\bar\low$, $\beta(1)=\bar\high$---and the same affine form is what the population block of Section~\ref{sec:pop} assumes of cross-buyer inference, so one structure runs the whole paper. It is load-bearing, not cosmetic: Example~\ref{ex:nonaffine} exhibits a strictly convex $\beta$ under which the reduction below fails in both directions. We also maintain a strictly increasing virtual value, $\psi(\theta)=\theta-(1-F(\theta))/f(\theta)$ (implied by a strictly monotone hazard rate), and the disclosure-relevant regime $\beta(0)<q^*<\E[\beta(\theta)]$: a discount can be manufactured but does not occur freely.

Seller $S_1$ offers a direct mechanism $\{x(b),t(b),g(\cdot\mid b)\}$: a trade probability $x$, a payment $t$, and a signal $g$ to $S_2$, all conditioned on the reported type $b$. Writing $\phi(b)\in[0,1]$ for the discount probability the signal induces, the buyer of true type $\theta$ who reports $b$ obtains
\begin{equation}\label{eq:contpayoff}
  x(b)\,\theta - t(b) + (\high-\low)\,\beta(\theta)\,\phi(b).
\end{equation}

\subsection{A one-dimensional reduction}

Under linear exposure, \eqref{eq:contpayoff} reorganizes: with $c\equiv(\high-\low)\beta_1$,
\[
x(b)\,\theta+(\high-\low)\beta(\theta)\phi(b)-t(b)
=\theta\,\big[\underbrace{x(b)+c\,\phi(b)}_{\textstyle z(b)}\big]-\big[t(b)-(\high-\low)\beta_0\phi(b)\big].
\]
The report matters to the buyer only through the bundle $z=x+c\phi$ and an adjusted transfer. The two-instrument screening problem is therefore a one-dimensional screening problem in $z$, and the textbook characterization applies exactly.

\begin{lemma}[The $z$-reduction]\label{lem:zred}
Under Condition~\ref{cond:le}, a mechanism is incentive compatible if and only if $z=x+c\,\phi$ is nondecreasing and transfers satisfy the envelope formula. The buyer cannot be offered a discount schedule that falls in his report unless his supply rises by $c$ times the drop: a discount withdrawn at stationary supply invites every higher type to buy into the pool.
\end{lemma}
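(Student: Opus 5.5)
The plan is to reduce the claim to the textbook one-dimensional characterization of \citet{Myerson1981} by a change of variables in the instruments, and then verify that nothing is lost in the change. Under Condition~\ref{cond:le}, the payoff \eqref{eq:contpayoff} of type $\theta$ reporting $b$ is rewritten exactly as $\theta z(b)-\tau(b)$, where $z(b)=x(b)+c\phi(b)$ with $c=(\high-\low)\beta_1$, and $\tau(b)=t(b)-(\high-\low)\beta_0\phi(b)$. This identity is the display preceding the lemma: expand $\beta(\theta)=\beta_0+\beta_1\theta$ and collect the terms in $\theta$. The crucial observation is that $\theta$ enters only through a single product $\theta z(b)$, with no residual dependence on $b$ outside $(z,\tau)$. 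Therefore the incentive constraints $U(\theta)\ge \theta z(b)-\tau(b)$ for all $\theta,b$, with $U(\theta)=\theta z(\theta)-\tau(\theta)$, are literally the incentive constraints of a single-good screening problem with allocation $z$ and transfer $\tau$. The type set is $[0,1]$, and the payoff is linear and single-crossing in $(\theta,z)$.

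For necessity, I would add the two IC inequalities for a pair $\theta>\theta'$ (truth versus mimicking the other). This gives $(\theta-\theta')[z(\theta)-z(\theta')]\ge0$, so $z$ is nondecreasing. Then, since $U$ is the pointwise supremum of affine functions of $\theta$ with slopes $z(b)$, it is convex. It is absolutely continuous with $U'(\theta)=z(\theta)$ almost everywhere, which yields the envelope formula
\[
U(\theta)=U(0)+\int_0^\theta z(s)\,ds ,
\]
and pins $t$ through $\tau$. For sufficiency, given $z$ nondecreasing and $U$ defined by the envelope formula, I would compute
\[
U(\theta)-[\theta z(b)-\tau(b)]=\int_b^\theta [z(s)-z(b)]\,ds\ge0
\]
in both orders of $\theta$ and $b$. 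This is the standard check.

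The step needing care, rather than difficulty, is confirming that the reduction loses no constraint specific to the persuasion side. The feasibility of $\phi$ as an induced discount probability (the posterior constraint with respect to $q^*$) and the bounds $x,\phi\in[0,1]$ do not involve the buyer's report incentives. They restrict which pairs $(x,\phi)$ are available, but not incentive compatibility itself. IC depends on $(x,\phi,t)$ only through $(z,\tau)$, so the equivalence holds mechanism by mechanism.

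Finally, I would read off the interpretive sentence. If $\phi$ drops by $\Delta$ between reports $b<b'$, monotonicity of $z$ requires $x(b')-x(b)\ge c\Delta$. Otherwise $z(b')<z(b)$, and the pair $\theta=b'$ strictly prefers report $b$, i.e.\ buys into the pool.
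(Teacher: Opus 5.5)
Your proposal is correct and follows the paper's proof. You use the same rewriting of the payoff as $\theta z(b)-\tau(b)$, with $\tau=t-(\high-\low)\beta_0\phi$, to reduce incentive compatibility to one-dimensional screening, and you spell out the standard monotonicity, envelope and sufficiency steps that the paper only cites from Myerson and Milgrom--Segal. One small correction to your gloss on the interpretive sentence: when $z(b')<z(b)$ the two constraints between $b$ and $b'$ cannot both hold, but type $b'$ strictly prefers report $b$ only once type $b$'s own constraint is imposed (e.g.\ via envelope transfers); under arbitrary transfers the violation could instead be $b$ mimicking $b'$.
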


The lemma replaces, and explains, the failure of reveal-or-pool disclosure in the continuum: a rule that pools an interval and reveals the types above it drops $\phi$ from one to zero where $x$ has nowhere to rise, $z$ falls, and the types just above the pool deviate into it. What survives is governed by the reduced program. By the envelope formula and integration by parts, revenue at the participation-binding transfers equals $\E[x\,\psi]+(\high-\low)\,\E[\phi\,v]$, where
\begin{equation}\label{eq:revrep}
  \psi(\theta)=\theta-\frac{1-F(\theta)}{f(\theta)},\qquad
  v(\theta)=\beta(\theta)-\frac{1-F(\theta)}{f(\theta)}\,\beta'(\theta)
\end{equation}
are the virtual value of the trade and the \emph{virtual correlation} of the disclosure---the second-period counterpart, in the sense of \citet{EsoSzentes2017} and \citet{PavanSegalToikka2014}, of the Myersonian first term, here entering not an allocation but the persuasion of a third party. Feasibility of the signal aggregates exactly as in the binary model: every discount message must carry posterior $\beta$-mean at most $q^*$, which by pooling and splitting is equivalent to the single constraint $\E[(\beta(\theta)-q^*)\,\phi(\theta)]\le0$ on the regime $\E[\beta]>q^*$ (Lemma~\ref{lem:obed}). The second best therefore solves
\begin{equation}\label{eq:program}
  \max_{x,\phi\,\in[0,1]}\ \E\big[x\,\psi+(\high-\low)\,\phi\,v\big]
  \quad\text{s.t.}\quad z=x+c\,\phi\ \text{nondecreasing},\quad \E[(\beta-q^*)\,\phi]\le0 .
  \tag{P}
\end{equation}

Under linear exposure the two virtual objects obey an identity that does the heavy lifting: $(\high-\low)\,v-c\,\psi=(\high-\low)\,\beta_0$, a constant, for every type distribution. Attaching a multiplier $\lambda$ to obedience, the relative attractiveness of the discount over the trade at type $\theta$ is
\begin{equation}\label{eq:slam}
  s_\lambda(\theta)=(\high-\low)\,\beta_0-\lambda\,\big(\beta(\theta)-q^*\big),
\end{equation}
strictly decreasing in $\theta$ for every $\lambda>0$: the conflict between what the seller wants to disclose and what obedience permits is single-crossing by construction, with no auxiliary regularity. Types below the crossing $\theta_\lambda=\beta^{-1}(q^*+(\high-\low)\beta_0/\lambda)$---with the convention $\theta_\lambda=1$ when the inverse falls outside $[0,1]$---are cheap to include in the pool, those with $\beta$ below $q^*$ actively diluting it; types above the crossing are expensive.

\subsection{The structure of optimal disclosure}

\begin{theorem}[Rationed disclosure]\label{thm:engine}
Under Condition~\ref{cond:le}, a strictly increasing $\psi$, and $\beta(0)<q^*<\E[\beta]$, program \eqref{eq:program} attains the second best, characterized as follows.
\begin{itemize}[leftmargin=1.6em,itemsep=0.1em,topsep=0.2em]
\item[(i)] \emph{(Construction.)} Any $\lambda^*\ge0$ and feasible pair satisfying complementary slackness that maximize the Lagrangian over monotone bundles solve \eqref{eq:program}; for each $\lambda$, the Lagrangian maximizer is the monotone ironing of the pointwise problem, in which $z^*(\theta)$ maximizes $z\,\psi(\theta)+\phi\,s_{\lambda}(\theta)$ over $z\in[0,1+c]$ with $\phi$ at its admissible bound given $z$, and $z$ pooled at a constant where $z^*$ decreases.
\item[(ii)] \emph{(Anatomy.)} Let $F$ be uniform, $c\le1$, the margins conflict ($\theta^*<\theta_{\lambda^*}$, where $\psi(\theta^*)=0$), and the certainty pool be self-obedient, $\int_0^{\theta_{\lambda^*}}(\beta-q^*)\,dF\le0$. Then, essentially uniquely,
\[
\phi(\theta)=\begin{cases}1 & \theta\le\theta_{\lambda^*}\\ \bar\phi & \theta>\theta_{\lambda^*}\end{cases}
\qquad
x(\theta)=\begin{cases}0 & \theta<\theta^*\\ 1-c\,(1-\bar\phi) & \theta^*\le\theta\le\theta_{\lambda^*}\\ 1 & \theta>\theta_{\lambda^*},\end{cases}
\]
with $\bar\phi\in[0,1)$ and $z$ on a single plateau $\bar z=1+c\,\bar\phi$ above $\theta^*$. Here $(\lambda^*,\bar\phi)$ solves the plateau condition $\int_{\theta^*}^{1}\psi\,dF+\tfrac1c\int_{\theta_{\lambda^*}}^{1}s_{\lambda^*}\,dF=0$---under uniform $F$, $(1-\theta_{\lambda^*})^2=(\high-\low)/(2\lambda^*)$---and binding obedience $\int_0^{\theta_{\lambda^*}}(\beta-q^*)\,dF+\bar\phi\int_{\theta_{\lambda^*}}^{1}(\beta-q^*)\,dF=0$, with self-obedience equivalent to $\bar\phi\ge0$. If the plateau root violates self-obedience but the binding-obedience root is at least $\theta^*$ (uniform $F$: $2(q^*-\beta_0)/\beta_1\ge\theta^*$), the same anatomy holds with $\bar\phi=0$, $\theta_{\lambda^*}$ pinned by $\int_0^{\theta_{\lambda^*}}(\beta-q^*)\,dF=0$ and the plateau condition relaxed to an inequality. The remaining degenerate sub-case, in which that root too falls below $\theta^*$, is recorded in Remark~\ref{rem:bottompool}.
\item[(iii)] \emph{(Binary model.)} On a two-point support, monotonicity of $z$ is constraint \eqref{eq:M} and the plateau collapses to the supply distortion $q_1(\low)=1-(\bar\high-\bar\low)g_\high$ of Theorem~\ref{thm:binary}(ii).
\end{itemize}
\end{theorem}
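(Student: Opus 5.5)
The plan is to treat \eqref{eq:program} as a linear program over the convex set of monotone bundles and to certify optimality by Lagrangian duality. The set $\{(x,\phi)\in[0,1]^2:\ z=x+c\phi \text{ nondecreasing}\}$ is convex and the objective and the obedience constraint are linear. Hence, by Lemma~\ref{lem:zred} together with Lemma~\ref{lem:obed}, the fact that \eqref{eq:program} attains the second best reduces to the revenue representation \eqref{eq:revrep}, obtained from the envelope formula and integration by parts. For (i), weak duality gives the construction claim directly. If $(x,\phi)$ maximizes $\mathcal L_\lambda=\E[x\psi+(\high-\low)\phi v]-\lambda\E[(\beta-q^*)\phi]$ over monotone bundles, is feasible, and satisfies complementary slackness, then for any feasible $(x',\phi')$ we have objective$(x',\phi')\le\mathcal L_{\lambda}(x',\phi')\le\mathcal L_\lambda(x,\phi)=$ objective$(x,\phi)$.

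To characterize the Lagrangian maximizer, substitute $x=z-c\phi$ and use the identity $(\high-\low)v-c\psi=(\high-\low)\beta_0$. This makes the integrand $z\psi+\phi s_\lambda$. For fixed $z$ the admissible $\phi$ ranges over $[\max(0,(z-1)/c),\min(1,z/c)]$, so $\phi$ sits at the upper bound where $s_\lambda>0$ and at the lower bound where $s_\lambda<0$. The reduced pointwise value in $z$ is then piecewise linear and concave. The monotonicity constraint on $z$ is handled by Myerson/Toikka ironing: on an interval where the pointwise maximizer decreases, replace the problem with its averaged version and pool $z$ at a constant. This delivers (i).

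For (ii), with uniform $F$ we have $\psi(\theta)=2\theta-1$, and $s_\lambda$ is strictly decreasing with crossing at $\theta_\lambda$. Pointwise, below $\theta^*$ the sign of $\psi$ is negative and $s_\lambda>0$, so the solution is $x=0$, $\phi=1$ and $z=c$. On $(\theta^*,\theta_\lambda)$ both signs are positive, so $z=1+c$ with $\phi=1$. Above $\theta_\lambda$, $\psi>0$ but $s_\lambda<0$, so $z=1+c$ with $x=1$ would still require $\phi=1$. The real tension is that keeping $\phi$ low while $z$ stays high is impossible.

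I will use $c\le1$ to check that on $(\theta_\lambda,1]$ the unconstrained choice is $x=1$, $\phi=0$, hence $z=1$, which is lower than $1+c$ just below. That decrease forces ironing. I will then show that the ironed solution is a single plateau $\bar z$ on $[\theta^*,1]$. Below $\theta_\lambda$ the plateau is implemented with $\phi=1$, $x=\bar z-c$. Above it, $x=1$, $\phi=\bar\phi=(\bar z-1)/c$. The level $\bar z$ is chosen by the first-order condition of the averaged Lagrangian, whose derivative in $\bar z$ is $\int_{\theta^*}^1\psi\,dF+\frac1c\int_{\theta_\lambda}^1 s_\lambda\,dF$. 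Setting this to zero gives the plateau condition, which under uniform $F$ evaluates to $(1-\theta_\lambda)^2=(\high-\low)/(2\lambda)$ because $\int_{\theta^*}^1\psi=0$.

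Next, pair this with binding obedience, which pins $\lambda^*$ and $\bar\phi$. Self-obedience is exactly the requirement $\bar\phi\ge0$. When that fails, the corner $\bar\phi=0$ is optimal: the plateau first-order condition becomes an inequality (a KKT condition at the bound), and $\theta_{\lambda^*}$ is read off from obedience, which needs the stated root to be at least $\theta^*$. Essential uniqueness follows from strict monotonicity of $s_\lambda$ and $\psi$, which makes the pointwise maximizer unique off null sets, and from strict dependence of the averaged Lagrangian on $\bar z$.

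The main obstacle is showing that the ironed region is exactly $[\theta^*,1]$ with one plateau rather than some other interval. I would try verifying the ironing conditions directly: compare the integrated reduced marginal from each endpoint, and use linearity of $\psi$ and $s_\lambda$ under uniform $F$ so that each cumulative integral is quadratic and its sign can be checked at once. Finally, (iii) is a specialization. On two points, monotonicity of $z$ is exactly \eqref{eq:M}, the pool is $\phi_\low=1$, $\phi_\high=1-g_\high$, and $x(\high)=1$. The plateau gives $x(\low)+c=1+c(1-g_\high)$ with $c$ replaced by $\bar\high-\bar\low$ in the binary normalization, which recovers $q_1(\low)=1-(\bar\high-\bar\low)g_\high$ from Theorem~\ref{thm:binary}(ii).
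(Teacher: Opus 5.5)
Your architecture is the paper's: weak-duality sufficiency, the inner split through $(\high-\low)v-c\psi=(\high-\low)\beta_0$, a concave piecewise-linear value $V(z;\theta)$, and a single plateau on $[\theta^*,1]$ pinned by the plateau condition and binding obedience. The $\bar\phi=0$ corner and the two-point specialization in (iii), which you get right, also match. But two steps fail as written.

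The first is essential uniqueness. At $\lambda^*$ the Lagrangian is \emph{linear} in the plateau level on $(1,1+c)$, with slope $\int_{\theta^*}^1\psi\,dF+\frac1c\int_{\theta_{\lambda^*}}^1 s_{\lambda^*}\,dF$. The plateau condition says exactly that this slope vanishes. So every level in $[1,1+c]$ maximizes the Lagrangian, and there is no ``strict dependence on $\bar z$.'' For the same reason the first-order condition does not choose $\bar z$. With $\theta_\lambda$ a function of $\lambda$, it pins $\lambda^*$ alone: the increasing curve $1-\sqrt{(\high-\low)/(2\lambda)}$ meets the decreasing $\theta_\lambda$ at most once. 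Obedience then pins $\bar\phi$. Pointwise uniqueness does not help either, since the optimum is not the pointwise maximizer. The working argument runs as follows.
\begin{itemize}
\item The weak-duality chain is tight at any optimum of \eqref{eq:program}. So every optimum maximizes the Lagrangian and, since $\lambda^*>0$, binds obedience.
\item The strict slopes $\tilde\gamma\equiv\psi+s_{\lambda^*}/c>0>\psi$ force $z=c$ below $\theta^*$.
\item Strict negativity of the certificate $M$ (defined below) on the open plateau forbids any increase of $z$ inside $(\theta^*,1)$.
\item The inner split is strict off $\theta_{\lambda^*}$.
\item Binding obedience is strictly monotone in the odds above $\theta_{\lambda^*}$, where $\beta>q^*$, so it pins the level.
\end{itemize}

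The second failure is in the step you flag as the main obstacle, and some of your inputs to it are wrong.
\begin{itemize}
\item The pointwise maximizer on $(\theta_\lambda,1]$ is not $z=1$. Just above $\theta_\lambda$ we have $s_\lambda\approx0$ and $\psi>0$, so $\tilde\gamma>0$ and $z^*=1+c$. This persists until $\tilde\gamma$ crosses zero at some $\hat\theta_\gamma>\theta_\lambda$. That crossing needs $\tilde\gamma$ decreasing, i.e.\ $\lambda^*>2(\high-\low)$, which the conflict hypothesis delivers through the plateau equation.
\item The hypothesis $c\le1$ does something else: it puts $\phi=1$ on the plateau below $\theta_{\lambda^*}$, since $\bar z\ge1\ge c$.
\item Under uniform $F$, $\int_{\theta^*}^1\psi\,dF=\tfrac14$; it is $\int_0^1\psi\,dF$ that vanishes. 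The formula $(1-\theta_\lambda)^2=(\high-\low)/(2\lambda)$ comes from $\tfrac14-\lambda(1-\theta_\lambda)^2/[2(\high-\low)]=0$. Your stated reason would instead force $\theta_\lambda=1$.
\end{itemize}

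The certificate the paper uses is the one your plan gestures at. Take the supergradient selection $\gamma=0$ on $[0,\theta^*)$, admissible at the kink $z=c$ because $\psi<0<\tilde\gamma$. Set $\gamma=\psi$ on $[\theta^*,\theta_{\lambda^*}]$ and $\gamma=\tilde\gamma$ above, and let $M(\hat\theta)=\int_{\hat\theta}^1\gamma\,dF$. Then $M(\theta^*)=0$ is exactly the plateau condition. Also $\gamma$ is continuous at $\theta_{\lambda^*}$, rises from zero, then falls through zero once. Hence $M\le0$ on $[\theta^*,1]$, strictly inside. Concavity and integration by parts then give $\int[V(z;\theta)-V(z^\circ;\theta)]\,dF\le\int M\,dz\le0$ for every nondecreasing $z$.

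In the $\bar\phi=0$ corner, let $T=2(q^*-\beta_0)/\beta_1$ and let $\lambda^\circ$ be the multiplier putting the crossing of $s_\lambda$ at $T$. Here ``KKT at the bound'' hides two checks:
\begin{itemize}
\item $\bar z=1$ is a kink for types above $T$. The plateau inequality must therefore be turned into an admissible selection $\gamma=\psi+\bar t\,s_{\lambda^\circ}/c$ with $\bar t\in(0,1]$ that preserves single crossing.
\item The selection $\gamma=0$ below $\theta^*$ must be re-verified at $\lambda^\circ$, i.e.\ $\psi+s_{\lambda^\circ}/c\ge0$ on $[0,\theta^*)$. This is not automatic: it needs $\beta_1<2\beta_0$, which the paper derives from failed self-obedience together with $T\ge\theta^*$.
\end{itemize}
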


\begin{remark}[Degenerate bottom pool]\label{rem:bottompool}
The worked anatomy of part (ii) covers the conflict case whenever the binding-obedience root $2(q^*-\beta_0)/\beta_1$ is at least $\theta^*$. When it falls below $\theta^*$ the certainty pool would reach past the Myerson cutoff; the same Lagrangian machinery applies and irons a single partial pool at the bottom of the type space, with binding obedience alone pinning its endpoint, while the rationing-at-the-top reading and the comparative statics are unchanged. The parametric environments studied below stay in the worked case, and we do not develop this degenerate one.
\end{remark}

The anatomy inverts the intuition that the seller discloses the types she most wants discounted. The discount pool absorbs the \emph{bottom} of the type space---including types whose virtual correlation is negative---because their low $\beta$ dilutes the pool's posterior and obedience, not virtue, is the scarce resource; when even the full bottom pool would breach obedience, the pool simply ends where its mean reaches the threshold and the rationing degenerates. The types above $\theta_{\lambda^*}$, who value the discount most and whom the relaxed problem would serve first, are \emph{rationed}: they receive the discount with the interior, constant probability $\bar\phi$, and they pay for it with full supply, because the only incentive-compatible way to give a higher type better discount odds is to bundle them into a $z$ the type below would not covet. On the pooled stretch the discount odds are read off the binding bound, $\phi$ sitting at its admissible level given the plateau bundle $z$. Disclosure is rationed at the top, not withheld; the seller's instrument is the odds, not the set.

\begin{proposition}[Deterministic disclosure is strictly suboptimal]\label{prop:converse}
Under the conditions of Theorem~\ref{thm:engine} with $\bar\phi\in(0,1)$, every deterministic disclosure rule---$\phi$ valued in $\{0,1\}$, in particular every rule that pools an interval of types and reveals the rest---is strictly suboptimal.
\end{proposition}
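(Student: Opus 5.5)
The argument leans on the essential uniqueness in Theorem~\ref{thm:engine}(ii). Equivalence in the model's sense pins down the joint distribution of allocation, transfer and $S_2$'s posterior. In particular it pins down the discount probability $\phi(\theta)$ up to a $dF$-null set, because $\phi(\theta)$ is the conditional probability that the posterior lies at or below the threshold. Hence every optimal mechanism has $\phi=\bar\phi$ almost everywhere on $(\theta_{\lambda^*},1]$. If $\bar\phi\in(0,1)$ and $\theta_{\lambda^*}<1$, that interval has positive $F$-measure, since $f>0$. A rule with $\phi$ valued in $\{0,1\}$ therefore differs from every optimal rule on a positive-measure set, so it is not optimal. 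Being feasible at best, its revenue is strictly below the value of \eqref{eq:program}. Interval rules that pool and reveal are a special case.

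The main obstacle is that "essentially unique" must really mean uniqueness of $\phi$ a.e., not only of the bundle $z$. If the uniqueness in Theorem~\ref{thm:engine} were only up to $z$, I would argue directly instead. Fix a deterministic feasible pair $(x,\phi)$ with $\phi\in\{0,1\}$ and compare it with the Lagrangian. Weak duality at $\lambda^*$ gives
\[
\E[x\psi+(\high-\low)\phi v]\le \E[x\psi+(\high-\low)\phi v]-\lambda^*\E[(\beta-q^*)\phi]\le L^*,
\]
where $L^*$ is the Lagrangian maximum over monotone bundles. That maximum equals the optimal value by complementary slackness. Equality throughout would force $(x,\phi)$ to maximize the Lagrangian. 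So the remaining step is to show that the Lagrangian maximizer is unique in $\phi$ on the top segment.

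To get that uniqueness, I would rewrite the objective in terms of $(z,\phi)$ using the identity $(\high-\low)v-c\psi=(\high-\low)\beta_0$. The integrand becomes $z\psi+\phi s_{\lambda^*}$, which is linear in $\phi$ for fixed $z$. On $(\theta_{\lambda^*},1]$ we have $s_{\lambda^*}<0$, so $\phi$ wants to sit at its lowest admissible value given $z$, namely $\phi=\max\{0,(z-1)/c\}$. With $z\equiv\bar z=1+c\bar\phi$ on the plateau, this gives $\phi=\bar\phi$ exactly.

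A deterministic $\phi$ on this segment has two options. It can take $\phi=0$, with $z\le1$, on a positive-measure set. Or it can take $\phi=1$, with $z\ge c$, where by monotonicity $z\ge \bar z$ is needed above any type where $z$ reached that level. Since $z$ must be nondecreasing, either choice moves $z$ off the plateau $\bar z$ on a set of positive measure. The ironing argument in Theorem~\ref{thm:engine}(i) shows that the plateau is the strict maximizer of $\int z\psi\,dF$ plus the $\phi$-term over monotone $z$. That argument uses strictly increasing $\psi$ and the plateau condition holding with $\bar\phi$ interior, so the plateau average of $\psi$ plus the $s$-term is exactly zero and any deviation loses a strictly positive amount. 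This gives a strict gap $L(x,\phi)<L^*$, and therefore strict suboptimality.
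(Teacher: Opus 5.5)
Your first paragraph is a valid proof by citation, and your worry about it is unfounded. Claim~4 in the proof of Theorem~\ref{thm:engine} pins $\phi$ a.e., not just $z$. Every optimum of \eqref{eq:program} must maximize the Lagrangian, and the inner split is strict wherever $s_{\lambda^*}\neq0$, so once binding obedience has fixed the plateau, $\phi$ is determined. This route is shorter than the paper's four-case argument, which in effect re-derives, for deterministic rules, the piece of uniqueness it needs.

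The direct argument you offer as a fallback has a genuine error, and it also misplaces where uniqueness comes from. The Lagrangian maximizer is \emph{not} unique in $\phi$ on the top segment. Keep $z=c$ below $\theta^*$, put a common plateau height $\tilde z\in[1,1+c]$ on $(\theta^*,1]$, and apply the inner split of Claim~2. Then $V(\tilde z;\theta)=\tilde z\psi+s_{\lambda^*}$ below $\theta_{\lambda^*}$ and $\tilde z\psi+(\tilde z-1)s_{\lambda^*}/c$ above it. The aggregate Lagrangian is therefore affine in $\tilde z$ with slope $\int_{\theta^*}^1\psi\,dF+\frac1c\int_{\theta_{\lambda^*}}^1 s_{\lambda^*}\,dF$. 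That slope is exactly the plateau condition, so it is zero. Every constant top odds in $[0,1]$ is a Lagrangian maximizer: the plateau condition says a level shift costs nothing, the opposite of your claim that ``any deviation loses a strictly positive amount.''

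In particular, consider the deterministic rule $\phi=\ind\{\theta\le\theta_{\lambda^*}\}$ with $x=0$ below $\theta^*$, $x=1-c$ on $[\theta^*,\theta_{\lambda^*}]$, and $x=1$ above, i.e.\ the plateau at $\tilde z=1$. It attains $L^*$, so there is no strict gap $L(x,\phi)<L^*$ for it. Its strict suboptimality comes from the \emph{first} inequality in your duality chain, which you discard. The rule is feasible with obedience strictly slack, since $\int_0^{\theta_{\lambda^*}}(\beta-q^*)\,dF=-\bar\phi\int_{\theta_{\lambda^*}}^1(\beta-q^*)\,dF<0$. Hence its objective falls short of $L^*$ by $\lambda^*\bar\phi\int_{\theta_{\lambda^*}}^1(\beta-q^*)\,dF>0$, using $\lambda^*>0$. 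The other deterministic plateau, $\tilde z=1+c$, is $\phi\equiv1$ and violates obedience.

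This is case (d) of the paper's proof; your inner-split and ironing observations cover only cases (a)--(c). Equality throughout your chain forces complementary slackness as well as Lagrangian maximization. It is binding obedience, not strictness of the ironing, that selects $\bar\phi$ from the continuum of Lagrangian maximizers. That same fact is what makes the essential uniqueness you cite in your first paragraph true.
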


\begin{example}[The anatomy, exactly]\label{ex:engine}
Let $\high=5/2$, $\low=1$, so $q^*=2/5$; let $F$ be uniform and $\beta(\theta)=1/5+\theta/2$, so $c=3/4$, $\E[\beta]=9/20>q^*$, and $\beta(0)=1/5<q^*$. Solving the two pinning equations: $\lambda^*=3.822$, $\theta_{\lambda^*}=0.557$, $\bar\phi=0.404$, $\bar z=1.303$. The bottom $56$ percent of types are pooled into the discount with certainty; every type above is discounted with probability $0.404$ and supplied fully; types in $[0.5,0.557]$ are supplied at the interior level $0.553$, the trade the plateau leaves room for. The seller's objective attains $0.3589$; Figure~\ref{fig:engine} draws the anatomy. The best rule whose discount message pools a single interval---all remaining types pooled into the no-discount message---pools $[0.10,0.70]$ and attains $0.3100$, $13.6$ percent short. Deterministic rules of any form must split the pool---the best, with discount set $[0,0.565]\cup[0.877,1]$, attains $0.3501$---and remain $2.5$ percent short: the residual gap is the value of rationing. (A direct linear program over all incentive-compatible mechanisms on fine type grids reproduces the reduced program's value to machine precision and these magnitudes to the grid's resolution; see the replication script.)
\end{example}

\begin{figure}[t]
\centering
\begin{tikzpicture}[xscale=8.6,yscale=3.4,>=stealth]
  \def\ts{0.5}\def\tl{0.557}\def\pb{0.404}\def\xb{0.553}
  \draw[->] (0,0) -- (1.06,0) node[right] {$\theta$};
  \draw[->] (0,0) -- (0,1.18);
  \draw[densely dotted] (\ts,0) -- (\ts,1.06);
  \draw[densely dotted] (\tl,0) -- (\tl,1.06);
  \node[below left,scale=0.78] at (\ts,-0.005) {$\theta^*$};
  \node[below right,scale=0.78] at (\tl,-0.005) {$\theta_{\lambda^*}$};
  \draw (-0.012,1) node[left,scale=0.75] {$1$} -- (0.012,1);
  \draw (-0.012,\pb) node[left,scale=0.75] {$\bar\phi$} -- (0.012,\pb);
  \draw (-0.012,\xb) node[left,scale=0.75] {$\bar z-c$} -- (0.012,\xb);
  \draw[very thick] (0,1) -- (\tl,1);
  \draw[very thick] (\tl,\pb) -- (1,\pb) node[right,scale=0.85] {$\phi(\theta)$};
  \draw[very thick,gray!55!black,densely dashed] (0,0) -- (\ts,0);
  \draw[very thick,gray!55!black,densely dashed] (\ts,\xb) -- (\tl,\xb);
  \draw[very thick,gray!55!black,densely dashed] (\tl,1) -- (1,1) node[below right,scale=0.85,yshift=-1pt] {$x(\theta)$};
  \node[scale=0.78,align=center] at (0.25,0.62) {pool: full discount\\(no trade below $\theta^*$)};
  \node[scale=0.75] at (0.365,0.30) {interior trade};
  \draw[->,thin,gray] (0.44,0.34) -- (0.518,0.53);
  \node[scale=0.78,align=center] at (0.80,0.66) {rationed:\\odds $\bar\phi$, full trade};
\end{tikzpicture}
\caption{Optimal disclosure with continuous types (Theorem~\ref{thm:engine}), at the parameters of Example~\ref{ex:engine}: uniform $F$, $\beta(\theta)=1/5+\theta/2$, $q^*=2/5$. The discount probability $\phi$ (solid) is one on the pool $[0,\theta_{\lambda^*}]$ and rationed at $\bar\phi=0.404$ above; the trade probability $x$ (dashed) is zero below the Myerson cutoff $\theta^*$, interior at $\bar z-c=0.553$ on the plateau's first stretch, and one above $\theta_{\lambda^*}$. The bundle $z=x+c\phi$ is constant at $\bar z=1.303$ from $\theta^*$ on: the plateau is what incentive compatibility costs. Values from the replication certificate.}
\label{fig:engine}
\end{figure}

\begin{example}[Linear exposure is load-bearing]\label{ex:nonaffine}
Let $\beta(\theta)=\theta^2$ on three types $\theta\in\{0,1/2,1\}$ and $\high-\low=1$. The allocation $x=(0.73,\,0.99,\,0.69)$, $\phi=(0.90,\,0.41,\,0.95)$ is incentive compatible---transfers exist by cyclical monotonicity---yet no constant $c$ makes $x+c\,\phi$ monotone, so the necessity direction fails; and the allocation $x=(0,\,0.05,\,1)$, $\phi=(1,\,1,\,0.05)$, whose bundle $x+\phi$ is monotone and which is implementable under the endpoint secant $\beta(\theta)=\theta$, violates cyclical monotonicity under $\beta=\theta^2$ (the middle type gains $0.2375$ in the two-cycle with the top type), so the sufficiency direction fails too. Under nonlinear exposure neither direction of Lemma~\ref{lem:zred} survives, and the one-dimensional reduction is genuinely a property of the linear-exposure class.
\end{example}

\begin{remark}[What replaces interval pooling]\label{rem:structure}
A natural conjecture, suggested by the relaxed problem, is that incentive compatibility carves the discount pool out of the interior of the type space, excluding low types whose virtual correlation is negative. The opposite is true at the full optimum: the bottom types are the pool's cheapest diluters and are included throughout the anatomy of Theorem~\ref{thm:engine}(ii); what incentive compatibility does is cap the \emph{odds} at the top. The conjecture fails because it prices the disclosure margin alone; the binding interaction is between the disclosure and the trade---the plateau in $z$.
\end{remark}
\section{Correlated Buyers and the Privacy Externality}\label{sec:pop}

The bilateral analysis treats the buyer as isolated. We now let $S_1$'s customer be drawn from a population whose data are correlated across individuals, and we trace the consequence of disclosure for those other individuals.

\subsection{A population of correlated buyers}

Buyers' types share a common component. Let $\omega$ be a common state and let there be a continuum of buyers $i\in[0,1]$ whose first-period types are conditionally independent given $\omega$, with $\Prob(\theta_1^i=\high\mid\omega)$ increasing in $\omega$. Seller $S_1$ trades with a single \emph{source} buyer, elicits her type through the mechanism of Section~\ref{sec:single}, and discloses a signal; because the source's type is informative about $\omega$, the signal shifts $S_2$'s belief about every buyer. Seller $S_2$ serves the whole population, pricing each buyer from her posterior.

\begin{assumption}[Population structure]\label{ass:pop}
Buyers' first-period types are conditionally independent given $\omega$; the signal is independent of $\omega$ conditional on the source's type; seller $S_2$ prices each buyer using only her own posterior; and each buyer $i$'s exposure is linear: with $\mu_A(s)=\Prob(\theta_1^A=\high\mid s)$ the posterior about the source,
\begin{equation}\label{eq:qi}
  q_i(s)=q_i^0+\rho_i\,(\bar\high-\bar\low)\,(\mu_A(s)-\mu_0),\qquad q_i^0\equiv\bar\low+\mu_i^0(\bar\high-\bar\low),
\end{equation}
where $q_i^0$ is $i$'s no-disclosure value and $\rho_i\ge 0$ her loading on the source. The population is described by the joint distribution $F$ of $(q_i^0,\rho_i)$, supported on pairs for which \eqref{eq:qi} is a posterior, $q_i(s)\in[0,1]$ for every $\mu_A\in[0,1]$; the source has $\rho=1$ and a buyer independent of the source has $\rho_i=0$. In the continuum model of Section~\ref{sec:engine} the same structure reads $q_i(s)=q_i^0+\rho_i\,(\E[\beta(\theta_A)\mid s]-\E[\beta(\theta_A)])$, the loading now on the source's posterior $\beta$-mean.
\end{assumption}

\begin{remark}[When linearity is free]\label{rem:colinear}
For any \emph{two-message} signal, Bayes plausibility places the two induced pairs (source posterior, buyer-$i$ posterior) and the prior pair on a common line, so the linear form \eqref{eq:qi} is automatic given conditional independence, with $\rho_i$ read off the two messages. Linearity is a substantive restriction only for richer signals, where it is the one-factor loading used throughout the data-externality literature \citep{AcemogluMMO2022,BergemannBonattiGan2022}.
\end{remark}

\begin{example}[Linearity is load-bearing beyond two messages]\label{ex:loading}
Let the source's type take three values with $\beta$-values $(0.2,0.4,0.6)$, each equally likely, and let a buyer's conditional means be $(0.1,0.5,0.6)$---conditionally independent given the state, but not affine in the source's $\beta$. In the environment of Example~\ref{ex:single}, under full revelation the buyer's true wedge is $0.300$, while the best linear-exposure approximation yields $0.283$: a planner pricing the externality through \eqref{eq:qi} misses a computable share of it. All results below hold exactly under Assumption~\ref{ass:pop} and approximately at its boundary.
\end{example}

\begin{proposition}[The private optimum is unchanged]\label{prop:private}
Seller $S_1$'s privately optimal disclosure rule is independent of the population $F$ and coincides with the single-buyer optimum. The positive theory of disclosure is the bilateral theory; all of the new content is normative.
\end{proposition}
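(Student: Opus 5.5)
The plan is to show that, under Assumption~\ref{ass:pop}, $S_1$'s objective as a function of the mechanism $\{p_1,q_1,g\}$ is literally the single-buyer objective. The population $F$ enters it nowhere. I would do this by tracing what $S_1$'s revenue depends on. By construction $S_1$ trades only with the source, so its payoff is $\E[p_1(b_1)]$ for that one buyer, and the constraints are the source's incentive compatibility and individual rationality. The first step is to check that the source's constraints involve only the source's own continuation payoff. Assumption~\ref{ass:pop} says $S_2$ prices each buyer from her own posterior, so the source's downstream price depends on $q_A(s)=\bar\low+\mu_A(s)(\bar\high-\bar\low)$, with $\rho=1$ in \eqref{eq:qi}. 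Her continuation payoff is therefore $Q(\mu_A(s))$ weighted by $\bar\high$ or $\bar\low$, exactly as in Lemma~\ref{lem:cont}. The prices $S_2$ quotes to other buyers $i$, which move through $q_i(s)$, give the source no payoff and do not enter her incentives.

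The second step is to confirm that $\mu_A(s)$ is computed from $g$ and the prior $\mu_0$ alone, by Bayes' rule on the source's report. This holds even though $S_2$ also updates about $\omega$ and about others. The justification is that the signal is independent of $\omega$ given the source's type, and that the source's marginal prior is $\mu_0$. Nothing about other buyers' types feeds back into the posterior about the source, because $S_2$ observes only $s$ and not other buyers' first-period data. It follows that the feasible set of $(q_1,p_1,g)$ and the revenue functional coincide with those of Section~\ref{sec:single}, or of \eqref{eq:program} in the continuum case, where the source's posterior $\beta$-mean likewise depends only on $g$ and $F$ of the source's type. The optima then coincide: Theorem~\ref{thm:binary}, and Theorem~\ref{thm:engine} in the continuum, are invoked verbatim, including essential uniqueness.

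The third step is to note why $F$ drops out. It governs only $\{q_i(s)\}_{i\ne A}$, which enter $S_2$'s profit and other buyers' surplus but not $S_1$'s revenue or the source's constraints. $S_1$ neither sells to them nor, in this section, sells data, and that channel is deferred to Proposition~\ref{prop:sell}. Hence the argmax is invariant to $F$.

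The main obstacle is the second step: making rigorous that $S_2$'s pricing of the source is unaffected by the population. One worry is that $S_2$ might condition her price for the source on aggregate population information, or that the source's prior might itself shift with $\omega$ in a way correlated with the signal. I would resolve this by leaning explicitly on the "own posterior" clause and on the conditional-independence clause of Assumption~\ref{ass:pop}, and by marginalizing over $\omega$ to show the source's prior is $\mu_0$. Equilibrium selection is a second worry. The paper's tie convention at $q^*$ applies buyer by buyer, so no coordination across buyers arises.
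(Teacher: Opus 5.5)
Your argument is correct and is essentially the one the paper relies on (its proof is deferred to the Supplemental Material): under Assumption~\ref{ass:pop}, $F$ enters only the posteriors $q_i(s)$ about non-source buyers. These appear in neither $S_1$'s revenue nor the source's incentive and participation constraints, so the problem is that of Section~\ref{sec:single} (or program \eqref{eq:program}), and Theorems~\ref{thm:binary} and~\ref{thm:engine} apply verbatim. You do not need essential uniqueness, since invariance of the whole argmax set already gives the claim; this matters in the continuum, where Proposition~\ref{prop:implementation} leaves the signal's implementation indeterminate.
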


\subsection{The wedge and its representation}

Fix a disclosure rule $g$. By \eqref{eq:qi} the signal induces, for each exposed buyer, a mean-preserving spread of $S_2$'s posterior $q_i$ about her. Let $\Sigma(q)$ denote buyer-side surplus---the buyer's consumer surplus plus $S_2$'s profit on her---when $S_2$'s posterior is $q$:
\begin{equation}\label{eq:sigma}
  \Sigma(q)=\underbrace{\low+(\high-\low)q}_{\text{full-trade surplus}}\;-\;\underbrace{\low(1-q)\,\ind\{q>q^*\}}_{\text{deadweight loss }D(q)},
\end{equation}
affine on each side of $q^*$ with a downward jump $D$ at the threshold: the loss from excluding the buyer's low type, who trades only under the discount. Buyer $i$'s wedge and the aggregate wedge are
\begin{equation}\label{eq:wedge}
  W_i(g)=\E_s\big[\Sigma(q_i(s))\big]-\Sigma(q_i^0),\qquad
  W(g;F)=\int W_i(g)\,dF(q_i^0,\rho_i).
\end{equation}
By Proposition~\ref{prop:private} the private value of $g$ is independent of $F$, so $W$ is the entire population-dependent component of the social value of disclosure. Its structure follows from a representation that does not depend on the binary parametrization at all.

\begin{proposition}[Wedge representation]\label{prop:dwl}
Let the downstream type space be finite, let the downstream seller choose a mechanism after forming a belief $P$ about the agent, and let $\Sigma(P)$ be buyer-side surplus under a fixed selection from her optimal mechanisms; write $\bar\Sigma(P)$ for the full-information surplus, linear in $P$, and $D\equiv\bar\Sigma-\Sigma\ge0$ for the downstream deadweight loss. Every Bayes-plausible disclosure inducing posteriors $\{P_i(s)\}$ with mean $P_i^0$ satisfies
\begin{equation}\label{eq:generalwedge}
  W_i=\E_s\big[\Sigma(P_i(s))\big]-\Sigma(P_i^0)=D(P_i^0)-\E_s\big[D(P_i(s))\big].
\end{equation}
If $D$ is concave (convex) on the convex hull of the induced posteriors, then $W_i\ge0$ ($\le0$); and no disclosure can harm agent $i$ if and only if $D$ coincides at $P_i^0$ with its concavification over the reachable posteriors. Under Assumption~\ref{ass:pop} the reachable posteriors form a line segment and the condition is one-dimensional; with binary downstream types it reduces to whether the segment crosses $q^*$: a buyer with $q_i^0>q^*$ is never harmed, and a buyer with $q_i^0\le q^*$ is harmed by some disclosure if and only if her exposure carries the segment strictly above $q^*$.
\end{proposition}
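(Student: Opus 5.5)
The identity \eqref{eq:generalwedge} comes from linearity of the full-information surplus; the concavity statements come from Jensen's inequality; and the binary reduction comes from the shape of $D(q)=\low(1-q)\ind\{q>q^*\}$ on a segment.

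\emph{Step 1: the identity.} Write $\Sigma=\bar\Sigma-D$. Because $\bar\Sigma$ is linear in $P$ and Bayes plausibility gives $\E_s[P_i(s)]=P_i^0$, we have $\E_s[\bar\Sigma(P_i(s))]=\bar\Sigma(P_i^0)$. The linear parts cancel in $W_i$, which leaves \eqref{eq:generalwedge}. Nonnegativity $D\ge0$ holds because the full-information surplus is the first-best total surplus, and any selection from the seller's optimal mechanisms can only lose trades.

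\emph{Step 2: signing the wedge.} If $D$ is concave on the convex hull of the induced posteriors, Jensen gives $\E_s[D(P_i(s))]\le D(P_i^0)$, so $W_i\ge0$; the convex case is symmetric. For the characterization, let $\hat D$ be the concave envelope of $D$ over the reachable posterior set. Any disclosure splits $P_i^0$ into reachable posteriors, so $\E_s[D(P_i(s))]\le\hat D(P_i^0)$, and every value up to $\hat D(P_i^0)$ can be approximated by some splitting (Kamenica--Gentzkow). If $D(P_i^0)=\hat D(P_i^0)$, then $W_i\ge0$ for every disclosure. Conversely, if $\hat D(P_i^0)>D(P_i^0)$, some split attains a value of $\E_s[D]$ strictly above $D(P_i^0)$, so $W_i<0$.

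The one subtle point is \emph{reachability}: the agent's posteriors are not chosen freely but are induced through the source. Under Assumption~\ref{ass:pop}, $q_i(s)$ is affine in $\mu_A(s)\in[0,1]$. The reachable set is therefore the segment $[q_i^0-\rho_i\Delta\mu_0,\;q_i^0+\rho_i\Delta(1-\mu_0)]$ with $\Delta=\bar\high-\bar\low$, and every Bayes-plausible split of the source's prior maps to a Bayes-plausible split along this segment, and conversely. The condition is then one-dimensional.

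\emph{Step 3: the binary case, which is the main bookkeeping.} On the segment, $D$ is zero on $[\cdot,q^*]$. It jumps to $\low(1-q^*)$ at $q^*$ from the right and then decreases linearly to the right.
\begin{itemize}[leftmargin=1.6em,itemsep=0.1em,topsep=0.2em]
\item If $q_i^0>q^*$: on $(q^*,1]$, $D$ is affine, and it is smaller everywhere to the left of $q^*$. The concave envelope at $q_i^0$ is therefore attained by $D$ itself, because any chord through $q_i^0$ uses points where $D$ lies on or below the affine extension $\low(1-q)$. Hence $W_i\ge0$, and the buyer is never harmed.
\item If $q_i^0\le q^*$, then $D(q_i^0)=0$. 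If the segment's upper endpoint stays $\le q^*$, then $D\equiv0$ on the segment and there is no harm.
\item If instead the upper endpoint $\bar q$ lies strictly above $q^*$, split $q_i^0$ between the lower endpoint and a point in $(q^*,\bar q]$. This split is realized by a two-message source signal. It puts positive mass where $D>0$, so $\E_s[D]>0$ and $W_i<0$.
\end{itemize}

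The tie convention (discount at $q^*$) is what makes the cutoff strict, since $D(q^*)=0$.
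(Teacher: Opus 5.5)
Your proof is correct and follows essentially the route the paper indicates: cancel the linear full-information surplus by Bayes plausibility, sign the wedge by Jensen and the concave envelope of $D$ (pulled back through the affine, one-dimensional source posterior under Assumption~\ref{ass:pop}), and read the binary case off the tent $D$ on the exposure segment, using the affine majorant $\low(1-q)$ when $q_i^0>q^*$ as in the proof of Theorem~\ref{thm:incidence}. One small repair is needed in the harm direction: take the split point in the open interval $(q^*,\bar q)$, which is nonempty, rather than in $(q^*,\bar q]$, because $\bar q=1$ gives $D(1)=0$ and no harm.
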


The wedge is thus the disclosure-induced reduction in the agent's expected downstream deadweight loss. The full-trade component of \eqref{eq:sigma} is linear in the belief, so disclosure moves buyer-side surplus only through $D$: it operates on the distortion, never on the pie. Everything else in this section is the tent-shaped $D$ of the binary model meeting the mean-preserving spreads of \eqref{eq:qi}.

\subsection{Asymmetric incidence}

Under the optimal rule $g^*$, the signal reveals the source ($\mu_A=1$) with probability $r=(\mu_0-\pi)/(1-\pi)$ and pools at $\mu_A=\pi$ otherwise. Buyer $i$'s posterior takes two values,
\[
q_i^R=q_i^0+\rho_i(\bar\high-\bar\low)(1-\mu_0),\qquad
q_i^P=q_i^0-\rho_i(\bar\high-\bar\low)(\mu_0-\pi),
\]
in the reveal and pool events respectively.

\begin{theorem}[Asymmetric incidence]\label{thm:incidence}
Under Assumptions~\ref{ass:env}--\ref{ass:pop}, on the disclosure region \eqref{eq:sbregion}, and under the rule $g^*$:
\begin{itemize}[leftmargin=1.6em,itemsep=0.1em,topsep=0.2em]
\item[(i)] \emph{(Closed forms.)} Every buyer's wedge is
\begin{align*}
W_i&=-\,r\,\low\,(1-q_i^R)\,\ind\{q_i^R>q^*\} && \text{if } q_i^0\le q^*,\\
W_i&=(1-r)\,\low\,(1-q_i^P)\,\ind\{q_i^P\le q^*\} && \text{if } q_i^0>q^*.
\end{align*}
The wedge is nonzero exactly when disclosure carries the market's belief across the pricing threshold and the crossing destroys or restores a trade: for buyers below $q^*$, exposure above $\underline\rho_i=(q^*-q_i^0)/[(\bar\high-\bar\low)(1-\mu_0)]$ with $q_i^R<1$ (at $q_i^R=1$ the excluded sale was worthless, the wedge nil); for buyers above, exposure at or above $\bar\rho_i=(q_i^0-q^*)/[(\bar\high-\bar\low)(\mu_0-\pi)]$.
\item[(ii)] \emph{(Asymmetry.)} For $q_i^0\le q^*$ the harm $|W_i|$ is quasi-concave in exposure: zero up to $\underline\rho_i$, then strictly \emph{decreasing}, with supremum $r\low(1-q^*)$ approached at the crossing threshold. For $q_i^0>q^*$ the gain $W_i$ is nondecreasing: zero below $\bar\rho_i$, equal to $(1-r)\low(1-q^*)$ there, strictly increasing beyond. Disclosure harms most the buyers it just carries across the threshold---for whom the destroyed trade was worth most---and helps most those most exposed to the source.
\item[(iii)] \emph{(Redistribution.)} Whenever $F$ carries crossing mass with $q_i^R<1$ below $q^*$ and crossing mass above it, disclosure transfers surplus from exposed buyers below the threshold, whose discounts it destroys, to exposed buyers above it, for whom it creates them---even where the two deadweight-loss changes cancel and $W(g^*;F)=0$.
\end{itemize}
\end{theorem}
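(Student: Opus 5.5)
The plan is to combine the representation of Proposition~\ref{prop:dwl} with the two-point posterior structure of $g^*$. By \eqref{eq:generalwedge}, $W_i=D(q_i^0)-rD(q_i^R)-(1-r)D(q_i^P)$ with $D(q)=\low(1-q)\ind\{q>q^*\}$. First I will record three facts about the posteriors. Bayes plausibility gives $rq_i^R+(1-r)q_i^P=q_i^0$, which follows from $r(1-\mu_0)=(1-r)(\mu_0-\pi)$ and the definition $r=(\mu_0-\pi)/(1-\pi)$. Next, since $\bar\high>\bar\low$, $\rho_i\ge0$ and $\mu_0>\pi$ on \eqref{eq:sbregion}, we have $q_i^P\le q_i^0\le q_i^R$. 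Finally, each posterior lies in $[0,1]$ by Assumption~\ref{ass:pop}.

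For part (i) I will split on which side of $q^*$ the prior lies. If $q_i^0\le q^*$, then $q_i^P\le q^*$, so $D(q_i^0)=D(q_i^P)=0$ and $W_i=-rD(q_i^R)$, which is the first closed form. If $q_i^0>q^*$, then $q_i^R>q^*$ and $D$ is affine on $(q^*,1]$. Bayes plausibility then gives $D(q_i^0)-rD(q_i^R)=(1-r)\low(1-q_i^P)$. If $q_i^P>q^*$ this exactly cancels $(1-r)D(q_i^P)$ and the wedge is zero. If $q_i^P\le q^*$ then $D(q_i^P)=0$ and the second closed form follows. The crossing thresholds come from solving $q_i^R>q^*$ and $q_i^P\le q^*$ for $\rho_i$, which gives $\underline\rho_i$ (strict) and $\bar\rho_i$ (weak) respectively; the weak inequality matches the tie convention that $S_2$ discounts at $q^*$. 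The caveat $q_i^R<1$ is read directly off the factor $1-q_i^R$.

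For part (ii), note that $q_i^R$ is affine and increasing in $\rho_i$. So for $\rho_i>\underline\rho_i$ the harm $r\low(1-q_i^R)$ is strictly decreasing, and its supremum $r\low(1-q^*)$ is approached but not attained as $\rho_i\downarrow\underline\rho_i$. Together with the harm being zero below $\underline\rho_i$, this gives quasi-concavity: the harm is a step up followed by a decline, and its upper level sets are intervals. I expect the main care to be needed here, at the weak/strict inequalities at the threshold and in the degenerate case $\bar\high-\bar\low$ times the exposure range where $q_i^R$ hits $1$, since there the harm is $0$ and the level sets must still be intervals. Symmetrically, $q_i^P$ is strictly decreasing in $\rho_i$. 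The gain $(1-r)\low(1-q_i^P)$ therefore jumps from $0$ to $(1-r)\low(1-q^*)$ exactly at $\bar\rho_i$ and increases strictly beyond it.

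For part (iii), aggregate $W(g^*;F)$ as an integral of the closed forms. The crossing mass below $q^*$ with $q_i^R<1$ has $W_i<0$, while every buyer above $q^*$ with $\rho_i\ge\bar\rho_i$ has $W_i>0$, since $q_i^P\le q^*<1$. Each buyer's wedge is the change in her own surplus channel, namely her lost or gained discount trade, so the signs identify who loses and who gains. The aggregate $W$ can still vanish when the two integrals cancel. To show this case is non-vacuous, I will give a two-atom $F$ whose weights are chosen so that the integrals cancel.
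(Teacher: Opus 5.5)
Your proposal is correct and matches the paper's proof essentially step for step. The paper likewise writes $W_i=D(q_i^0)-rD(q_i^R)-(1-r)D(q_i^P)$, uses $q_i^P\le q_i^0\le q_i^R$ to remove every $D$ term except $D(q_i^R)$ when $q_i^0\le q^*$, uses Bayes plausibility with the affine extension $\low(1-q)$ of $D$ above $q^*$ to get the gain formula when $q_i^0>q^*$, and reads (ii) and (iii) directly off these closed forms. Your only addition is the explicit two-atom $F$ showing the cancelling case in (iii) is non-vacuous, which the paper asserts without construction; the two-mass pair used after the consent proposition would serve.
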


In the environment of Example~\ref{ex:single}, a buyer with baseline $q_i^0=5/16$ and exposure $\rho=1/2$ is carried across $q^*=1/3$ by the reveal event and loses $|W_i|=5/32$; doubling her exposure to $\rho=1$ shrinks the harm to $9/64$, because the buyer the market then excludes was less likely to buy at the discount anyway. The harm to the population is concentrated on its marginal members.

\begin{example}[One correlated buyer]\label{ex:ext}
Continue Example~\ref{ex:single}, so $S_1$ discloses with $g_\high=1/2$, revealing $\mu_A=1$ with probability $1/4$ and pooling at $\mu_A=1/3$ with probability $3/4$. Give an exposed buyer $\rho=1/2$ and the lower baseline $\mu_i^0=1/4$, so $q_i^0=5/16<q^*$: absent disclosure $S_2$ discounts her. The reveal event lifts $S_2$'s posterior to $q_i^R=3/8>q^*$ and the buyer loses the discount; the pool event leaves $q_i^P=7/24<q^*$ and she keeps it. Her consumer surplus falls from $(\high-\low)q_i^0=5/8$ under privacy to $7/16$ under disclosure, a loss of $3/16$; $S_2$'s profit on her rises by $1/32$; her buyer-side surplus falls by $5/32$, so $W_i=-5/32=-r\low(1-q_i^R)$---exactly the expected deadweight loss created in the reveal event (Figure~\ref{fig:wedge}). With a mass $m$ of such buyers the aggregate wedge is $-5m/32$ and aggregate consumer surplus falls by $3m/16$.
\end{example}

\begin{figure}[t]
\centering
\begin{tikzpicture}[xscale=24,yscale=4.8,>=stealth]
  \def\qs{0.33333}
  % axes (window magnifies a neighborhood of q*; break marks denote the truncated origin)
  \draw[->] (0.20,0.95) -- (0.57,0.95) node[right] {$q$};
  \draw[->] (0.20,0.95) -- (0.20,1.85) node[above] {$\Sigma(q)$};
  \draw (0.211,0.938) -- (0.216,0.962);
  \draw (0.216,0.938) -- (0.221,0.962);
  \draw (0.1968,0.979) -- (0.2032,0.991);
  \draw (0.1968,0.989) -- (0.2032,1.001);
  % the tent: affine branches with a downward jump at q*
  \draw[thick] (0.21,1.42) -- (\qs,1.66667);
  \draw[thick] (\qs,1) -- (0.54,1.62);
  \draw[densely dashed] (\qs,1) -- (\qs,1.66667);
  \draw[dotted] (\qs,0.95) -- (\qs,1.66667);
  \node[below,scale=0.8] at (\qs,0.93) {$q^*$};
  % disclosure spread: chord from pool to reveal
  \draw[gray,thin] (0.29167,1.58333) -- (0.375,1.125);
  \draw[<->,shorten >=2pt,shorten <=2pt] (0.3125,1.625) -- (0.3125,1.46875);
  % points
  \node[fill,circle,inner sep=1pt] at (0.3125,1.625) {};
  \node[fill,circle,inner sep=1pt] at (0.29167,1.58333) {};
  \node[fill,circle,inner sep=1pt] at (0.375,1.125) {};
  \node[fill,circle,inner sep=1pt] at (0.3125,1.46875) {};
  % labels
  \node[above,scale=0.72] at (0.306,1.652) {$\Sigma(q_i^0)$};
  \node[above left,scale=0.72] at (0.2917,1.589) {pool};
  \node[right,scale=0.72,fill=white,inner sep=1pt] at (0.3135,1.547) {$|W_i|$};
  \node[left,scale=0.72] at (0.309,1.452) {$\E_s\Sigma$};
  \node[below right,scale=0.72] at (0.377,1.118) {reveal};
\end{tikzpicture}
\caption{The wedge as a deadweight-loss change (Example~\ref{ex:ext}, with $\high=3$, $\low=1$). Buyer-side surplus $\Sigma$ is affine on each side of $q^*$ with a downward jump $D(q^*{+})=\low(1-q^*)$. Disclosure spreads $S_2$'s belief about the exposed buyer from $q_i^0<q^*$ to a pool and a reveal realization; the reveal realization crosses $q^*$ and forfeits the discount, so $\E_s\Sigma$ lies on the chord, below $\Sigma(q_i^0)$ by $|W_i|$. The plot magnifies a neighborhood of $q^*$; axis breaks mark the truncated origin. Values from the replication certificate.}
\label{fig:wedge}
\end{figure}

\subsection{When is the private rule socially wrong?}

The private rule is $F$-independent; the planner's preferred disclosure is not. The comparison is sharp in the leading case.

\begin{proposition}[Threshold inefficiency]\label{prop:threshold}
In the disclosure region of Theorem~\ref{thm:binary} with $\mu_0>q^*$, let the population place mass $m$ at a single exposed point $(q^0,\rho)$ with $q^0<q^*$, $\rho>\underline\rho$, and reveal posterior $q^R\equiv q^0+\rho(\bar\high-\bar\low)(1-\mu_0)<1$ (automatic under Assumption~\ref{ass:pop}). Every Bayes-plausible signal leaves $S_1$'s allocation at $(q_1(\high),q_1(\low))=(1,0)$, so transfers cancel from welfare and the planner concavifies over posterior support $\{0,\pi,\bar\mu,1\}$, where $\bar\mu=\mu_0+(q^*-q^0)/[\rho(\bar\high-\bar\low)]$ is the largest source posterior still discounting the exposed buyer. The optimum is:
\begin{itemize}[leftmargin=1.6em,itemsep=0.1em,topsep=0.2em]
\item[(i)] the private rule $g^*$ (posteriors $\{\pi,1\}$) if and only if $m\le m^*$ (both rules are optimal at $m=m^*$ exactly), where, with $r=(\mu_0-\pi)/(1-\pi)$,
\[
m^*=\Big[\frac{1-\mu_0}{1-\pi}-\frac{\bar\mu-\mu_0}{\bar\mu-\pi}\Big]\,\frac{1-q^*}{\,r\,(1-q^R)\,};
\]
\item[(ii)] otherwise the \emph{truncated rule} with posteriors $\{\pi,\bar\mu\}$, which preserves the source's discount at zero population harm by capping the reveal posterior at $\bar\mu$.
\end{itemize}
Privacy is never the planner's choice: the truncated rule dominates it by $\frac{\bar\mu-\mu_0}{\bar\mu-\pi}\,\low(1-q^*)>0$ at every $m$. The threshold $m^*$ is increasing in $\rho$ and vanishes as $\rho\downarrow\underline\rho$: the buyers most harmed by disclosure are also the cheapest to protect.
\end{proposition}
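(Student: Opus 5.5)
The plan is to reduce the planner's problem to a one-dimensional concavification in the source posterior $\mu_A$, in parallel with the first-best analysis of Proposition~\ref{prop:fb}, and then to compare a small number of candidate splittings of $\mu_0$.

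\emph{Step 1: the allocation does not move.} I will first show that with $\mu_0>q^*$ the source's allocation is $(1,0)$ under every Bayes-plausible signal. I will reuse the pricing argument behind Theorem~\ref{thm:binary}(i). The marginal revenue from raising $q_1(\low)$ is proportional to $(1-\mu_0)\low-\mu_0(\high-\low)=\low-\mu_0\high<0$. Any discount prospect attached to the low report only tightens \eqref{eq:M}, and leaves this sign unchanged. So $S_1$ excludes the low type and supplies the high type fully, whatever $g$ is. Transfers are internal between the source and $S_1$, so they cancel from welfare. Welfare then equals a constant plus the source's expected downstream buyer-side surplus plus $m$ times the exposed buyer's.

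\emph{Step 2: the value function.} Write $V(\mu)=\Sigma(q(\mu))+m\,\Sigma(q_i(\mu))$, with $q_i(\mu)=q^0+\rho(\bar\high-\bar\low)(\mu-\mu_0)$ taken from \eqref{eq:qi}. By \eqref{eq:sigma} each term is affine apart from one downward jump.
\begin{itemize}[leftmargin=1.6em,itemsep=0.1em,topsep=0.2em]
\item The source's term jumps at $\mu=\pi$, where $q(\pi)=q^*$, by $\low(1-q^*)$.
\item The exposed buyer's term jumps at $\bar\mu$, by $\low(1-q_i(\mu))$.
\end{itemize}
One checks that $\pi<\mu_0<\bar\mu$, and that $\bar\mu<1$ exactly when $\rho>\underline\rho$. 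With the tie convention (discount at $q^*$), $V$ is upper semicontinuous and affine on $[0,\pi]$, $(\pi,\bar\mu]$ and $(\bar\mu,1]$. Hence the extreme points of its concave envelope lie in $\{0,\pi,\bar\mu,1\}$.

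The point $0$ is redundant because $V$ is affine on $[0,\pi]$: any chord through $0$ is weakly dominated by replacing $0$ with $\pi$. So the concavification at $\mu_0$ is the best of three options: privacy, the chord $\{\pi,1\}$, and the chord $\{\pi,\bar\mu\}$. Chords from $\bar\mu$ to $1$ do not straddle the jump that matters, and I will argue they are dominated. The linear full-trade parts of $\Sigma$ integrate to constants under Bayes plausibility, so only the deadweight-loss terms survive, as in Proposition~\ref{prop:dwl}.

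\emph{Step 3: compare the candidates.} Relative to privacy, each option gains only through deadweight-loss changes.
\begin{itemize}[leftmargin=1.6em,itemsep=0.1em,topsep=0.2em]
\item The chord $\{\pi,\bar\mu\}$ places weight $(\bar\mu-\mu_0)/(\bar\mu-\pi)$ on $\pi$, where the source's loss $\low(1-q^*)$ is removed, and never carries the exposed buyer across $q^*$. This gives the stated dominance margin over privacy, which is strictly positive since $\bar\mu>\mu_0$.
\item The rule $g^*$ places weight $(1-\mu_0)/(1-\pi)$ on $\pi$, and costs $m\,r\,\low(1-q^R)$ through Theorem~\ref{thm:incidence}(i).
\end{itemize}
Equating the two gains and solving for $m$ yields $m^*$. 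Part (i), with the tie at $m=m^*$, and part (ii) follow because the comparison is linear in $m$. Privacy is never optimal because it is strictly beaten by the truncated rule at every $m$.

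\emph{Step 4: comparative statics.} Since $\bar\mu$ is decreasing in $\rho$ and $(\bar\mu-\mu_0)/(\bar\mu-\pi)$ is increasing in $\bar\mu$, the bracket in $m^*$ is increasing in $\rho$. The factor $1/(1-q^R)$ is also increasing in $\rho$, so $m^*$ is increasing in $\rho$. As $\rho\downarrow\underline\rho$, $\bar\mu\to1$ and the bracket tends to $0$, so $m^*$ vanishes.

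The main obstacle is Step 1. One must rule out that some exotic signal makes serving the low type profitable through the $(M)$ interaction, and verify that deviating from $(1,0)$ never helps even when it might raise welfare. I would handle this by redoing the case (i) Lagrangian of Theorem~\ref{thm:binary} for arbitrary feasible $(\phi_\high,\phi_\low)$, not only at the optimum.
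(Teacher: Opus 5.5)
Your proposal is correct and follows the route the statement itself lays out. You fix $S_1$'s allocation at $(1,0)$ through the $\phi$-independent coefficient $\low-\mu_0\high<0$ on $q_1(\low)$ in \eqref{eq:reduced}, which makes Step 1 immediate rather than the main obstacle; you then reduce welfare to expected deadweight loss and compare the splits $\{\pi,1\}$ and $\{\pi,\bar\mu\}$, which reproduces $m^*$ (e.g.\ $16/15$ in Example~\ref{ex:ext}); one small tightening is that dropping the posterior $0$ follows not from $V$ being affine on $[0,\pi]$ alone but from the total deadweight loss vanishing on $[0,\pi]$ and being nonnegative elsewhere, which your deadweight-loss reformulation already supplies.
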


The proposition corrects two conjectures at once. The private rule is not ``inefficient whenever it harms someone'': harming a small exposed mass is efficient. And the planner's alternative is not privacy but a \emph{truncation}---disclose enough to win the source her discount, never enough to re-price the exposed population. The privacy debate's binary framing misses the planner's actual margin: the informativeness of the favorable tail. In the environment of Example~\ref{ex:ext}, $m^*=16/15$.

\section{Selling Data, Consent, and Data Minimization}\label{sec:market}

\subsection{A market for data: a neutrality threshold}

So far disclosure has been non-tradeable: $S_1$ discloses only to influence its own trade. We now let $S_1$ sell its signal to $S_2$, capturing a fraction $\alpha\in[0,1]$ of the increase in $S_2$'s expected profit---on the source and on the population---that the signal generates, where $\Pi(q)=\max\{\low,\high q\}$ is $S_2$'s profit on a buyer she believes high with probability $q$. For an exposed buyer write $q_i(\mu)$ for her posterior when the source's is $\mu$, and define her \emph{curvature gap} at the pool,
\[
\sigma_i=(1-\pi)\,\Pi(q_i(0))+\pi\,\Pi(q_i(1))-\Pi(q_i(\pi))\;\ge\;0,
\]
strictly positive exactly when full revelation would carry $S_2$'s belief about her across the pricing threshold, $q_i(0)<q^*<q_i(1)$.

\begin{proposition}[Selling data: a neutrality threshold]\label{prop:sell}
Let the environment lie in the disclosure region of Theorem~\ref{thm:binary}, and define
\begin{align*}
G&=\pi\Big[(\high-\low)\bar\high+\max\{0,\low-\mu_0\high\}\,\frac{\bar\high-\bar\low}{\mu_0}\Big],\\
S&=\pi\high(\bar\high-q^*)+\int\sigma_i\,dF,\qquad \bar\alpha=\frac{G}{S}\,\in(0,\infty).
\end{align*}
\begin{itemize}[leftmargin=1.6em,itemsep=0.1em,topsep=0.2em]
\item[(i)] \emph{(Neutrality.)} For $\alpha<\bar\alpha$ the optimal signal is the rule $g^*$ of Theorem~\ref{thm:binary}, essentially uniquely.
\item[(ii)] \emph{(Bang-bang refinement.)} For $\alpha>\bar\alpha$ the optimal signal is full revelation; no intermediate informativeness arises. (At $\alpha=\bar\alpha$ every signal supported on $\{0,\pi,1\}$ is optimal, and, when no exposed buyer's threshold falls in $(0,\pi)$, every signal supported on $[0,\pi]\cup\{1\}$.)
\item[(iii)] \emph{(Only large markets tip.)} With no exposed population, $G-S=\pi[\low(1-\bar\high)+\max\{0,\low-\mu_0\high\}(\bar\high-\bar\low)/\mu_0]\ge0$, so $\bar\alpha>1$ on the open region. Selling changes disclosure for some $\alpha\in[0,1]$ if and only if $\int\sigma_i\,dF\ge\pi\low(1-\bar\high)+\pi\max\{0,\low-\mu_0\high\}(\bar\high-\bar\low)/\mu_0$. In the environment of Example~\ref{ex:ext} with exposed mass $m$, $\bar\alpha=8/(4+m)$, so the data market changes what is disclosed only once $m\ge4$, with a tie at $m=4$.
\item[(iv)] \emph{(Consumer surplus.)} Aggregate consumer surplus is a step function of $\alpha$: constant on $[0,\bar\alpha)$, one jump at $\bar\alpha$, constant after. The per-buyer jump is $\Delta CS_i=(\high-\low)(1-\mu_0)\big[q_i(0)-q_i(\pi)/(1-\pi)\big]$ on the class $q_i(\pi)\le q^*<q_i(1)$ that the pool discounts but a revealed high source does not, and is strictly negative there; in the environment of Example~\ref{ex:ext} the aggregate jump is $-3m/16$.
\end{itemize}
\end{proposition}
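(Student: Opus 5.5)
The plan is to reduce $S_1$'s problem with selling to a concavification over source posteriors $\mu_A$, where the objective is the second-best revenue plus $\alpha$ times $S_2$'s profit gain. By Proposition~\ref{prop:private} and Theorem~\ref{thm:binary}, the revenue part, written as a function of the induced discount probabilities $(\phi_\high,\phi_\low)$, is linear in the distribution of posteriors on the region. That distribution is carried by a posterior $\mu_1\le\pi$ (discount) or $\mu_1>\pi$ (no discount). The high type's rent and, in case (ii), the deterrence cost are linear in $\phi_\high$ and $\phi_\low$ as well. So I would write revenue as $\E_{\tilde g}[J(\mu)]$ for a piecewise-affine $J$ with a jump at $\pi$.

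On the selling side, $S_2$'s expected profit is $\E_s[\Pi(q_A(s))]+\int\E_s[\Pi(q_i(s))]\,dF$, which is also an expectation of a convex piecewise-affine function of $\mu_A$ under Assumption~\ref{ass:pop}. The total objective is therefore $\E_{\tilde g}[J(\mu)+\alpha V(\mu)]$ with $V$ convex, and the kinks of $V$ lie at $\pi$ and at the exposed buyers' thresholds. The concavification then has candidate supports $\{\pi,1\}$ (the rule $g^*$), $\{0,1\}$ (full revelation), or mixtures on $\{0,\pi,1\}$.

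The key steps, in order, are as follows.
\begin{itemize}[leftmargin=1.6em,itemsep=0.1em,topsep=0.2em]
\item[(a)] Compute the slope of $J$ on $[0,\pi]$ and the value comparison that decides whether to split mass at $\pi$ into $\{0,1\}$. Splitting $\pi$ into $0$ and $1$ with weights $(1-\pi,\pi)$ costs $S_1$ the discount it sells to the low type and the rent saving. I expect this loss to equal exactly $G$: the term $\pi(\high-\low)\bar\high$ comes from the discount value, and the $\max\{0,\low-\mu_0\high\}$ term comes from the deterrence cost of Theorem~\ref{thm:binary}(ii), after dividing by $\mu_0$ to convert to per-posterior units.
\item[(b)] Show the gain to $S_2$ from the same split is $S$. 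For the source this is $\pi\high(\bar\high-q^*)$, using $\Pi(q_A(1))=\high\bar\high$, $\Pi(q_A(\pi))=\low$ and $\Pi(q_A(0))=\low$. For the population it is $\int\sigma_i\,dF$ by definition.
\item[(c)] Argue the split is the only profitable deviation from $g^*$. On $(\pi,1]$ the sum $J+\alpha V$ is convex, so the mass belongs at $1$. On $[0,\pi]$, $J$ is affine while $V$ is convex, so any interior posterior is dominated by moving it to the endpoints $0$ or $\pi$. The comparison is linear in $\alpha$, which gives the bang-bang cutoff $\bar\alpha=G/S$ together with the tie descriptions. Essential uniqueness for $\alpha<\bar\alpha$ would follow from strictness, as in Theorem~\ref{thm:binary}.
\end{itemize}

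Part (iii) is algebra on $G-S$ with $\int\sigma=0$, which yields the requirement $\bar\alpha\le1\iff\int\sigma\ge G-S_0$. The example value $\bar\alpha=8/(4+m)$ is a direct substitution. For part (iv), consumer surplus depends on $\alpha$ only through which signal is chosen, so it is a step function. The per-buyer jump compares $(\high-\low)\E[q_i\ind\{q_i\le q^*\}]$ under $\{\pi,1\}$ and $\{0,1\}$. On the stated class, the discount shifts from the pool, with mass $(1-\mu_0)/(1-\pi)$ at $q_i(\pi)$, to revelation at $0$, with mass $1-\mu_0$; this gives the displayed formula, whose sign follows from linearity of $q_i$.

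The main obstacle is step (a): showing that in case (ii) the mechanism's supply distortion scales linearly with the mass placed at posterior $0$ versus $\pi$. Only then is the revenue side a concavifiable function of the posterior distribution rather than a joint program in $(q_1,g)$. I would try to first substitute the binding constraints of Theorem~\ref{thm:binary} into revenue as a linear function of $(\phi_\high,\phi_\low)$. I would then check that this substituted form stays valid off $g^*$, in particular that the same constraints still bind under full revelation.
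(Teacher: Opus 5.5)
Your proposal is correct and follows essentially the route the paper's discussion indicates: a concavification over source posteriors of revenue plus $\alpha$ times $S_2$'s profit, convex on each side of $\pi$, with the pool-splitting comparison $G$ versus $\alpha S$ (the full proof is in the Supplement and not shown). The obstacle you flag resolves as you intend: $S_1$'s revenue is linear in $(\phi_\high,\phi_\low)$ for every feasible signal, since in case (ii) the cap $q_1(\low)=1-(\bar\high-\bar\low)(\phi_\low-\phi_\high)$ stays in $[0,1]$; so the concavification at $\mu_0$ uses only $\{0,\pi,1\}$. Two small slips: mass on $(\pi,1)$ goes to $\pi$ and $1$, not to $1$ alone, using $J(\pi)>0=J(\pi^+)$; and the lost $\pi(\high-\low)\bar\high$ is the discount the pooled high types were paying for, since the low type keeps his discount at posterior $0$.
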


The data market does not refine disclosure at the margin; it tips it. Below the liquidity threshold the right to sell changes how surplus is divided, not what is disclosed---selling is neutral for every population, since $\bar\alpha>0$ always, and the externality of Section~\ref{sec:pop} is the whole story. Above it the market converts the signal into full revelation: the objective is convex in the posterior on each side of $\pi$, so as liquidity crosses $\bar\alpha$ disclosure jumps from the private optimum straight to the maximal element of the Blackwell order, with no intermediate informativeness. The single-buyer neutrality of the bilateral model is exactly the no-population case: the mass whose beliefs full revelation carries across the threshold must be large enough to tip the market, not merely positive. The bilateral intuition that selling is a transfer is exactly the $\int\sigma_i\,dF=0$ case: selling is the channel through which $S_1$ monetizes, and therefore acts on, the cross-buyer informativeness it otherwise ignores.

\subsection{Consent dominates both poles}\label{ss:consent}

The externality calls for an instrument that targets cross-buyer \emph{use}, not the signal. Endow every buyer other than the source with an ex-ante consent right: before the signal realizes, buyer $i$ either consents to or vetoes the use of $S_1$'s signal in $S_2$'s pricing of buyer $i$; a veto obliges $S_2$ to price $i$ at her no-disclosure posterior $q_i^0$. The signal itself, and $S_2$'s pricing of the source, are unrestricted; buyers consent when indifferent; data are non-tradeable ($\alpha=0$).

\begin{proposition}[Consent dominance]\label{prop:consent}
Fix the disclosure region of Theorem~\ref{thm:binary} and the rule $g^*$; buyers consent when indifferent.
\begin{itemize}[leftmargin=1.6em,itemsep=0.1em,topsep=0.2em]
\item[(i)] \emph{(Invariance.)} $S_1$'s problem is unaffected by the consent profile: vetoes restrict only third-party pricing, and consent decisions depend only on the public $(q_i^0,\rho_i)$.
\item[(ii)] \emph{(Alignment.)} For any Bayes-plausible use on buyer $i$, her consumer-surplus change and wedge $W_i$ are never of opposite signs, save two knife-edge atoms: a fully revealed harmed buyer ($q_i^R=1$) vetoes a nil-wedge use, and a benefited buyer revealed to the bottom ($q_i^P=0$) is indifferent and consents. $S_2$'s profit on $i$ rises from use in both. Under $g^*$, a harmed buyer loses consumer surplus $r(\high-\low)q_i^R$ while $S_2$ gains $r\high(q_i^R-q^*)$, a benefited buyer gains $(1-r)(\high-\low)q_i^P$ while $S_2$ gains $(1-r)\high(q^*-q_i^P)$.
\item[(iii)] \emph{(Implementation.)} The equilibrium consent profile implements the planner's optimal buyer-by-buyer use policy,
\[
W_{\mathrm{consent}}=\int\max\{W_i,0\}\,dF=\max_{a:\,(q_i^0,\rho_i)\mapsto\{0,1\}}\int a_i\,W_i\,dF,
\]
and maximizes each buyer's consumer surplus among all use policies for her.
\item[(iv)] \emph{(Dominance.)} $W_{\mathrm{consent}}\ge\max\{W(g^*;F),\,0\}$: consent weakly dominates laissez-faire and a blanket use-ban, which carry identical source-side surplus, strictly over the former iff $F$ has mass on harmed crossing buyers with $q_i^R<1$ and over the latter iff it has mass on benefited crossing buyers.
\end{itemize}
\end{proposition}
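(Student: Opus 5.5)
The plan is to establish the four parts in order, taking $g^*$ as fixed and computing everything buyer by buyer from the two-point posterior $\{q_i^R,q_i^P\}$ of Theorem~\ref{thm:incidence}.

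For (i), $S_1$'s revenue depends only on the source's allocation, transfers and the source's continuation payoff $Q(\mu_A)$. A veto changes only the posterior $S_2$ uses for buyer $i\neq A$. So the source's incentive constraints, obedience, and the program of Theorem~\ref{thm:binary} are unchanged; Proposition~\ref{prop:private} already gives $F$-independence. The consent decision of buyer $i$ compares $\E_s[CS_i(q_i(s))]$ with $CS_i(q_i^0)$. By \eqref{eq:qi} this comparison depends only on $(q_i^0,\rho_i)$ and the fixed $g^*$, not on the source's realized type. So no feedback into $S_1$'s problem arises.

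Part (ii) is where the substance lies, and I expect it to be the main obstacle: the knife-edge atoms must be handled exactly. Write consumer surplus as $CS(q)=(\high-\low)q\,\ind\{q\le q^*\}$ and $S_2$'s profit as $\Pi(q)=\max\{\low,\high q\}$, so that $\Sigma=CS+\Pi$.

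For a general Bayes-plausible use, I would argue via Proposition~\ref{prop:dwl} that with binary downstream types the spread crosses $q^*$ at most on one side. Both $CS$ and $D$ are linear on each side of $q^*$; $CS$ jumps down by $(\high-\low)q^*$ and $D$ jumps up by $\low(1-q^*)$ at the threshold. Hence, for a buyer starting below $q^*$, both changes equal $-\E[\text{jump terms on realizations above }q^*]$ weighted by the respective nonnegative linear functions. These are $(\high-\low)q$ and $\low(1-q)$. They have the same sign except where $1-q=0$, which is the $q^R=1$ atom. Symmetrically, for a buyer above $q^*$ the exception is the $q=0$ atom.

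Under $g^*$ the computation is explicit. For $q_i^0\le q^*$, since $CS$ is linear on $[0,q^*]$ and the pool stays below $q^*$, $\Delta CS_i=-r(\high-\low)q_i^R$. Likewise $\Delta\Pi_i=r(\high q_i^R-\low)=r\high(q_i^R-q^*)$, using linearity of $\Pi=\low$ below the threshold and Bayes plausibility. The benefited case mirrors this with $1-r$ and $q_i^P$. Adding $\Delta CS_i+\Delta\Pi_i$ recovers the closed forms of Theorem~\ref{thm:incidence}(i), which serves as a check. $S_2$'s profit rises in both cases because $\Pi$ is convex.

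For (iii), under the consent-when-indifferent rule, buyer $i$ consents iff $\Delta CS_i\ge0$. By (ii) this coincides with $W_i\ge0$ except on the two atoms, and on each atom $W_i=0$, so the choice is welfare-irrelevant. Hence the realized aggregate wedge equals $\int\max\{W_i,0\}\,dF$, which is the pointwise maximum of $\int a_iW_i\,dF$. Consumer-surplus maximality is immediate because each buyer chooses her own use policy among the two options.

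For (iv), compare the consent outcome with laissez-faire ($a\equiv1$) and with the ban ($a\equiv0$). By (i), the source-side surplus is identical across all three. Then $\int\max\{W_i,0\}\ge\max\{\int W_i,0\}$ pointwise. The inequality is strict over laissez-faire iff there is positive mass with $W_i<0$, which by Theorem~\ref{thm:incidence}(i) means harmed crossing buyers with $q_i^R<1$. It is strict over the ban iff there is positive mass with $W_i>0$, that is, benefited crossing buyers.
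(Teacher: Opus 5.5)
Your proof is correct and follows essentially the paper's route: $F$-independence of $S_1$'s problem (Proposition~\ref{prop:private}) for (i); the decomposition $\Sigma=CS+\Pi$, with $CS$ and $D$ affine on each side of $q^*$, for the signed changes and the $g^*$ closed forms in (ii); and the pointwise bounds $\max\{W_i,0\}\ge W_i$ and $\max\{W_i,0\}\ge0$ for (iii)--(iv).

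One step in (iii) is misstated. At the second atom, where all mass at or below $q^*$ sits at $q=0$, the wedge is $W_i=\low\,\Prob(q=0)>0$, not zero, so the choice there is not welfare-irrelevant. What makes it harmless is the consent-when-indifferent hypothesis. In fact the sets $\{\Delta CS_i\ge0\}$ and $\{W_i\ge0\}$ differ only on the first atom, where $W_i=0$. Under $g^*$ the second atom is even empty: a benefited buyer has $\rho_i>0$ and $\pi>0$, so $q_i^P=q_i(\pi)>q_i(0)\ge0$.
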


The alignment in (ii) is what makes consent informationally cheap: the planner's buyer-by-buyer optimum requires knowing the sign of every $W_i$, and each buyer's self-interest computes it for free. Consent decisions, measurable in the public pair $(q_i^0,\rho_i)$, reveal nothing about the source. Because $S_2$'s profit rises from use even on the buyers the instrument protects, the right must bind her legally; on a benefited buyer at $g^*$ consent is Pareto-improving within the pair. A two-mass illustration in the environment of Example~\ref{ex:ext}: a harmed buyer at $(5/16,1/2)$ with $W_i=-5/32$ and a benefited buyer at $(7/20,1/2)$ with $W_i=161/320$ give laissez-faire $W=111/320$, a use-ban $0$, and consent $161/320$---consent harvests the gain and truncates the harm. The blanket use-ban, the literal reading of data minimization, is the planner's optimum only when no exposed buyer gains; the privacy debate's two poles are both dominated by the instrument that asks each customer.

\subsection{Coarse caps: bang-bang in the binary model, a threshold cap in the continuum}

When neither consent nor buyer-specific use restrictions are feasible, the regulator is left to cap the overall informativeness of the signal. Which caps are bang-bang depends on the margin they move, not on the binary structure: the cap's value turns entirely on whether the source's margin is smooth.

\begin{proposition}[Data-minimization caps]\label{prop:cap}
\begin{itemize}[leftmargin=1.6em,itemsep=0.1em,topsep=0.2em]
\item[(i)] \emph{(Binary model.)} Mixing the rule $g^*$ with silence makes every buyer's posterior lottery---hence the source's value and every wedge---affine in the mixing weight, so welfare is maximized at an endpoint for every population. A cap on the reveal posterior at $\bar\mu$, by contrast, implements the truncated rule of Proposition~\ref{prop:threshold}, the planner's optimum for large exposed mass.
\item[(ii)] \emph{(Continuum: a threshold cap.)} In the continuum model of Section~\ref{sec:engine}, cap the pool's mass at $T$, let $S_1$ re-optimize, and fix the two-message implementation of Lemma~\ref{lem:obed}. The source's value $B(T)$ is strictly increasing up to the unconstrained pool mass $T_{\mathrm{un}}$; the harmed buyer's wedge $W_i(T)$ is zero below the crossing cap $\bar T$, falls discontinuously there, and attenuates beyond. The cap $\bar T$ is welfare-optimal whenever $m\,|W_i(T)|\ge B(T)-B(\bar T)$ for all $T\in(\bar T,T_{\mathrm{un}}]$.
\end{itemize}
\end{proposition}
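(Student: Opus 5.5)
For part (i) the plan is to parametrize the mixture directly. Let $g_t$ send the signal of $g^*$ with probability $t\in[0,1]$ and an uninformative message otherwise. The induced distribution of the source posterior $\mu_A$ is then $t\,\tilde g^*+(1-t)\delta_{\mu_0}$. By \eqref{eq:qi}, every buyer's posterior $q_i$ is an affine image of $\mu_A$, so buyer $i$'s posterior lottery is the same convex combination of her lottery under $g^*$ and the point mass at $q_i^0$. The quantities in question are the source's revenue, which is an expectation over posteriors as in the concavification of Theorem~\ref{thm:binary}, and each wedge $W_i$ in \eqref{eq:wedge}. Both are expectations of fixed functions against these lotteries, so both are affine in $t$.

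One point needs care: the source's revenue also depends on the screening allocation, which could be re-optimized as $t$ varies. I would handle this in two ways. In case (i) of Theorem~\ref{thm:binary} the allocation stays at $(1,0)$, as recorded in Proposition~\ref{prop:threshold}. In case (ii) the rent and the distortion $q_1(\low)=1-(\bar\high-\bar\low)g_\high$ scale linearly with the effective reveal probability $t g_\high$, so revenue remains affine in $t$. It follows that total welfare, the source surplus plus $m\int W_i\,dF$, is affine in $t$ and is therefore maximized at $t\in\{0,1\}$ for every $F$.

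For the cap on the reveal posterior, I would restrict $S_1$ to posteriors in $[0,\bar\mu]$ and re-solve its concavification. The source's value is convex on each side of $\pi$ and positive only through the pool at $\pi$. Its constrained optimum is therefore the split $\{\pi,\bar\mu\}$, which is exactly the truncated rule, and Proposition~\ref{prop:threshold}(ii) then identifies this as the planner's optimum for $m>m^*$.

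For part (ii) I would work inside the anatomy of Theorem~\ref{thm:engine}, adding the constraint $\int\phi\,dF\le T$ to program \eqref{eq:program}. The argument has three steps.

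\begin{itemize}[leftmargin=1.6em,itemsep=0.1em,topsep=0.2em]
\item[(1)] Show that $B(T)$ is strictly increasing on $[0,T_{\mathrm{un}}]$. The cap binds below $T_{\mathrm{un}}$ with a positive multiplier, and the value function of a linear program is concave in its right-hand side, so $B$ is concave. Together with $B'(T)>0$ for $T<T_{\mathrm{un}}$, which follows from strict gains to enlarging the pool with bottom types whose $s_\lambda>0$, this gives strict monotonicity.
\item[(2)] Use the two-message implementation of Lemma~\ref{lem:obed}. The no-discount message's $\beta$-mean rises as the pool shrinks and falls as it grows. The exposed buyer's reveal-side posterior is her no-disclosure posterior plus her loading times the shift in that mean. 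Under the continuum version of \eqref{eq:qi}, this posterior therefore crosses $q^*$ at a critical cap $\bar T$.
\item[(3)] Apply the closed form of Theorem~\ref{thm:incidence}(i), transported to the continuum. It gives $W_i(T)=0$ below $\bar T$, a downward jump of size roughly $\Pr(\text{no-discount})\,\low(1-q^*)$ at $\bar T$, and attenuation beyond as $q_i^R$ rises, matching part (ii) of that theorem.
\end{itemize}

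Welfare is $B(T)+m\,W_i(T)$. Among caps $T\le\bar T$, $B$ is increasing and $W_i=0$, so $\bar T$ is the best of them. Caps above $T_{\mathrm{un}}$ are equivalent to $T_{\mathrm{un}}$. For $T\in(\bar T,T_{\mathrm{un}}]$, the stated inequality says exactly that $B(T)+mW_i(T)\le B(\bar T)$. This yields optimality of $\bar T$.

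The main obstacle is step (2): tracking how the re-optimized mechanism's no-discount posterior moves with $T$, including across the degenerate sub-case $\bar\phi=0$. I would first try to prove monotonicity of that posterior in $T$ from binding obedience.
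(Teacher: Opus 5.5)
\textbf{Part (i) is fine; part (ii) has a genuine error.} Your overall architecture is right, and part (i) goes through apart from one justification noted below. Part (ii), however, goes wrong exactly at the step you flag as the main obstacle: step (2) states the comparative static backwards. Write $m_n$ for the $\beta$-mean of the no-discount message $n$ in the two-message implementation. Then $m_n=\E[\beta]$ at $T=0$, and $m_n$ \emph{rises} as the pool grows. Read literally, your step (2) would make the exposed buyer harmed under small caps and safe under large ones, which is the reverse of the statement and of your own step (3).

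\textbf{The fix needs no tracking of the anatomy.} You do not need to follow the re-optimized anatomy at all, the sub-case $\bar\phi=0$ included. Suppose the cap binds, $\Prob(d)=T$, and obedience binds, $\E[\beta\mid d]=q^*$. Then Bayes plausibility alone gives $(1-T)\,m_n+T\,q^*=\E[\beta]$. Hence $m_n(T)=(\E[\beta]-Tq^*)/(1-T)$, whose derivative is $(\E[\beta]-q^*)/(1-T)^2>0$. The exposed buyer's no-discount posterior is therefore $q_i^0+\rho_iT(\E[\beta]-q^*)/(1-T)$. It crosses $q^*$ at $\bar T=(q^*-q_i^0)/[q^*-q_i^0+\rho_i(\E[\beta]-q^*)]$, which equals $2/3$ in the paper's illustration. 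For $T>\bar T$ you get $|W_i(T)|=\low[(1-T)(1-q_i^0)-\rho_iT(\E[\beta]-q^*)]$. So the wedge is zero at $\bar T$ by the tie-breaking convention, jumps to $(1-\bar T)\low(1-q^*)$ just above it, and is strictly decreasing beyond.

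\textbf{You must still show both constraints bind for every $T<T_{\mathrm{un}}$.}
\begin{itemize}
\item \emph{The cap, and strict monotonicity of $B$.} Combine concavity of $B$ with essential uniqueness in Theorem~\ref{thm:engine}. A slack cap, or a flat stretch of the concave nondecreasing $B$ below $T_{\mathrm{un}}$, would yield an unconstrained optimum with pool mass below $T_{\mathrm{un}}$. This is cleaner than your heuristic about adding bottom types.
\item \emph{Obedience.} The constant-drift identity writes the objective as $\E[z\psi]+(\high-\low)\beta_0\E[\phi]$. So at a fixed pool mass the seller values discount mass only for the room it gives $z$. With obedience dropped, the capped optimum in the anatomy's setting therefore places the discount on an upper set of types. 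That set's $\beta$-mean exceeds $\E[\beta]>q^*$, so linearity forces obedience to bind. Without this, $\E[\beta\mid d]$ can move with $T$ and the formula for $m_n$ fails.
\end{itemize}

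\textbf{Two smaller points.}
\begin{itemize}
\item In part (i), $S_1$'s pointwise value $J$ is not positive only at $\pi$. It is affine on $[0,\pi]$, with value $(\high-\low)(\bar\low-\mu_0\bar\high)/(1-\mu_0)$ at $0$. The truncation still follows. The split $\{a,\bar\mu\}$ is worth $\frac{\bar\mu-\mu_0}{\bar\mu-a}J(a)$, which is linear-fractional in $a$, so only $a\in\{0,\pi\}$ compete. Lowering the top posterior from $1$ to $\bar\mu$ only widens the advantage that $a=\pi$ already has in Theorem~\ref{thm:binary}.
\item Your step (1) reads $B$ as the value of \eqref{eq:program}. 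If $B$ is meant instead as source-side surplus, i.e.\ the object added to $m\,W_i$ in total welfare, its monotonicity needs its own argument.
\end{itemize}
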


The discontinuous fall of the wedge at $\bar T$ is exactly the marginal crossing being the most harmful one, by Theorem~\ref{thm:incidence}(ii); the optimal data-minimization rule thus permits the largest disclosure that keeps the exposed population inside its pricing region. In the environment of Example~\ref{ex:engine} with a unit mass of exposed buyers at $(0.30,1)$, the crossing cap is exactly $\bar T=2/3$ against an unconstrained pool mass of $0.74$, the displayed condition holds, and welfare under the cap exceeds both privacy and laissez-faire.

\subsection{What the bilateral optimum does not pin down}

The continuum optimum prescribes each type's discount \emph{odds}, not what the market learns: with obedience binding, every discount message carries posterior $\beta$-mean exactly $q^*$, while the no-discount side may be split into any family of messages with means above $q^*$.

\begin{proposition}[Implementation indeterminacy]\label{prop:implementation}
Under Theorem~\ref{thm:engine}'s conditions with binding obedience: (i) $S_1$'s revenue and every constraint depend on the signal only through $\phi$, so $S_1$ is exactly indifferent across implementations; with binary types the implementation is payoff-unique, so the indeterminacy is a continuum phenomenon. (ii) The choice is payoff-relevant for the exposed population, priced off every message. (iii) Under selling ($\alpha>0$), $S_2$'s willingness to pay selects the convex-order most informative implementation; under consent, vetoing buyers are insulated from it.
\end{proposition}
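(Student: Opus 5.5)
The plan is to treat the three parts separately, using that in the continuum program \eqref{eq:program} the signal enters only through the discount-probability schedule $\phi$ and the obedience constraint.

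\emph{Part (i).} By the revenue representation \eqref{eq:revrep}, $S_1$'s revenue at participation-binding transfers is $\E[x\psi]+(\high-\low)\E[\phi v]$. Incentive compatibility is the monotonicity of $z=x+c\phi$ by Lemma~\ref{lem:zred}. Obedience aggregates to $\E[(\beta-q^*)\phi]\le0$ by Lemma~\ref{lem:obed}. Each of these depends on $g$ only through $\phi(\theta)=\E_g[\ind\{\text{discount}\}\mid\theta]$. So any two signals inducing the same $\phi$, with every discount message obedient and every no-discount message having $\beta$-mean above $q^*$, are payoff-equivalent for $S_1$.

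To show the indeterminacy is real, I would take the optimal $\phi$ of Theorem~\ref{thm:engine}(ii) and exhibit two implementations. The first is the two-message one of Lemma~\ref{lem:obed}. The second splits the no-discount message by type: for instance, it sends types above a cutoff to one message and types below to another. Both resulting means stay above $q^*$. This is possible because the no-discount mass sits on $(\theta_{\lambda^*},1]$, where $\beta>q^*$ whenever $\theta_{\lambda^*}$ exceeds the root of $\beta=q^*$. That inequality follows from binding obedience together with $\bar\phi<1$.

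For binary types, the posterior about a type-$\high$ message is pinned down. Any no-discount message can only be sent by whichever types send it, and feasibility forces the discount side to pool at $\pi$ while the no-discount side reveals $\mu_1=1$, since only $\high$ reports are ever withheld under $g^*$. Any split therefore induces the same distribution of posteriors: this is the essential uniqueness of Theorem~\ref{thm:binary}.

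\emph{Part (ii).} I would use Assumption~\ref{ass:pop}'s continuum form $q_i(s)=q_i^0+\rho_i(\E[\beta\mid s]-\E\beta)$. Take an exposed buyer whose baseline $q_i^0$ lies below $q^*$, chosen so that the coarse no-discount mean carries her posterior just below $q^*$ while the finer top message carries it above. For her, the wedge formula of Proposition~\ref{prop:dwl} gives $W_i=0$ under one implementation and $W_i<0$ under the other. Both implementations share the same discount messages, so the difference comes entirely from the tent-shaped $D$ evaluated on the no-discount messages.

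\emph{Part (iii), selling.} Under selling, $S_1$ gains $\alpha$ times the increase in $S_2$'s profit. On the no-discount side, $S_2$ prices high, so her profit on the source is linear there. For the population, however, $\Pi(q)=\max\{\low,\high q\}$ is convex. Hence splitting any no-discount message weakly raises $S_2$'s willingness to pay, and this holds without affecting $S_1$'s own revenue by (i). The $\alpha$-weighted objective is therefore monotone in the convex order of the no-discount posteriors, and it selects full revelation of types on that side. Ties are broken toward the most informative implementation whenever some buyer's convexity is strict.

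\emph{Part (iii), consent.} A vetoing buyer is priced at $q_i^0$ regardless of the messages, so her payoff is independent of the implementation.

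The main obstacle is part (ii): producing a concrete population point that discriminates between implementations while remaining consistent with the linear-exposure loading. I would first try the parameters of Example~\ref{ex:engine}, comparing the two-message implementation with a fully revealing one on $(\theta_{\lambda^*},1]$.
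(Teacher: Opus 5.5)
Your proposal is correct and follows essentially the paper's route. Revenue and constraints depend on the signal only through $\phi$, via \eqref{eq:revrep} and Lemmas~\ref{lem:zred}--\ref{lem:obed}; the essential uniqueness of Theorem~\ref{thm:binary} settles binary types; a near-threshold exposed buyer's wedge switches on once the no-discount message is split; convexity of $\Pi$ composed with the affine loading drives the selling selection; and vetoing buyers are priced at $q_i^0$.

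Two small repairs are needed. First, $\beta>q^*$ on $(\theta_{\lambda^*},1]$ follows from $\beta(\theta_{\lambda^*})=q^*+(\high-\low)\beta_0/\lambda^*\ge q^*$, or from $\beta(T)=2q^*-\beta_0>q^*$ in the corner case $\bar\phi=0$. It does not follow from binding obedience and $\bar\phi<1$, which only place the pooled no-discount mean above $q^*$. Second, you should state explicitly that binding obedience forces every discount message's $\beta$-mean to equal $q^*$ exactly. That is what confines all population-relevant variation, and hence the convex-order maximum, to splits of the no-discount side.
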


The bilateral market thus leaves a welfare-relevant margin unpriced. The data market resolves it toward full revelation, which spreads harm to below-threshold buyers the coarser implementation spared while diluting the harm of those it crossed: no implementation is uniformly consumer-preferred, and the margin is genuinely policy-relevant.
\section{Robustness}\label{sec:robust}

Two features of the model invite scrutiny: the sequential resolution of the buyer's uncertainty and the monopoly structure of the downstream market. The first is essential and identifies the paper's engine; the second is a comparative static, not a hostage.

\subsection{Sequential learning is essential}

In the baseline model the buyer learns his second-period valuation only when he meets $S_2$. Suppose instead he knows both valuations when he contracts with $S_1$, so his type is the pair $(\theta_1,\theta_2)\in\{\low,\high\}^2$. This is the static environment of \citet{CalzolariPavan2006}.

\begin{proposition}[Sequential learning is essential]\label{prop:static}
Suppose the buyer knows both $\theta_1$ and $\theta_2$ when contracting with $S_1$, and let preferences be perfectly positively correlated or let $\bar\low\ge q^*$. Then privacy is optimal: no informative signal raises $S_1$'s revenue.
\end{proposition}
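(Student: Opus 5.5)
The plan is to redo the second-best analysis of Section~\ref{sec:single} on the four-point type space $(\theta_1,\theta_2)\in\{\low,\high\}^2$, with prior weights $p_{\theta_1\theta_2}$ built from $\mu_0$, $\bar\high$, $\bar\low$. We keep the revelation-principle mechanism $\{p_1,q_1,g\}$ and the discount probability $\phi_{b}$ that a report $b=(b_1,b_2)$ induces. The structural fact that replaces Lemma~\ref{lem:cont} is this: a buyer with $\theta_2=\high$ values a discount at exactly $\high-\low$, and a buyer with $\theta_2=\low$ values it at zero. Both know this when contracting. The payoff of type $(\theta_1,\theta_2)$ from report $b$ is therefore $\theta_1q_1(b)-p_1(b)+(\high-\low)\ind\{\theta_2=\high\}\phi_b$. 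Obedience aggregates as in the binary model. Every discount message must carry $\Prob(\theta_2=\high\mid s)\le q^*$, which pools into the single constraint
\[
(1-q^*)\textstyle\sum_{\theta_1}p_{\theta_1\high}\phi_{\theta_1\high}\le q^*\sum_{\theta_1}p_{\theta_1\low}\phi_{\theta_1\low},
\]
binding whenever disclosure is used on the regime $q(\mu_0)>q^*$.

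The core step is an upper bound on what $S_1$ can extract from the discount. Within each $\theta_1$ stratum, the incentive constraint of $(\theta_1,\high)$ against reporting $(\theta_1,\low)$ leaves him a rent of at least $(\high-\low)\phi_{\theta_1\low}$ on top of whatever $(\theta_1,\low)$ gets. The type $(\theta_1,\low)$ pays nothing for his own discount odds, since he values them at zero. Hence the discount-related revenue is at most $(\high-\low)\sum_{\theta_1}p_{\theta_1\high}(\phi_{\theta_1\high}-\phi_{\theta_1\low})$. This must be set against the first-period revenue that the mechanism would earn with $\phi\equiv0$. I would write the relaxed program keeping only these within-stratum downward constraints, the $\theta_1$-downward constraints, and the low-low participation constraint, exactly as in Theorem~\ref{thm:binary}. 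Then I would show that the coefficient on every $\phi$ is nonpositive once obedience is priced with a multiplier.

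The two hypotheses close the argument in different ways. Under perfect positive correlation only $(\low,\low)$ and $(\high,\high)$ carry mass. Obedience then forces $\phi_{\high\high}\le k\phi_{\low\low}$. The discount's value accrues only to $(\high,\high)$, who can secure $\phi_{\low\low}$ by mimicry, so every unit of discount he obtains is rent and nothing is sold to $(\low,\low)$. This is Corollary~\ref{cor:cp} in the static setting. Under $\bar\low\ge q^*$, each stratum's own $\theta_2=\high$ share, $\bar\high$ and $\bar\low$, is at least $q^*$. Obedience therefore forces $\sum p_{\theta_1\high}\phi_{\theta_1\high}\le \tfrac{q^*}{1-q^*}\sum p_{\theta_1\low}\phi_{\theta_1\low}\le\sum p_{\theta_1\high}\phi_{\theta_1\low}$, using $p_{\theta_1\high}/p_{\theta_1\low}\ge q^*/(1-q^*)$ stratum by stratum. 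So the extractable bound $(\high-\low)\sum p_{\theta_1\high}(\phi_{\theta_1\high}-\phi_{\theta_1\low})$ is at most zero. Disclosure cannot raise revenue, and any $\phi\neq0$ weakly lowers it.

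The main obstacle I expect is the cross-stratum constraints: $(\high,\cdot)$ types mimicking $(\low,\cdot)$ reports, and the analogue of \eqref{eq:M}, where a cheap high contract bundled with discount odds tempts a low-$\theta_1$ type. These could let a nonzero $\phi$ relax a first-period rent rather than only cost one. To handle this I would first solve the $\phi\equiv0$ benchmark, which is the standard two-type Myerson problem in $\theta_1$. I would then show that adding discount odds can only tighten the binding $\theta_1$-downward constraint. The reason is that the high-$\theta_1$ stratum weights the discount by $\bar\high\ge\bar\low$, so odds assigned to low-$\theta_1$ reports raise the high-$\theta_1$ mimicry payoff at least as much as they raise anything chargeable. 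If this monotonicity argument fails for some corner, for instance when $\mu_0<q^*$ and the low type is served, I would fall back on a direct linear-programming dual over the four types. There the obedience multiplier is guessed from the binding stratum inequality above, and complementary slackness at $\phi=0$ is checked case by case.
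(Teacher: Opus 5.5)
Your plan has a genuine gap in the relaxed program. You keep the within-stratum constraint $(\theta_1,\high)\to(\theta_1,\low)$ and the $\theta_1$-downward constraints. You then propose to control the cross-stratum channel by arguing that discount odds on low-$\theta_1$ reports raise the high-$\theta_1$ mimicry payoff, because that stratum weights the discount by $\bar\high$. That is the sequential model's logic. In the static model every type knows $\theta_2$, so $(\high,\low)$ weights the discount by zero. His payoff from mimicking $(\low,\high)$ is $U_{\low\high}+(\high-\low)\big(q_1(\low\high)-\phi_{\low\high}\big)$, so odds sold to $(\low,\high)$ \emph{relax} his rent. Obedience can then be met by handing the discount to $(\high,\low)$, who values it at zero and dilutes the pool.

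Your relaxation therefore strictly beats privacy. Take $\high=3$, $\low=1$, $\mu_0=3/10$, $\bar\high=\bar\low=7/20$, so $\bar\low>q^*=1/3$, the prior is undiscounted, and privacy earns $\low=1$. Set $\phi_{\high\high}=\phi_{\low\low}=0$, $\phi_{\high\low}=1$, $\phi_{\low\high}=e=39/98$, $q_1\equiv1$ except $q_1(\low\low)=1-e$, and put rents at the minimum your kept constraints allow. The discount message then has posterior exactly $q^*$, every kept constraint holds, and revenue is $20437/19600>1$. Your bound $(\high-\low)\sum p_{\theta_1\high}(\phi_{\theta_1\high}-\phi_{\theta_1\low})$ is negative here, exactly as you show; the gain sits entirely in the first-period term. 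So the step ``every $\phi$-coefficient is nonpositive once obedience is priced'' is false for the program you wrote down. Your LP-dual fallback is left unspecified at exactly the point where it is needed.

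The missing ingredient is the reverse within-stratum constraint $(\theta_1,\low)\to(\theta_1,\high)$, which the mechanism above violates. There $(\high,\high)$'s bundle is the same good at a price cut by the rent that the free discount to $(\high,\low)$ creates, so $(\high,\low)$ strictly prefers it. Adding this constraint to $U_{\theta_1\high}\ge U_{\theta_1\low}+(\high-\low)\phi_{\theta_1\low}$ yields $\phi_{\theta_1\high}\ge\phi_{\theta_1\low}$ in each stratum. Chain this with your own stratum inequality $q^*p_{\theta_1\low}\le(1-q^*)p_{\theta_1\high}$:
\[
(1-q^*)\textstyle\sum_{\theta_1} p_{\theta_1\high}\phi_{\theta_1\high}\;\ge\;(1-q^*)\sum_{\theta_1} p_{\theta_1\high}\phi_{\theta_1\low}\;\ge\;q^*\sum_{\theta_1} p_{\theta_1\low}\phi_{\theta_1\low}.
\]
Obedience therefore forces equality throughout: $\phi_{\theta_1\high}=\phi_{\theta_1\low}$, and $\phi=0$ on every stratum whose share $\bar\high$ or $\bar\low$ strictly exceeds $q^*$.
\begin{itemize}
\item For $\bar\low>q^*$, no discount can be induced at all, and every signal is payoff-equivalent to privacy.
\item At $\bar\low=q^*$, the odds are equal within the low stratum. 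The term $\phi_{\low\high}-\phi_{\low\low}$ that powered the example vanishes, and what remains is the Myerson problem in $\theta_1$, with value $\max\{\mu_0\high,\low\}$.
\end{itemize}

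Separately, your perfect-correlation argument covers only $\mu_0>q^*$. For $\mu_0\le q^*$ the prior is discounted, and the operative restriction is obedience of the no-discount messages; this is why the paper's four-type reduction carries two aggregate obedience constraints. That restriction gives $1-\phi_{\high\high}\ge\tilde k\,(1-\phi_{\low\low})$ with $\tilde k\ge1$, hence $\phi_{\high\high}\le\phi_{\low\low}$, after which your two-type accounting goes through.
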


\begin{remark}[The full static region]\label{rem:staticregion}
The conclusion is not special to the cases the proposition proves. With both valuations known, the seller's problem reduces to a four-type direct mechanism with two aggregate obedience constraints (the reduction is proved with Proposition~\ref{prop:static}), a parametric linear program. On the remaining region $\bar\low<q^*<\bar\high$ of
\begin{equation}\label{eq:staticcond}
  \max\{\bar\low,\ \low/\high\}<\bar\high ,
\end{equation}
the same conclusion is a computational finding rather than a theorem: linear programming on a covering grid and at four thousand interior draws finds no informative signal improving on privacy (replication certificate). The condition cannot be dropped: outside it, a signal conditioning on the second-period taste can strictly raise revenue---the proof of Proposition~\ref{prop:static} exhibits a mechanism earning $1.8$ against the privacy revenue $1.5$.
\end{remark}

The contrast with Theorem~\ref{thm:binary} is sharp. With sequential learning, disclosure is optimal on an open set of imperfect correlations; with static types, privacy prevails throughout \eqref{eq:staticcond}. What disclosure sells is the prospect of a downstream discount; that prospect has value only to a buyer who does not yet know whether he will want the second good. Resolve his uncertainty and the instrument loses its purchase. The robustness of the \citet{CalzolariPavan2006} privacy result is thus a feature of their static types, not of correlation \emph{per se}: it is exactly the stochastic evolution of preferences, in the sell-the-option sense of \citet{EsoSzentes2007}, that overturns it.

\subsection{Competition: the externality is a market-power phenomenon}

The wedge of Proposition~\ref{prop:dwl} operates on the downstream deadweight loss. Competition disciplines that loss, and with it the externality, in a way the representation makes exact. Let buyer-side surplus count a buyer's consumer surplus and all downstream producer surplus on her.

\begin{proposition}[Downstream competition]\label{prop:competition}
Modify the downstream market, holding fixed the population structure of Assumption~\ref{ass:pop}.
\begin{itemize}[leftmargin=1.6em,itemsep=0.1em,topsep=0.2em]
\item[(i)] \emph{(Bertrand.)} If two or more downstream sellers with identical zero marginal cost post prices, the equilibrium price is zero after every belief: every type trades, $D\equiv0$, and $W_i(g)=0$ for every buyer and signal. Beliefs do not move price, so the discount premium and $S_1$'s gain from disclosure are zero---under competition the externality has nothing to act on.
\item[(ii)] \emph{(Constrained monopoly.)} Let $S_2$ face an outside option at price $p^{\mathrm o}>\low$ for a perfect substitute, and write $\tilde p=\min\{\high,p^{\mathrm o}\}\in(\low,\high]$ ($\tilde p=\high$ the monopoly benchmark). Then $S_2$ discounts belief $q$ if and only if $q\le q^*_{\mathrm o}\equiv\low/\tilde p$, the deadweight loss is the tent $D_{\mathrm o}(q)=\low(1-q)\ind\{q>q^*_{\mathrm o}\}$, and for every Bayes-plausible $g$, $W_i(g)\le0$ when $q_i^0\le q^*_{\mathrm o}$ and $W_i(g)\ge0$ when $q_i^0>q^*_{\mathrm o}$, with the asymmetry of Theorem~\ref{thm:incidence} verbatim at $q^*_{\mathrm o}$ and $|W_i(g)|\le\low(1-\low/\tilde p)$, tight and falling to zero as $\tilde p\downarrow\low$. The discount premium $\tilde p-\low$, and with it $S_1$'s gain from disclosure, vanishes at the same rate. Once $q(\mu_0)\le q^*_{\mathrm o}$ the market discounts absent any disclosure, no informative signal raises $S_1$'s revenue, and---$S_1$ choosing silence when indifferent---the equilibrium externality is identically zero.
\item[(iii)] \emph{(The threshold moves; the law does not.)} As $\tilde p$ falls, the threshold $\low/\tilde p$ rises, and a buyer with $q_i^0\in(\low/\high,\,\low/\tilde p]$---a beneficiary under monopoly---becomes weakly harmed: competition itself supplies the discount disclosure created for her, leaving disclosure only the power to destroy it. The same law governs at every $\tilde p>\low$---harm below the prevailing threshold, gain above---only the assignment of buyers to the two sides shifting.
\end{itemize}
\end{proposition}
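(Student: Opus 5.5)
The proof runs part by part. Each part reduces to recomputing $S_2$'s best response and the downstream deadweight loss under the modified market. Once that is done, the wedge representation of Proposition~\ref{prop:dwl} and the closed forms of Theorem~\ref{thm:incidence} apply with a relabelled threshold.

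\textbf{Part (i).} With at least two zero-cost Bertrand sellers, the standard undercutting argument gives price zero after every posterior. I would fix an arbitrary $q$ and show that any positive lowest price is profitably undercut. Every type then trades, so $\Sigma(q)=\bar\Sigma(q)$ and $D\equiv0$. The identity \eqref{eq:generalwedge} then gives $W_i(g)=0$ for every $g$. The continuation payoff of Lemma~\ref{lem:cont} becomes $\theta_2$ regardless of $s$. The buyer's report therefore no longer moves his continuation value, and the value-of-disclosure term in $S_1$'s objective is constant in $g$, so the gain from disclosure is zero.

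\textbf{Part (ii).} I would first compute $S_2$'s constrained best response. A high type buys from $S_2$ at any price up to $\tilde p$. The low type buys only at price $\le\low$, since $p^{\mathrm o}>\low$ means the outside option never serves him profitably. The candidate prices are therefore $\low$, earning $\low$, and $\tilde p$, earning $\tilde p q$. She discounts iff $q\le \low/\tilde p$, ties going to the discount as before.

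Buyer-side surplus is then the linear full-trade term minus $D_{\mathrm o}(q)=\low(1-q)\ind\{q>q^*_{\mathrm o}\}$, because the excluded low type is the only lost trade. I would treat the high type's purchase from the outside seller as carrying the same surplus, which is where "all downstream producer surplus" in the statement matters.

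$D_{\mathrm o}$ has the same tent shape as $D$. It is zero and linear on $[0,q^*_{\mathrm o}]$, jumps up at the threshold, and is linear above it. The sign claims then follow from the concavification criterion of Proposition~\ref{prop:dwl}:
\begin{itemize}
\item Below the threshold, $D_{\mathrm o}(q_i^0)=0\le\E D_{\mathrm o}$, so $W_i\le0$.
\item Above the threshold, $D_{\mathrm o}$ is linear on $(q^*_{\mathrm o},1]$ and lies below that line on $[0,q^*_{\mathrm o}]$. Hence $\E D_{\mathrm o}(q_i(s))\le D_{\mathrm o}(q_i^0)$, so $W_i\ge0$.
\end{itemize}
The asymmetry statements are the computations of Theorem~\ref{thm:incidence}(ii) with $q^*$ replaced by $q^*_{\mathrm o}$.

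For the bound, $|W_i|\le\sup D_{\mathrm o}$ on the relevant side. This supremum is $\low(1-q^*_{\mathrm o})=\low(1-\low/\tilde p)$, approached just above the threshold, and it is tight by the crossing-buyer sequence of Theorem~\ref{thm:incidence}.

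The discount premium is $\tilde p-\low$. In Lemma~\ref{lem:cont} the high type's gain from a discount becomes $(\tilde p-\low)$ times the indicator, so $S_1$'s disclosure value is scaled by $\tilde p-\low$. Finally, if $q(\mu_0)\le q^*_{\mathrm o}$, the prior is already discounted. Any signal then only creates no-discount messages, which weakly lowers the continuation value, so silence is optimal.

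\textbf{Part (iii).} This is a comparative-statics reading of (ii): $q^*_{\mathrm o}$ is decreasing in $\tilde p$. A buyer in $(\low/\high,\low/\tilde p]$ sits on the harm side of the new threshold, so (ii) gives $W_i\le0$.

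\textbf{Main obstacle.} The delicate step is the bookkeeping in (ii). I need to verify that the outside-option purchase by a high type under the high price does not itself create a deadweight loss, so that $D_{\mathrm o}$ is exactly the stated tent. I would settle this by defining $\bar\Sigma$ as the surplus with all types trading, which is unchanged across sellers.
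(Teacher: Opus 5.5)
Your architecture is the right one: recompute $S_2$'s best response, read off the tent $D_{\mathrm o}$, and feed it to Proposition~\ref{prop:dwl} and to Theorem~\ref{thm:incidence} at the relocated threshold. Part (i), part (iii), and the sign claims and bound in (ii) go through. The step that fails is the last claim of (ii).

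``Any signal only creates no-discount messages, which weakly lowers the continuation value, so silence is optimal'' is a first-best argument. With $\theta_1$ private, $S_1$'s revenue is continuation value net of the high type's rent, and that rent rises with the low report's discount odds. Rerun \eqref{eq:reduced} with premium $\tilde p-\low$. The coefficient on $\phi_\low$ is $(\tilde p-\low)(\bar\low-\mu_0\bar\high)$. This is negative whenever $\mu_0\bar\high>\bar\low$, which is compatible with $q(\mu_0)\le q^*_{\mathrm o}$ (take $\bar\low=0$), and lowering $\phi_\low$ also relaxes \eqref{eq:M}. So withholding the discount from the low report alone would raise revenue if it were feasible.

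What rules it out is Bayes plausibility, not the continuation value. Every no-discount message must carry source posterior above $\pi_{\mathrm o}=(q^*_{\mathrm o}-\bar\low)/(\bar\high-\bar\low)\ge\mu_0$. This aggregates to $1-\phi_\high\ge\tilde k(1-\phi_\low)$ with $\tilde k\ge1$, and along that boundary the revenue slope in $\phi_\low$ is proportional to $\bar\low+\mu_0\bar\high(\tilde k-1)\ge0$. That is Step~5 of the proof of Theorem~\ref{thm:binary} rerun at $q^*_{\mathrm o}$, with the mirror-image constraint under negative correlation, and it is the idea your argument is missing.

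You also need the silence tie-break explicitly. Informative signals whose messages all remain discounted are payoff-equivalent for $S_1$, yet they can carry exposed third parties across $q^*_{\mathrm o}$. So optimality of silence alone does not make the equilibrium externality zero.

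Two smaller points. First, tightness of $|W_i(g)|\le\low(1-\low/\tilde p)$ cannot be read off the crossing sequence of Theorem~\ref{thm:incidence}. That sequence lives under the equilibrium two-point rule, whose reveal event has probability $r<1$, so the harm to buyers carried just above the threshold approaches only $r\low(1-q^*_{\mathrm o})$. Since the bound is asserted over all Bayes-plausible $g$, use a general signal instead. Send the source posterior slightly above $\mu_0$ with probability $1-\delta$ and to $0$ otherwise, and let $q_i^0\uparrow q^*_{\mathrm o}$ as $\delta\downarrow0$, so that the likely message lifts $q_i$ just past $q^*_{\mathrm o}$. Then $\E_s[D_{\mathrm o}(q_i(s))]\to\low(1-q^*_{\mathrm o})$.

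Second, your ``main obstacle'' is not one. $S_2$ prices at $\tilde p$ by matching the outside offer, so in equilibrium the high type buys from her and no outside purchase enters $\Sigma$. Your claim that full-trade surplus is unchanged across sellers would fail if the substitute were costly, but it is never needed.
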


In the environment of Example~\ref{ex:ext} the externality vanishes already at $\tilde p\le8/3$. Downstream competition therefore neither reverses the externality's structure nor preserves its incidence: it relocates the threshold and shrinks every buyer's stake, uniformly, at the rate of the downstream margin. The privacy externality is a phenomenon of downstream market power, largest where pricing is least contested---which is where the policy instruments of Section~\ref{sec:market} have their work to do.

\section{Conclusion}\label{sec:conclusion}

This paper has studied a firm's incentive to disclose or sell data about its customers, and traced its consequences once those customers' data are correlated. Disclosure is a costly screening instrument: the firm reveals a customer's record to persuade a downstream seller to discount and charges the customer for the prospect. The characterization of when the instrument earns its keep is exact, and incentive compatibility shapes it twice over---a rent to the type who covets the discount, a supply distortion or an exclusion to the type who would chase it, and, with a continuum of types, a rationing of the discount's odds at the top of the type space, where deterministic disclosure is strictly suboptimal. The classical optimality of privacy is the perfectly correlated boundary of the disclosure region, and what overturns it is the sequential resolution of the buyer's tastes.

Correlation changes whom the instrument touches. Disclosure about one customer moves the market's belief about every similar customer---and the price of every customer it carries across the pricing threshold---a privacy externality the firm does not internalize, equal to the disclosure-induced reduction in downstream deadweight loss. Its incidence is asymmetric---concentrated on the buyers it just carries across the pricing threshold---and its policy economics are not the privacy debate's binary. The planner's instrument is a truncation, not a ban; consent dominates both poles because each customer's self-interest computes the sign of her own wedge; caps that dilute the signal are bang-bang while caps on its favorable tail implement the truncation, the continuum optimum stopping at the population's crossing threshold; and the data market is neutral until an explicit liquidity threshold, past which it tips disclosure to full revelation. The externality itself is a creature of downstream market power, vanishing under Bertrand pricing at the same rate as the instrument's private value.

A record disclosed today prices transactions tomorrow: persistent records compound the externality across time, and the buyer's anticipation of future disclosure feeds back into today's reports. The costly-screening language developed here---an instrument priced by the monotonicity it must respect, an externality summarized by the deadweight loss it moves---prices a single record's disclosure, and the same accounting extends across periods once records persist.

\appendix
\section{Proofs for Sections 3--4}\label{app:binary}

\noindent\textbf{Proof of Lemma~\ref{lem:cont}.} Facing a buyer with $\Prob(\theta_2=\high)=q$, seller $S_2$ earns $\high q$ from posting $\high$ (only the high type buys) and $\low$ from posting $\low$ (both buy), so she discounts iff $\low\ge\high q$, i.e.\ $q\le q^*$, equivalently $\mu_1\le\pi$ by \eqref{eq:pi}, the tie at $q=q^*$ resolved toward the discount by the convention of Section~\ref{sec:model}. At the discount price $\low$ a high second-period type gets $\high-\low$ and a low type $0$; at $\high$ both get $0$. Hence the high type's surplus is $Q(\mu_1)=(\high-\low)\ind\{\mu_1\le\pi\}$ and the low type's is $0$. Conditioning on a report $b_1$ and integrating over the signal gives the stated continuation values. $\square$

\medskip
\noindent\textbf{Proof of the feasibility aggregation.} We record once, for both correlation signs, the feasible set of discount probabilities $(\phi_\low,\phi_\high)$. Under positive correlation with $\mu_0>\pi$, a message $s$ triggers the discount iff its posterior satisfies $\mu_1(s)\le\pi$. For any signal, summing Bayes' rule over the discount messages gives $\mu_0(1-\pi)\phi_\high\le\pi(1-\mu_0)\phi_\low$, i.e.\ $\phi_\high\le k\phi_\low$; richer message structures cannot enlarge the set, since the bound aggregates message-level constraints. Conversely, any pair with $\phi_\high\le k\phi_\low$ is implemented by pooling all discount weight into a single message, whose posterior is at most $\pi$ by construction, the complement then carrying posterior above $\pi$ because the prior $\mu_0>\pi$ is their convex combination (the complement is nonempty: $\phi_\high=\phi_\low=1$ would put the prior in the discount message, contradicting $\mu_0>\pi$). The negative-correlation case is the mirror image: discount iff $\mu_1\ge\pi$, and the feasible set is $\phi_\low\le\phi_\high/\tilde k$ when $\mu_0<\pi$. $\square$

\medskip
\noindent\textbf{Proof of Proposition~\ref{prop:fb}.} \emph{Reduction.} With $\theta_1$ observable, type $\theta_1$ accepts iff $\theta_1 q_1(\theta_1)-p_1(\theta_1)+\beta(\theta_1)\E_g[Q\mid\theta_1]\ge0$, where $\beta(\high)=\bar\high$, $\beta(\low)=\bar\low$. Both constraints bind, else $S_1$ raises a price; substituting the transfers into revenue gives \eqref{eq:fbobj}, whose trade term is maximized at $q_1\equiv1$. \emph{Concavification.} For a signal with unconditional probabilities $\bar g(s)$ and posteriors $\mu_1(s)$, Bayes' rule turns the disclosure term into $\E_{\bar g}[J_f(\mu_1)]$ with $J_f$ as in \eqref{eq:Jf}. By \citet{KamenicaGentzkow2011}, $S_1$ may induce any Bayes-plausible distribution of posteriors, so the optimal value is the concave closure of $J_f$ at $\mu_0$ and an informative signal strictly dominates iff the closure strictly exceeds $J_f(\mu_0)$. \emph{Region.} On $[0,\pi]$, $J_f$ rises from $(\high-\low)\bar\low$ to $(\high-\low)q^*$ and drops to $0$ on $(\pi,1]$. If $\bar\low>q^*$ then $\pi<0$ and $J_f\equiv0$: disclosure is worthless (the slice $\bar\low=q^*$ is excluded by hypothesis and settled in Remark~\ref{rem:boundary}). If $\bar\high=\bar\low$ every posterior carries the same $q$ and disclosure is payoff-irrelevant. If $\bar\high>\bar\low$, $\bar\low<q^*$, and $\mu_0\le\pi$ the prior already discounts and $J_f$ is concave at $\mu_0$. If $\bar\low<q^*$ and $\mu_0>\pi$, the chord from $(\pi,(\high-\low)q^*)$ to $(1,0)$ lies strictly above $J_f(\mu_0)=0$, so disclosure strictly dominates; the optimal posteriors $\{\pi,1\}$ are implemented by \eqref{eq:optrule}. $\square$

\medskip
\noindent\textbf{Proof of Theorem~\ref{thm:binary}.} Write $\Delta=\high-\low$, $u_b$ for the truthful payoff of type $b$, and $\phi_b$ for the discount probability of report $b$. The two incentive constraints read
\[
u_\high-u_\low\;\ge\;\Delta\,q_1(\low)+\Delta(\bar\high-\bar\low)\,\phi_\low
\;=:\;A_\low,
\qquad
u_\high-u_\low\;\le\;\Delta\,q_1(\high)+\Delta(\bar\high-\bar\low)\,\phi_\high
\;=:\;A_\high .
\]
\emph{Step 1 (feasibility and rents).} The pair is feasible iff $A_\low\le A_\high$, which is \eqref{eq:M}; if \eqref{eq:M} fails the constraint set is empty. Revenue is decreasing in $(u_\low,u_\high)$, so at the optimum $u_\low=0$ and $u_\high=A_\low\ge0$: IR$_\low$ and IC$_\high$ bind, and IC$_\low$ binds exactly when \eqref{eq:M} binds. Substituting,
\begin{equation}\label{eq:reduced}
R=\mu_0\high\,q_1(\high)+(\low-\mu_0\high)\,q_1(\low)+\Delta\big[\mu_0\bar\high\,\phi_\high+(\bar\low-\mu_0\bar\high)\,\phi_\low\big],
\end{equation}
to be maximized subject to \eqref{eq:M}, $\phi_\high\le k\phi_\low$, and box constraints. Since lotteries enter linearly and the feasibility aggregation above is already stated for arbitrary signals, no stochastic menu enlarges the problem.

\emph{Step 2 (dominant margins).} Raising $q_1(\high)$ raises $R$ and relaxes \eqref{eq:M}: $q_1(\high)=1$. Raising $\phi_\high$ to its persuasion bound $k\phi_\low$ raises $R$ (coefficient $\mu_0\bar\high\Delta>0$) and relaxes \eqref{eq:M}: $\phi_\high=k\phi_\low$.

\emph{Step 3 (case $\mu_0\ge q^*$).} The coefficient on $q_1(\low)$ is $\low-\mu_0\high\le0$, so $q_1(\low)=0$, and \eqref{eq:M} reads $1+(\bar\high-\bar\low)k\phi_\low\ge(\bar\high-\bar\low)\phi_\low$, which holds strictly for every $\phi_\low\le1$: IC$_\low$ is slack. $R$ is then linear in $\phi_\low$ with slope $\Delta[\bar\low-\mu_0\bar\high g_\high]$, where we used $1-k=g_\high$, and the algebraic identity
\[
\bar\low-\mu_0\bar\high\,g_\high=\frac{(\bar\high-\bar\low)(q^*-\mu_0\bar\high)}{\bar\high-q^*},
\qquad g_\high=\frac{q(\mu_0)-q^*}{\mu_0(\bar\high-q^*)},
\]
both verified by clearing denominators, signs the slope by $q^*-\mu_0\bar\high$. Disclosure ($\phi_\low=1$) is strictly optimal iff $\mu_0\bar\high<q^*$, with rent $u_\high=\Delta(\bar\high-\bar\low)$.

\emph{Step 4 (case $\mu_0<q^*$).} Now $\low-\mu_0\high>0$, so $q_1(\low)$ rises to its \eqref{eq:M}-cap $1-(\bar\high-\bar\low)g_\high\phi_\low$. Substituting,
\[
R(\phi_\low)=\low+\phi_\low\Big\{\Delta\big[\bar\low-\mu_0\bar\high g_\high\big]-(\low-\mu_0\high)(\bar\high-\bar\low)g_\high\Big\},
\]
linear in $\phi_\low$, so disclosure is strictly optimal iff the brace is positive, which clearing $g_\high$ is the left side of \eqref{eq:identity} being positive. The identity \eqref{eq:identity} itself is a polynomial identity in $(\mu_0,q^*,\bar\high,\bar\low)$ after substituting $\low=\high q^*$ and dividing by $\high$; expanding both sides verifies it term by term. On the open region \eqref{eq:sbregion} with $\mu_0<q^*$ the first bracketed term on the right of \eqref{eq:identity} is nonnegative and the second strictly positive ($\mu_0<q^*$, $\bar\low<q^*$), so the brace is strictly positive: disclosure obtains everywhere there, the deterrence cost notwithstanding, and $\mu_0\bar\high\le\mu_0<q^*$ shows \eqref{eq:sbregion}'s first inequality is automatic on this case. The mechanism reads off the binding constraints: $\phi_\low=1$, $\phi_\high=k$, $q_1(\low)=1-(\bar\high-\bar\low)g_\high$, $u_\high=\Delta[q_1(\low)+(\bar\high-\bar\low)]$, both incentive constraints binding.

\emph{Step 5 (outside the region).} If $\bar\low>q^*$ then $\pi<0$: no posterior induces a discount, $\phi\equiv0$, and the problem is the standard one-good monopoly, i.e.\ privacy. If $q(\mu_0)\le q^*$ the prior itself is discounted. When $\bar\high\le q^*$ every posterior discounts, so $\phi_\high=\phi_\low=1$ is forced and every signal is payoff-equivalent to privacy. Otherwise $\pi\in[\mu_0,1)$, and summing the message-level no-discount conditions shows the feasible set is $\{\phi_\high=\phi_\low=1\}$ together with pairs satisfying $\mu_0(1-\pi)(1-\phi_\high)\ge\pi(1-\mu_0)(1-\phi_\low)$, that is, $1-\phi_\high\ge\tilde k(1-\phi_\low)$ with $\tilde k=\pi(1-\mu_0)/[\mu_0(1-\pi)]\ge1$ here (the formula of Section 4.1's $k$, on the other side of $\pi$); along the binding boundary the revenue slope in $\phi_\low$ is proportional to $\bar\low+\mu_0\bar\high(\tilde k-1)\ge0$, so revenue weakly increases toward the uninformative point $(1,1)$: privacy (with the discount priced into both contracts) is optimal and no informative signal improves on it. If $\bar\low<q^*<q(\mu_0)$ but $\mu_0\bar\high\ge q^*$, Step 3's slope is nonpositive and privacy is weakly, and for $\mu_0\bar\high>q^*$ strictly, optimal; at $\mu_0\bar\high=q^*$ exactly, discount-inducing signals tie with privacy.

\emph{Step 6 (essential uniqueness).} On the open region with $\mu_0\neq q^*$: the slopes in Steps 3--4 are strictly signed, so $\phi_\low=1$, $\phi_\high=k\phi_\low$, and $q_1(\low)$ (at its stated corner or cap) are unique, as are the rents. For the signal: given $\phi_\low=1$ and $\phi_\high=k$, the discount event's aggregate posterior is $\mu_0k/(\mu_0k+1-\mu_0)=\pi$; each discount message carries posterior at most $\pi$ and they average $\pi$, so every discount message carries posterior exactly $\pi$; non-discount messages have zero likelihood under $\low$, hence posterior $1$. The signal is the two-point rule \eqref{eq:optrule}, uniquely up to relabeling and duplication of messages. At $\mu_0=q^*$ exactly, the coefficient on $q_1(\low)$ vanishes and a continuum of optimal supply levels coexists---the knife edge is genuine---while the $\phi$-slopes remain strictly signed, so the signal stays essentially unique. $\square$

\medskip
\noindent\textbf{Proof of Corollary~\ref{cor:cp}.} At $(\bar\high,\bar\low)=(1,0)$, $q(\mu_0)=\mu_0=\mu_0\bar\high$, so \eqref{eq:sbregion} requires $\mu_0\bar\high<q^*<q(\mu_0)=\mu_0\bar\high$, which is empty; by Theorem~\ref{thm:binary}'s Step 5, privacy is optimal. $\square$

\medskip
\noindent\textbf{Proof of Remark~\ref{rem:boundary}.} At $\bar\low=q^*$, $\pi=0$: the only discount message is the full revelation of $\low$, so $\phi_\high=0$ and $\phi_\low\in[0,1]$ is free. In the first best, the disclosure value per unit of $\phi_\low$ is $(1-\mu_0)\bar\low\Delta>0$, so disclosure is strictly optimal whenever a discount is worth manufacturing, $q(\mu_0)>q^*$. In the second best under $\mu_0\ge q^*$: $q_1(\low)=0$ and the $\phi_\low$-slope is the Step-3 expression $\Delta[\bar\low-\mu_0\bar\high g_\high]$ evaluated at $g_\high=1$, namely $\Delta(q^*-\mu_0\bar\high)$, positive iff $\mu_0\bar\high<q^*$. Under $\mu_0<q^*$: the \eqref{eq:M}-cap gives $q_1(\low)=1-(\bar\high-\bar\low)\phi_\low$, and the slope works out to $\high\,q^*(1-\bar\high)(1-\mu_0)$, strictly positive iff $\bar\high<1$. $\square$

\medskip
\noindent\textbf{Proof of Proposition~\ref{prop:negcorr}.} With $\bar\high<\bar\low$, $q(\mu_1)$ is strictly decreasing, $S_2$ discounts iff $\mu_1\ge\pi$, and feasibility aggregates to $\phi_\low\le\phi_\high/\tilde k$ on the regime $\mu_0<\pi$. The incentive constraints become
\[
A_\low=\Delta q_1(\low)-\Delta(\bar\low-\bar\high)\phi_\low\;\le\;u_\high-u_\low\;\le\;\Delta q_1(\high)-\Delta(\bar\low-\bar\high)\phi_\high=A_\high,
\]
feasible iff $A_\low\le A_\high$. Rent minimization gives $u_\low=0$, $u_\high=\max\{0,A_\low\}$ (when $A_\low<0\le A_\high$ both rents are zero; $A_\high<0$ is impossible at $q_1(\high)=1$ since $(\bar\low-\bar\high)\phi_\high\le1$). As before $q_1(\high)=1$. The optimal $\phi_\high$ is settled by case below; in both cases $\phi_\low$ sits at its persuasion bound $\phi_\high/\tilde k$, since its revenue coefficient $(1-\mu_0)[\low(\bar\low-\bar\high)+\Delta\bar\low]$ (case (i), through the zero-rent cap on $q_1(\low)$) or $(1-\mu_0)\Delta\bar\low$ (case (ii)) is strictly positive.

\emph{Case (i): $\mu_0\ge q^*$.} The unconstrained coefficient on $q_1(\low)$ in revenue is $(1-\mu_0)\low>0$ \emph{as long as no rent is paid}; the zero-rent condition is $A_\low\le0$, i.e.\ $q_1(\low)\le(\bar\low-\bar\high)\phi_\low$. Raising $q_1(\low)$ beyond the cap triggers the rent $\mu_0\,\Delta$ per unit, which exceeds the revenue gain $(1-\mu_0)\low$ iff $\mu_0\high>\low$, i.e.\ iff $\mu_0>q^*$; at $\mu_0=q^*$ rent and gain are equal, the tie recorded in the statement. In case (i), $\phi_\high=1$ is costless---feasibility $A_\low\le A_\high$ is slack at $q_1(\low)$ below the zero-rent cap, and $\phi_\high$ relaxes the persuasion bound while weakly raising revenue. So $q_1(\low)=(\bar\low-\bar\high)/\tilde k$, both rents zero, and revenue under disclosure is
\[
R=\mu_0[\high+\Delta\bar\high]+(1-\mu_0)\Big[\low\,\frac{\bar\low-\bar\high}{\tilde k}+\Delta\,\frac{\bar\low}{\tilde k}\Big]
\;>\;\mu_0\high=R_{\mathrm{privacy}},
\]
strictly, since the added terms are nonnegative and the low-type terms $(1-\mu_0)[\low(\bar\low-\bar\high)+\Delta\bar\low]/\tilde k$ are strictly positive ($\bar\low>q^*>0$ on the regime): disclosure is strictly optimal everywhere on the regime. Given $(\phi_\high,\phi_\low)=(1,1/\tilde k)$, the discount event's aggregate posterior is $\pi$; each discount message carries posterior at least $\pi$ and they average $\pi$, so all equal $\pi$, and non-discount messages have zero likelihood under the high report, hence posterior $0$: the signal is the stated reveal-low rule, essentially uniquely.

\emph{Case (ii): $\mu_0<q^*$.} Now $(1-\mu_0)\low>\mu_0\Delta$, so serving the low type fully is worth the rent: fixing any $\phi_\high=s\in[0,1]$ and $\phi_\low=s/\tilde k$, the candidate optima are the rent-paying corner---$q_1(\low)$ at its feasibility cap $1-(\bar\low-\bar\high)(1-1/\tilde k)s$, since $A_\low\le A_\high$ requires $q_1(\low)\le1-(\bar\low-\bar\high)(\phi_\high-\phi_\low)$---and the zero-rent corner $q_1(\low)=(\bar\low-\bar\high)s/\tilde k$ of case (i). Direct comparison at any $s$: rent-paying minus zero-rent equals $(\low-\mu_0\high)\,[1-(\bar\low-\bar\high)\,s]$, decreasing in $s$ and hence minimized at $s=1$, where it is $(\low-\mu_0\high)[1-(\bar\low-\bar\high)]\ge0$, strict unless $\bar\low-\bar\high=1$, where the two mechanisms coincide; a fortiori both dominate excluding the low type, the rent-paying corner at $s=1$ exceeding exclusion by $q_1(\low)(\low-\mu_0\high)+\mu_0\Delta(\bar\low-\bar\high)/\tilde k>0$. Along the rent-paying corner, revenue is linear in $s$ with slope exactly the difference of the two sides of \eqref{eq:negB}, so the optimum is $s=1$ (disclosure) iff \eqref{eq:negB} holds and $s=0$ (privacy) otherwise. The disclosure mechanism's rent is $u_\high=\Delta[q_1(\low)-(\bar\low-\bar\high)/\tilde k]=\Delta[1-(\bar\low-\bar\high)]$, and against privacy ($R_{\mathrm{privacy}}=\low$, the pooling price, with rent $\Delta$): disclosure strictly optimal iff \eqref{eq:negB}. The inequality fails on an open subset of the regime (e.g.\ $\bar\high$ near $q^*$, $\tilde k$ large), and holds on another (e.g.\ $\mu_0$ near $q^*$), so both outcomes have positive measure. $\square$

\medskip
\noindent\textbf{Proof of Corollary~\ref{cor:welfare}.} IR$_\low$ binds in every case, so the low type is indifferent. Positive correlation: under privacy the high type's rent is $\Delta q_1^{\mathrm{priv}}(\low)$, equal to $0$ when $\mu_0\ge q^*$ (the low type is excluded) and $\Delta$ when $\mu_0<q^*$ (pooling at price $\low$). Under disclosure his rent is $u_\high=\Delta[q_1(\low)+(\bar\high-\bar\low)\phi_\low]$: case (i), $\Delta(\bar\high-\bar\low)>0$; case (ii), $\Delta[1-(\bar\high-\bar\low)g_\high+(\bar\high-\bar\low)]=\Delta[1+(\bar\high-\bar\low)(1-g_\high)]>\Delta$. He strictly gains in both. Negative correlation: case (i), both rents are zero under disclosure and under privacy ($q_1^{\mathrm{priv}}(\low)=0$ at $\mu_0\ge q^*$): indifferent. Case (ii), privacy rent $\Delta$ versus disclosure rent $\Delta[1-(\bar\low-\bar\high)]$: he loses exactly $\Delta(\bar\low-\bar\high)$. $\square$
\section{Proofs for Section 5}\label{app:engine}

\noindent\textbf{Proof of Lemma~\ref{lem:zred}.} Under Condition~\ref{cond:le}, the payoff of type $\theta$ from the bundle of report $b$ is $\theta z(b)+\big[(\high-\low)\beta_0\phi(b)-t(b)\big]$: preferences over reports are quasi-linear in the adjusted transfer $\tau(b)=t(b)-(\high-\low)\beta_0\phi(b)$ and linear in type with allocation $z(b)\in[0,1+c]$. The menu $\{(z(b),\tau(b))\}$ is therefore a one-dimensional screening menu, and the standard characterization \citep{Myerson1981,MilgromSegal2002} applies verbatim: incentive compatibility holds if and only if $z$ is nondecreasing in the report and $\tau$ satisfies the envelope formula $U(\theta)=U(0)+\int_0^\theta z(a)\,da$ with $U(\theta)=\theta z(\theta)-\tau(\theta)$. Translating back through $\tau$, the transfer formula is $t(\theta)=\theta z(\theta)+(\high-\low)\beta_0\phi(\theta)-U(\theta)$. $\square$

\begin{lemma}[Obedience aggregation]\label{lem:obed}
On the regime $\E[\beta]>q^*$, a discount-probability profile $\phi:[0,1]\to[0,1]$ is induced by some signal whose every discount message is obeyed by $S_2$ if and only if $\E[(\beta(\theta)-q^*)\,\phi(\theta)]\le0$.
\end{lemma}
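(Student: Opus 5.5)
The proof mirrors the binary feasibility aggregation recorded after Lemma~\ref{lem:cont} in the appendix, with the two-point prior replaced by $F$. A signal is a Markov kernel $g(\cdot\mid\theta)$. A discount message $s$ is obeyed exactly when $\E[\beta(\theta)\mid s]\le q^*$, which is equivalent to $\int(\beta(\theta)-q^*)\,g(s\mid\theta)\,dF(\theta)\le0$ (multiply through by the message's unconditional probability, which is positive on a non-null set). The argument therefore has two directions: aggregate these message-level inequalities for necessity, and pool all discount weight into one message for sufficiency.

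\emph{Necessity.} Let $\mathcal S_D$ be the set of messages on which $S_2$ discounts, so that $\phi(\theta)=g(\mathcal S_D\mid\theta)$. Integrate the obedience inequality over $s\in\mathcal S_D$ and apply Fubini to exchange the order of integration. This gives $\E[(\beta-q^*)\phi]\le0$. Null message sets contribute nothing, so the argument applies to general measurable message spaces and not only to finite ones. Richer message structures cannot enlarge the feasible set, because the bound only adds up message-level constraints.

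\emph{Sufficiency.} Given $\phi$ with $\E[(\beta-q^*)\phi]\le0$, I would use a two-message signal. Type $\theta$ is sent to message $d$ with probability $\phi(\theta)$ and to message $n$ otherwise.
\begin{itemize}
\item If $\Prob(d)=\E[\phi]>0$, the posterior $\beta$-mean at $d$ is $\E[\beta\phi]/\E[\phi]\le q^*$ by hypothesis. $S_2$ therefore discounts at $d$, and the tie at $q^*$ is resolved toward the discount by the paper's convention.
\item At $n$ the posterior $\beta$-mean is $\E[\beta(1-\phi)]/\E[1-\phi]$. I need this to exceed $q^*$, so that $S_2$ does not also discount there and the induced discount probability is exactly $\phi$. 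The prior mean $\E[\beta]$ is the probability-weighted average of the two message means. Since $\E[\beta]>q^*$ and the $d$-mean is at most $q^*$, the $n$-mean must be strictly above $q^*$.
\item Message $n$ is non-null. If $\phi=1$ almost everywhere, the constraint would read $\E[\beta]\le q^*$, contradicting the regime assumption $\E[\beta]>q^*$. The same strictness argument covers the case $\E[\phi]=0$, where the signal is uninformative and the prior is not discounted.
\end{itemize}

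\emph{Main obstacle.} The only delicate point is measure-theoretic bookkeeping in the necessity step. Posteriors with a continuum of messages should be defined by disintegration, and the set $\mathcal S_D$ on which $S_2$ discounts must be measurable. I would handle this by working with the joint measure $g(ds\mid\theta)\,dF(\theta)$ on $[0,1]\times\mathcal S$. The obedience condition then becomes the statement that the signed measure $\nu(A)=\int_A\int(\beta-q^*)\,g(ds\mid\theta)\,dF$ is nonpositive on measurable subsets of $\mathcal S_D$, and its total mass on $\mathcal S_D$ is exactly $\E[(\beta-q^*)\phi]$. No property of $\beta$ beyond integrability is used, so linear exposure is not needed for this lemma.
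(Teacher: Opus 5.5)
Your proposal is correct and follows the paper's proof essentially step for step: necessity by summing the message-level obedience inequalities $\int(\beta-q^*)\,g(s\mid\theta)\,dF\le0$ over the discount messages, and sufficiency by pooling all discount weight into a single message $d$. In the sufficiency step you, like the paper, use $\E[\beta]>q^*$ both to put the complementary message's $\beta$-mean strictly above $q^*$ and to rule out $\phi\equiv1$, while your disintegration and signed-measure remark simply makes rigorous, for uncountable message spaces, the paper's ``summing over discount messages.''
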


\noindent\textbf{Proof.} Necessity: each discount message $s$ satisfies $\E[\beta\mid s]\le q^*$, i.e.\ $\int(\beta-q^*)g(s\mid\theta)\,dF\le0$; summing over discount messages yields the aggregate inequality. Sufficiency: if $\phi=0$ a.e.\ the uninformative signal works; otherwise pool the entire discount weight into a single message $d$ with $g(d\mid\theta)=\phi(\theta)$; its posterior $\beta$-mean is at most $q^*$ by the aggregate inequality, so $S_2$ discounts at $d$ (with equality, by the tie-breaking convention). The complementary message $n$ has $\beta$-mean $\big(\E[\beta]-\E[\beta\phi]\big)/\E[1-\phi]>q^*$: its numerator exceeds $q^*\E[1-\phi]$ because $\E[(\beta-q^*)(1-\phi)]=\E[\beta-q^*]-\E[(\beta-q^*)\phi]\ge\E[\beta-q^*]>0$; and $\E[1-\phi]>0$, since $\phi\equiv1$ would give $\E[(\beta-q^*)\phi]=\E[\beta-q^*]>0$, contradicting the aggregate inequality. So $S_2$ posts $\high$ at $n$, and the two-message signal implements $\phi$. $\square$

\medskip
\noindent\textbf{Proof of the revenue representation \eqref{eq:revrep}.} Let $U(\theta)$ be the equilibrium payoff. By Lemma~\ref{lem:zred}, $U'(\theta)=z(\theta)=x(\theta)+(\high-\low)\beta'(\theta)\phi(\theta)$ a.e.\ (using $\beta'=\beta_1$), with $U(0)=0$ at the optimum (a positive intercept is a uniform transfer loss). Truthful payoffs give $t(\theta)=x(\theta)\theta+(\high-\low)\beta(\theta)\phi(\theta)-U(\theta)$, so $\E[t]=\E[x\theta+(\high-\low)\beta\phi]-\E[U]$. By Fubini, $\E[U]=\int_0^1U'(a)(1-F(a))\,da$. Substituting and collecting terms under the integral yields $\E[t]=\E[x\,\psi]+(\high-\low)\E[\phi\,v]$ with $\psi$, $v$ as in \eqref{eq:revrep}. $\square$

\medskip
\noindent\textbf{Proof of Theorem~\ref{thm:engine}.} Throughout, $\Delta=\high-\low$, $c=\Delta\beta_1$, and the regime $\beta(0)<q^*<\E[\beta]$ holds. By Lemma~\ref{lem:zred}, the representation, and Lemma~\ref{lem:obed}, the second best is exactly the program \eqref{eq:program}. The proof is in four claims; it is self-contained, and the replication certificate accompanying the paper independently corroborates it, solving the full direct-incentive linear program (all pairwise constraints) on fine type grids and reproducing \eqref{eq:program}'s value to machine precision alongside every numerical magnitude below.

\emph{Claim 1 (Lagrangian sufficiency).} Let $\lambda^*\ge0$, let $(x^\circ,\phi^\circ)$ be feasible for \eqref{eq:program} with $\E[(\beta-q^*)\phi^\circ]=0$ if $\lambda^*>0$, and suppose $(x^\circ,\phi^\circ)$ maximizes the Lagrangian $\E[x\psi+\phi\,w]$, $w=\Delta v-\lambda^*(\beta-q^*)$, over all $(x,\phi)\in[0,1]^2$ with $z=x+c\phi$ nondecreasing. Then $(x^\circ,\phi^\circ)$ solves \eqref{eq:program}. Indeed, for any feasible $(x,\phi)$, its Lagrangian value is at least its objective value (obedience $\le0$, $\lambda^*\ge0$), is at most the Lagrangian value of $(x^\circ,\phi^\circ)$, which equals its objective value by complementary slackness.

\emph{Claim 2 (inner split and pointwise value).} For fixed $z$, maximizing $x\psi+\phi w$ subject to $x+c\phi=z$, $x,\phi\in[0,1]$ is linear in $\phi$ over the interval $[\max\{0,(z-1)/c\},\min\{1,z/c\}]$ with slope $s_{\lambda^*}=w-c\psi=\Delta\beta_0-\lambda^*(\beta-q^*)$ by the constant-drift identity. So $\phi$ sits at the upper bound where $s_{\lambda^*}>0$ and the lower where $s_{\lambda^*}<0$, and the resulting pointwise value $V(z;\theta)$ is concave and piecewise linear in $z$, with slopes (writing $\tilde\gamma=\psi+s_{\lambda^*}/c$): for $s_{\lambda^*}>0$, $\tilde\gamma$ on $[0,c]$ then $\psi$ on $[c,1+c]$; for $s_{\lambda^*}<0$, $\psi$ on $[0,1]$ then $\tilde\gamma$ on $[1,1+c]$.

\emph{Claim 3 (construction and verification, $F$ uniform, $c\le1$).} Under uniform $F$, $\psi(\theta)=2\theta-1$ and $\tilde\gamma'=2-\lambda^*/\Delta$. First, the plateau condition collapses: substituting $s_{\lambda}(\theta_\lambda)=0$ and integrating, $\int_{\theta^*}^{1}\psi\,d\theta+\tfrac1c\int_{\theta_{\lambda}}^{1}s_{\lambda}\,d\theta=\tfrac14-\lambda(1-\theta_\lambda)^2/(2\Delta)$, so the condition reads $(1-\theta_\lambda)^2=\Delta/(2\lambda)$; in particular $\theta_{\lambda^*}>\theta^*=\tfrac12$ if and only if $\lambda^*>2\Delta$, so the conflict case carries $\lambda^*>2\Delta$ and $\tilde\gamma$ strictly decreasing. Active disclosure requires $\tilde\gamma(0)>0$ (else $\phi\equiv0$), and the case condition $\theta^*<\theta_{\lambda^*}$ is equivalent to $\tilde\gamma(\theta^*)=s_{\lambda^*}(\theta^*)/c>0$, so $\tilde\gamma>0$ on $[0,\theta^*]$. The pointwise maximizer is then $z^*=c$ on $[0,\theta^*)$ ($\psi<0<\tilde\gamma$), $z^*=1+c$ on $[\theta^*,\hat\theta_\gamma)$ where $\tilde\gamma$ crosses zero, and $z^*=1$ on $[\hat\theta_\gamma,1]$: one decreasing step. The candidate optimum irons it: $z^\circ=c$ on $[0,\theta^*)$ and $z^\circ=\bar z$ on $[\theta^*,1]$, with $\phi^\circ$ read off Claim 2 ($\phi^\circ=1$ below $\theta_{\lambda^*}$ since $\bar z\ge1\ge c$; $\phi^\circ=(\bar z-1)/c=\bar\phi$ above) and $x^\circ=z^\circ-c\phi^\circ$ as displayed in the theorem. Choose the supergradient selection $\gamma(\theta)=0$ on $[0,\theta^*)$ (admissible at the kink $z=c$, where the supergradient is $[\psi,\tilde\gamma]\ni0$), $\gamma=\psi$ on $[\theta^*,\theta_{\lambda^*}]$ and $\gamma=\tilde\gamma$ on $(\theta_{\lambda^*},1]$ (both forced, $V$ being differentiable at $\bar z\in(1,1+c)$ on the plateau), and let $(\lambda^*,\bar\phi)$ solve the system of the theorem; for $\bar\phi>0$ both selections on the plateau are forced, $V$ being differentiable at $\bar z\in(1,1+c)$, while the corner $\bar\phi=0$ uses the kink selection of the corner argument below. Then with $M(\hat\theta)=\int_{\hat\theta}^1\gamma\,dF$: $M(\theta^*)=M(1)=0$; $\gamma$ rises from $0$ on $[\theta^*,\theta_{\lambda^*}]$, is continuous at $\theta_{\lambda^*}$, and is strictly decreasing after, crossing zero once at $\hat\theta_\gamma$; hence $M\le0$ on $[\theta^*,1]$, with strict inequality on the open interior. For any feasible nondecreasing $z$, concavity gives
\[
\int\!\big[V(z;\theta)-V(z^\circ;\theta)\big]dF\;\le\;\int\!\gamma\,(z-z^\circ)\,dF
=\int_{\theta^*}^1(z-\bar z)\,d(-M)
=\int_{\theta^*}^1 M(\theta)\,dz(\theta)\;\le\;0,
\]
the second equality by parts using $M(\theta^*)=M(1)=0$, the last because $M\le0$ and $z$ is nondecreasing. Claim 1 then delivers optimality. Existence and uniqueness of $\lambda^*$: along the plateau condition $\theta_\lambda=1-\sqrt{\Delta/(2\lambda)}$ is strictly increasing in $\lambda$, while $\theta_\lambda=\beta^{-1}(q^*+\Delta\beta_0/\lambda)$ is strictly decreasing, so the two curves cross at most once, and they cross on $(2\Delta,\infty)$ exactly when $\beta^{-1}(q^*+\beta_0/2)>\theta^*$---the conflict case. Binding obedience then determines $\bar\phi$ as minus the ratio of the two integrals in the theorem; $\bar\phi<1$ is the regime $\E[\beta]>q^*$, and $\bar\phi\ge0$ is exactly the self-obedience hypothesis $\int_0^{\theta_{\lambda^*}}(\beta-q^*)\,dF\le0$. When self-obedience fails at the plateau root, set $\bar\phi=0$ and let binding obedience alone pin the pool's end, $T:=\theta_{\lambda^*}=2(q^*-\beta_0)/\beta_1$, with the multiplier $\lambda^\circ=\Delta\beta_0/(q^*-\beta_0)$ placing the crossing of $s_\lambda$ exactly at $T$ (note $\beta(T)=2q^*-\beta_0>q^*$ on the regime, so $\lambda^\circ>0$). On $(\theta^*,T]$ the plateau bundle sits at $\bar z=1$ from above, where $V$ is differentiable with slope $\psi$, so $\gamma=\psi$ is forced; at $\theta=T$ the bundle reaches the kink $\bar z=1$, and on $(T,1]$ the supergradient is the interval $[\tilde\gamma,\psi]$. Choose $\gamma=\psi+\bar t\,s_{\lambda^\circ}/c$ there, the constant $\bar t=\Delta/\big(2\lambda^\circ(1-T)^2\big)$ placing the kink correction so that $\int_{\theta^*}^1\gamma\,dF=0$ in closed form---the plateau condition holding as an inequality at $(\lambda^\circ,T)$ is exactly $\lambda^\circ(1-T)^2\ge\Delta/2$, i.e.\ $\bar t\in(0,1]$, so the selection is admissible. Single crossing is forced: $\gamma$ vanishes at $\theta^*$, equals $\psi>0$ on $(\theta^*,T]$, and is linear on $(T,1]$; were it nonnegative throughout, its integral could not vanish, so it crosses zero exactly once and $M\le0$ follows as before. On $[0,\theta^*)$ the selection $\gamma=0$ remains admissible because $\tilde\gamma$ is linear there with both endpoint values nonnegative---$\tilde\gamma(\theta^*)=(\lambda^\circ/\Delta)(T-\theta^*)\ge0$ by the guard, and $\tilde\gamma(0)=\psi(0)+s_{\lambda^\circ}(0)/c=-1+2\beta_0/\beta_1>0$: the corner together with the guard forces $\beta_1<2\beta_0$, since failed self-obedience means the plateau root exceeds $T$, whence $(1-t_{\mathrm{root}})^2=\beta_1(t_{\mathrm{root}}-T/2)/(2\beta_0)>\beta_1T/(4\beta_0)$ while $(1-t_{\mathrm{root}})^2<(1-T)^2$, and $4(1-T)^2/T\le2$ for $T\ge\theta^*=\tfrac12$. (When $\theta_{\lambda^*}\le\theta^*$ instead, the same construction irons a decreasing step at the bottom and yields a single partial pool there; none of our applications encounter this case.)

\emph{Claim 4 (uniqueness and the binary case).} On $[0,\theta^*)$ the slopes of $V$ are strict ($\tilde\gamma>0$ below $z=c$, $\psi<0$ above), so $z=c$ a.e.\ is forced; on $(\theta^*,1)$, $M<0$ forces $dz=0$, so $z$ is a.e.\ constant, and the two pinning equations have a unique solution under the strictly increasing virtual value, forcing $z=\bar z$; the inner split is strict off $\theta_{\lambda^*}$. The optimum is therefore essentially unique. For part (iii): with two-point support $\{0,1\}$ and masses $(1-\mu_0,\mu_0)$, monotonicity of $z$ is exactly \eqref{eq:M} after the affine reparametrization of valuations, and the program \eqref{eq:program} coincides with the reduced program of Theorem~\ref{thm:binary}'s proof, so their optima coincide---at $\mu_0<q^*$, the supply distortion $q_1(\low)=1-(\bar\high-\bar\low)g_\high$; the certificate corroborates the agreement exactly. $\square$

\medskip
\noindent\textbf{Proof of Proposition~\ref{prop:converse}.} Let $(x,\phi)$ be feasible and deterministic, $\phi\in\{0,1\}$ a.e., and write $z=x+c\phi$. Four exhaustive cases. (a) If $\phi$ violates the inner split of Claim 2 on positive measure, the Lagrangian falls strictly short of its maximum, since $s_{\lambda^*}\neq0$ a.e. (b) If $z$ places increase-mass on the open plateau interior, where $M<0$ strictly, the verification chain of Claim 3 is strict. (c) If $z$ differs from $c$ on positive measure below $\theta^*$, the strict slopes of $V$ there ($\tilde\gamma>0$ below $z=c$, $\psi<0$ above) make the chain strictly slack. (d) Otherwise $z=c$ a.e.\ below $\theta^*$ and $z$ is constant on $(\theta^*,1]$, so by the inner split $\phi=1$ below $\theta_{\lambda^*}$ and $\phi$ is a constant $\bar\phi'\in\{0,1\}$ above. If $\bar\phi'=1$ then $\phi\equiv1$, violating obedience on the regime. If $\bar\phi'=0$, obedience is strictly slack by $\bar\phi>0$---the optimum's binding obedience minus the candidate's is $\bar\phi\int_{\theta_{\lambda^*}}^1(\beta-q^*)\,dF>0$---so the candidate's objective falls short of the Lagrangian bound by $\lambda^*$ times that slack, which is strictly positive. In every case the deterministic rule attains strictly less than the optimum. $\square$
\section{Proof of Theorem~\ref{thm:incidence}}\label{app:incidence}

\noindent\textbf{Proof of Theorem~\ref{thm:incidence}.} \emph{(i)} Under $g^*$ the source posterior is $1$ with probability $r=(\mu_0-\pi)/(1-\pi)$ (Bayes weight of the reveal message) and $\pi$ otherwise, so $q_i\in\{q_i^R,q_i^P\}$ with the stated values and means $q_i^0$. For $q_i^0\le q^*$: $D(q_i^0)=0$ and $W_i=-[r\,D(q_i^R)+(1-r)\,D(q_i^P)]$; $q_i^P\le q_i^0\le q^*$ gives $D(q_i^P)=0$, leaving $W_i=-r\low(1-q_i^R)\ind\{q_i^R>q^*\}$. For $q_i^0>q^*$: write $\tilde D(q)=\low(1-q)$ for the affine extension of $D$ above $q^*$; Bayes plausibility gives $D(q_i^0)=\tilde D(q_i^0)=r\tilde D(q_i^R)+(1-r)\tilde D(q_i^P)$, and since $q_i^R\ge q_i^0>q^*$ implies $D(q_i^R)=\tilde D(q_i^R)$,
\[
W_i=D(q_i^0)-r D(q_i^R)-(1-r)D(q_i^P)=(1-r)\big[\tilde D(q_i^P)-D(q_i^P)\big]=(1-r)\low(1-q_i^P)\ind\{q_i^P\le q^*\}.
\]
The crossing thresholds restate $q_i^R>q^*$ and $q_i^P\le q^*$. \emph{(ii)} On the harm side, for $\rho_i>\underline\rho_i$, $|W_i|=r\low(1-q_i^0-\rho_i(\bar\high-\bar\low)(1-\mu_0))$ is strictly decreasing in $\rho_i$, zero at $\rho_i\le\underline\rho_i$, with supremum $r\low(1-q^*)$ as $\rho_i\downarrow\underline\rho_i$ (not attained: at the threshold exactly, $q_i^R=q^*$ is discounted by the tie-breaking convention and $W_i=0$). On the gain side, for $\rho_i\ge\bar\rho_i$, $W_i=(1-r)\low(1-q_i^0+\rho_i(\bar\high-\bar\low)(\mu_0-\pi))$ is strictly increasing, and at $\rho_i=\bar\rho_i$ exactly, $q_i^P=q^*$ is discounted and $W_i=(1-r)\low(1-q^*)$ is attained. \emph{(iii)} Crossing buyers below $q^*$ have $W_i<0$ whenever $q_i^R<1$, and crossing buyers above have $W_i>0$; with mass on both, surplus moves from the first group to the second; the aggregate $\int W_i\,dF$ can vanish while both integrals are nonzero. $\square$

\section{Supplemental Material}\label{app:pop}

\noindent Proofs of Propositions 4--12 are in the Supplemental Material, together with the replication certificates; every numerical value in the paper regenerates from the one-command certificate there.

\begin{spacing}{1.0}

\end{spacing}

\end{document}